\newtheorem{mytheo}{Theorem}
\newtheorem{mylemma}[mytheo]{Lemma}{\bfseries}{\itshape}
{\itshape}{\itshape}
\newenvironment{myproof}{\begin{proof}}{\end{proof}}
\newtheorem{thm}{Theorem}[section]
\newtheorem{theorem}[thm]{Theorem}
\newtheorem{corollary}[thm]{Corollary}
\newtheorem{lemma}[thm]{Lemma}
\newtheorem{proposition}[thm]{Proposition}
\theoremstyle{definition}
\newtheorem{remark}[thm]{Remark}
\newtheorem{example}[thm]{Example}
\newtheorem{definition}[thm]{Definition}
\numberwithin{equation}{section}
\def\pushright#1{{%        set up
   \parfillskip=0pt            % so \par doesn't push #1 to left
   \widowpenalty=10000         % so we dont break the page before #1
   \displaywidowpenalty=10000  % ditto
   \finalhyphendemerits=0      % TeXbook exercise 14.32
  %
  %                 horizontal
   \leavevmode                 % \nobreak means lines not pages
   \unskip                     % remove previous space or glue
   \nobreak                    % don't break lines
   \hfil                       % ragged right if we spill over
   \penalty50                  % discouragement to do so
   \hskip.2em                  % ensure some space
   \null                       % anchor following \hfill
   \hfill                      % push #1 to right
   {#1}                        % the end-of-proof mark (or whatever)
  %
  %                   vertical
   \par}}                      % build paragraph
\def\qEd{\rule{1ex}{1ex}}
\def\qed{\pushright{\qEd}
    \penalty-700 \par\addvspace{\medskipamount}}
\newcommand{\nc}{\newcommand}
\newcommand{\dynamic}[1]{\ensuremath{\mbox{Dyn}(#1)}\xspace}
\newcommand{\dynamicAlt}[1]{\ensuremath{\mbox{Dyn-alt}(#1)}\xspace}
\nc{\acclang}[1]{\mbox{ACC(#1)}\xspace}
\nc{\ops}{\mbox{$\Delta$}\xspace}
\nc{\op}{\mbox{op}\xspace}
\nc{\struc}{\text{STRUC}\xspace}
\nc{\arity}{\text{arity}\xspace}
\nc{\accept}{\ensuremath{\text{ACC}}\xspace}
\nc{\Aext}{\cA^{\text{Ext}}}
\nc{\Wext}{\mbox{$W^{\text{Ext}}$}}
\nc{\emptyws}{\mbox{$E_n$}\xspace}
\nc{\emptyas}{\ensuremath{E^{\text{Ext}}_n}\xspace}
\nc{\word}{\text{word}}
\nc{\reset}{\text{reset}}
\nc{\ins}{\text{ins}}
\nc{\del}{\text{del}}
\nc{\equal}{\text{EQUAL}\xspace}
\nc{\equaln}[1]{\mbox{$\equal_{#1}$}\xspace}
\nc{\midd}{\text{MIDDLE}\xspace}
\nc{\mincons}{\text{min}}
\nc{\maxcons}{\text{max}}
\nc{\ite}{\mbox{\textbf{ite}}}
\newcommand{\type}[2]{\ensuremath{\langle #1,#2 \rangle}}
\nc{\dynamicclass}[1]{\ensuremath{\text{Dyn#1}}\xspace}
\nc{\dynC}{\dynamicclass{C}\xspace}
\nc{\dynprop}{\text{DynPROP}\xspace}
\nc{\dynqf}{\text{DynQF}\xspace}
\nc{\dynQF}{\text{DynQF}\xspace}
\nc{\EFO}{\text{EFO}\xspace}
\nc{\dynFO}{\text{DynFO}\xspace}
\nc{\dyntc}{\ensuremath{\text{DynTC}^0}\xspace}
\nc{\dynp}{\text{DynP}\xspace}
\nc{\FO}{\text{FO}\xspace}
\nc{\qf}{\text{QF-FO}\xspace}
\nc{\dynL}{\text{DYN-L}\xspace}
\nc{\dynstandL}{\text{DYNSTAND-L}\xspace}
\nc{\languageclass}[1]{\mbox{$#1_{\text{Lang}}$}\xspace}
\nc{\dynFOlang}{\languageclass{\dynFO}}
\nc{\dynproplang}{\languageclass{\dynprop}}
\nc{\dynqflang}{\languageclass{\dynqf}}
\nc{\SetSucc}{\mbox{$\mathcal{SUCC}$\xspace}}
\nc{\bifunctions}[2]{\mbox{$\mbox{#1} + \mbox{#2}$}\xspace}
\nc{\bifset}[2]{\bifunctions{#1}{#2}\xspace}
\nc{\bifexpl}[2]{\bifunctions{#1}{$\{\text{#2}\}$}\xspace}
\nc{\bif}[1]{\bifunctions{#1}{\mbox{BIF}}\xspace}
\nc{\precomp}[1]{\mbox{$#1 + \mbox{PRC}$}\xspace}
\nc{\succM}{\text{\scshape succ}\xspace}
\nc{\preM}{\text{\scshape pre}\xspace}
\nc{\minM}{\text{\scshape min}\xspace}
\nc{\maxM}{\text{\scshape max}\xspace}
\nc{\repr}[1]{\ensuremath{\langle #1\rangle}}
\nc{\plusfun}{\text{\scshape plus}\xspace}
\nc{\minfun}{\text{\scshape minus}\xspace}
\nc{\Op}{\ensuremath{\text{Op}}}
\nc{\Cl}{\ensuremath{\text{Cl}}}
\nc{\fr}{\ensuremath{f^\rightarrow}}
\nc{\fl}{\ensuremath{f^\leftarrow}}
\nc{\gr}{\ensuremath{g^\rightarrow}}
\nc{\gl}{\ensuremath{g^\leftarrow}}
\nc{\mydef}{\ensuremath{=_{\textrm{def}}}}
\nc{\voc}{\ensuremath{\gamma}}
\nc{\lchild}{\text{L-child}\xspace}
\nc{\rchild}{\text{R-child}\xspace}
\nc{\lchildfun}{\text{l-child}\xspace}
\nc{\rchildfun}{\text{r-child}\xspace}
\nc{\anc}{\text{Anc}}
\nc{\ancself}{\text{Anc-self}}
\nc{\rootnode}{\text{root}}
\nc{\leaf}{\text{Leaf}}
\nc{\lca}{\text{lca}}
\nc{\nodes}{\text{nodes}}
\nc{\eps}{\text{Epsilon}}
\nc{\con}{\text{Con}}
\nc{\parent}{\text{parent}}
\nc{\ind}{\text{ind}\xspace}
\nc{\altreach}{\text{ALT-REACH}\xspace}
\nc{\reachable}{\mbox{Reach}\xspace}
\nc{\bdaltreach}[1]{\mbox{$\altreach_{\text{depth}\leq #1}$}\xspace}
\nc{\neighborhood}[1]{\mbox{$\mathcal{N}_{#1}$}\xspace}
\nc{\actdom}{\text{Act}\xspace}
\nc{\into}[2]{\mbox{${]#1, #2[}$}}
\nc{\intc}[2]{\mbox{${[#1, #2]}$}}
\nc{\shuf}{\&}
\nc{\shb}{\mathbin\&}
\nc{\shuffle}{\shuf}
\nc{\mycount}{\#}
\nc{\re}{\text{RE}\xspace}
\nc{\rein}{\text{RE($\cap$)}\xspace}
\nc{\recom}{\text{RE($\neg$)}\xspace}
\nc{\reincom}{\text{RE($\cap$,$\neg$)}\xspace}
\nc{\res}{\text{REs}\xspace}
\nc{\rec}{\text{RE($\mycount$)}\xspace}
\nc{\recs}{\text{RE($\mycount$)}s\xspace}
\nc{\rei}{\text{RE($\shuffle$)}\xspace}
\nc{\reis}{\text{RE($\shuffle$)}s\xspace}
\nc{\reci}{\text{RE($\mycount,\shuf$)}\xspace}
\nc{\recis}{\text{RE($\mycount,\shuf$)s}\xspace}
\nc{\chare}{\text{CHARE}\xspace}
\nc{\charec}{\text{CHARE($\mycount$)}\xspace}
\nc{\lin}{\text{RE(linear)}\xspace}
\nc{\strongLin}{\text{RE(strong-linear)}\xspace}
\nc{\disjoint}{\ensuremath \text{$^{\text{disj}}$}\xspace}
\nc{\strongLinDis}{\ensuremath \text{\strongLin$^{\text{disj}}$} \xspace}
\nc{\sym}{\text{Sym}\xspace}
\nc{\configuration}{\text{configuration}\xspace}
\nc{\configurations}{\text{configurations}\xspace}
\nc{\nta}{\text{NTA}\xspace}
\nc{\nfa}{\text{NFA}\xspace}
\nc{\nfas}{\text{NFAs}\xspace}
\nc{\nfaci}{\text{NFA$(\mycount,\shuffle)$}\xspace}
\nc{\nfacis}{\text{NFA$(\mycount,\shuffle)$s}\xspace}
\nc{\nfac}{\text{NFA$(\mycount)$}\xspace}
\nc{\nfacs}{\text{NFA$(\mycount)$s}\xspace}
\nc{\nfai}{\text{NFA$(\shuffle)$}\xspace}
\nc{\nfais}{\text{NFA$(\shuffle)$s}\xspace}
\nc{\patree}{\ensuremath \text{$(\re,R)_{\exists}$}\xspace}
\nc{\patreu}{\ensuremath \text{$(\re,R)_{\forall}$}\xspace}
\nc{\patlineaire}{\ensuremath \text{$(lineair,R)_{\exists}$}\xspace}
\nc{\patlineairu}{\ensuremath \text{$(lineair,R)_{\forall}$}\xspace}
\nc{\patstronge}{\ensuremath \text{$(stronglineair(\cap = \emptyset),R)_{\exists}$}\xspace}
\nc{\patstrongu}{\ensuremath \text{$(stronglineair(\cap = \emptyset),R)_{\forall}$}\xspace}
\nc{\nat}{\Bbb{N}}
\nc{\Dom}{\text{Dom}}
\nc{\lab}{\text{lab}}
\nc{\depth}{\text{depth}}
\nc{\ac}{\ensuremath{\text{AC}^0}\xspace} 
\nc{\logspace}{{\sc logspace}\xspace} 
\nc{\nlogspace}{{\sc nlogspace}\xspace} 
\nc{\logcfl}{\text{LOGCFL}\xspace} 
\nc{\ptime}{\text{PTIME}\xspace} 
\nc{\np}{{\sc np}\xspace} 
\nc{\conp} {\textrm{co}\textsc{np}\xspace}
\nc{\pspace}{{\sc pspace}\xspace} 
\nc{\exptime}{{\sc exptime}\xspace}
\nc{\expspace}{{\sc expspace}\xspace}
\nc{\nexpspace}{{\sc nexpspace}\xspace}
\nc{\nexptime}{{\sc nexptime}\xspace}
\nc{\blogspace}{{\scriptsize {\bf LOGSPACE}}}
\nc{\bnlogspace}{{\scriptsize {\bf  NLOGSPACE}}}
\nc{\bptime}{{\scriptsize {\bf PTIME}}}
\nc{\bnp}{{\scriptsize {\bf NP}}}
\nc{\bconp}{{\scriptsize {\bf coNP}}}
\nc{\bpspace}{{\scriptsize {\bf PSPACE}}}
\nc{\bexptime}{{\scriptsize {\bf EXPTIME}}}
\nc{\bnexptime}{{\scriptsize {\bf NEXPTIME}}}
\nc{\bexpspace}{{\scriptsize {\bf EXPSPACE}}}
\nc{\expo}{\text{exp}\xspace}
\nc{\cO}{{\mathcal O}}
\nc{\cB}{{\mathcal B}}
\nc{\mcC}{{\mathcal C}}
\nc{\cR}{{\mathcal R}}
\nc{\cS}{{\mathcal S}}
\nc{\cF}{{\mathcal F}}
\nc{\cT}{{\mathcal T}}
\nc{\cM}{{\mathcal M}}
\nc{\cA}{{\mathcal A}}
\nc{\subs}{\subseteq}
\renewcommand{\epsilon}{\varepsilon}
\nc{\first}{\text{first}\xspace}
\nc{\last}{\text{last}\xspace}
\nc{\follow}{\text{Follow}\xspace}
\nc{\nfirst}{\text{Not-First}\xspace}
\nc{\nfollow}{\text{Not-Follow}\xspace}
\nc{\uab}{\text{RE(1UA)}}
\nc{\uabc}{\text{RE($\mycount$)-UAB}}
\nc{\uabw}{\ensuremath \text{UAB}_W}
\nc{\uabwc}{\ensuremath \text{UAB}^{\#}_W\xspace}
\nc{\uabs}{\ensuremath \text{UAB}_S}
\nc{\uabsc}{\ensuremath \text{UAB}^{\#}_S\xspace}
\nc{\satisfiability}{{\sc satisfiability}\xspace}
\nc{\inclusion}{{\sc inclusion}\xspace}
\nc{\intersection}{{\sc intersection}\xspace}
\nc{\equivalence}{{\sc equivalence}\xspace}
\nc{\emptiness}{{\sc emptiness}\xspace}
\nc{\membership}{{\sc membership}\xspace}
\nc{\simplification}{{\sc simplification}\xspace}
\nc{\reachability}{{\sc reachability}\xspace}
\nc{\onerun}{{\sc one run}\xspace}
\nc{\partition}{{\sc partition}\xspace}
\nc{\universality}{{\sc universality}\xspace}
\nc{\edtd}{\text{EDTD}\xspace}
\nc{\edtds}{\text{EDTDs}\xspace}
\nc{\edtdre}{\text{\edtd}\xspace}
\nc{\edtdc}{\text{\edtd}($\mycount$)\xspace}
\nc{\edtdi}{\text{\edtd}($\shuffle$)\xspace}
\nc{\edtdci}{\text{\edtd}($\mycount$,$\shuffle$)\xspace}
\nc{\dtd}{\text{DTD}\xspace}
\nc{\dtdc}{\text{\dtd}($\mycount$)\xspace}
\nc{\dtdi}{\text{\dtd}($\shuffle$)\xspace}
\nc{\dtdci}{\text{\dtd}($\mycount$,$\shuffle$)\xspace}
\nc{\suf}{\text{Suffix}\xspace}
\nc{\prefix}{\text{Prefix}\xspace}
\nc{\enc}{\text{enc}\xspace}
\nc{\construct}{\text{construct}\xspace}
\nc{\init}{\text{init}\xspace}
\newcommand{\subtree}{\text{subtree}\xspace}
\nc{\Rel}{\ensuremath{\text{Rel}}\xspace}
\nc{\Fun}{\ensuremath{\text{Fun}}\xspace}
\nc{\BIF}{\ensuremath{\text{BIF}}\xspace}
\newenvironment{myalg}[2]
{\begin{minipage}{#1}\centerline{#2}\begin{algorithmic}[1]}
{\end{algorithmic}\end{minipage}}
\nc{\mybput}[4]{\rput(#1,0.6){\rnode{#2}{$#3$}}\rput(#1,0){$#4$}}%\psline(#1,0.4)(#1,0.6)}
\nc{\myline}[3]{\ncline[linestyle=#3]{#1}{#2}}
\newcommand{\ignore}[1]{}
\begin{document}

\title{The Dynamic  Complexity of Formal Languages\footnote{This paper contains the material presented at the 26th International
  Symposium on Theoretical Aspects of Computer Science (STACS 2009) \cite{GeladeMS09},
  extended with proofs and examples which were omitted there due to
  lack of space.\smallskip}
}
\author{Wouter Gelade\thanks{Hasselt University and Transnational University of Limburg, School for Information Technology. Wouter Gelade is a Research Assistant of the Fund for
  Scientific Research - Flanders (Belgium).} \and Marcel Marquardt\thanks{Technische Universit\"at Dortmund} \and Thomas Schwentick${}^{\ddag}$}

\maketitle

\begin{abstract}
  \noindent The paper investigates the power of the dynamic complexity
  classes \dynFO, \dynqf and \dynprop over string languages. The
  latter two classes contain problems that can be maintained using
  quantifier-free first-order updates, with and without auxiliary
  functions, respectively. It is shown that the languages maintainable in
  \dynprop exactly are the regular languages, even when
  allowing arbitrary precomputation.  This enables lower
  bounds for \dynprop and separates \dynprop from \dynqf and
  \dynFO. Further, it is shown that any context-free language can be
  maintained in \dynFO and a number of specific context-free
  languages, for example all Dyck-languages, are maintainable in
  \dynqf. Furthermore, the dynamic complexity of regular
  tree languages is investigated and some results concerning
  arbitrary structures are obtained: there exist first-order definable properties which are not maintainable in \dynprop. On the other hand any
  existential first-order property can be maintained in \dynqf when
  allowing precomputation.
\end{abstract}

%% start the paper here:

%\input{introduction}

\vspace{-.5cm}
\section{Introduction}

Traditional complexity theory asks for the necessary effort to decide
whether a given input has a certain property, more precisely, whether
a given string is in a certain language. In contrast, {\em dynamic
  complexity} asks for the effort to maintain sufficient knowledge to
be able to decide whether the input object has the property {\em after
  a series of small changes of the object}. 
 The complexity theoretic investigation of the dynamic complexity of
algorithmic problems was initiated by Patnaik and Immerman
\cite{PatnaikI97}. They defined the class \dynFO of dynamic problems
where small changes in the input can be mastered by formulas of
(first-order) predicate logic (or, equivalently, poly-size circuits of
bounded depth, see \cite{Etessami98}). More precisely, the dynamic program makes use of an
auxiliary data structure and after each update (say, insertion or
deletion) the auxiliary data structure can be adapted by a first-order
formula.

Among others they showed that the dynamic complexity of the following
problems is in \dynFO: Reachability in undirected graphs,
minimum spanning forests, multiplication, regular languages, the Dyck
languages $D_n$. Subsequent work has yielded more problems in \dynFO \cite{Etessami98}
some of which are \logcfl-complete \cite{WeberS05} and even
\ptime-complete \cite{MiltersenSVT94,PatnaikI97} (even though the
latter are highly artificial).  Other work also considered stronger
classes (like Hesse's result that Reachability in arbitrary directed
graphs is in \dyntc \cite{Hesse03}), studied notions of completeness
for dynamic problems \cite{HesseI02}, and elaborated on the handling
of precomputations \cite{WeberS05}.

The choice of first-order logic as update language in
\cite{PatnaikI97} was presumably triggered by the hope that, in the
light of lower bounds for $\ac$, it would be
possible to prove that certain problems do {\em not} have \dynFO
dynamic complexity.  As it is easy to show that every \dynFO problem is in \ptime, a
non-trivial lower bound result would have to show that the dynamic complexity
of some \ptime problem is not in \dynFO. However, so far there are
no results of this kind. 

The inability to prove lower bounds has naturally led to the
consideration of subclasses of  \dynFO. Hesse studied
 problems with quantifier-free update formulas, yielding \dynprop if
 the maintained data structure is purely relational and \dynQF if
 functions are allowed as well \cite{Hesse03b,Hesse03a}. As further 
refinements the subclasses DynOR and DynProjections were studied. In
\cite{Hesse03b} separation results for subclasses of \dynprop were
  shown and the separation between \dynprop and \dynp was stated as an
  open problem.

The framework of \cite{PatnaikI97} allows more
general update operations and some of the results we mention depend on
the actual choice of operations. Nevertheless, most research has concentrated on
insertions and deletions as the only available
operations. Furthermore, most work considered underlying structures of
the following three kinds.
\begin{description}
\item[Graphs] Here, edges can be inserted or deleted. One of the main
  open questions is whether Reachability (aka transitive closure) can
  be maintained in \dynFO for directed, possibly cyclic graphs.
\item[Strings] Here, letters can be inserted or deleted. As mentioned
  above, \cite{PatnaikI97} showed that regular languages and Dyck
  languages can be maintained in \dynFO. Later on, Hesse proved that
  the dynamic complexity of regular languages is actually in \dynQF \cite{Hesse03a}.
\item[Databases] The dynamic complexity of database properties were
  studied in the slightly different framework of First-Order
  Incremental Evaluation Systems (FOIES) \cite{DongST95}. Many interesting results
  were shown including a separation between deterministic and
  nondeterministic systems~\cite{DongS97} and inexpressibility results for auxiliary
  relations of small arity ~\cite{DongLW93,DongS95c}.
Nevertheless, general lower bounds have not been shown yet. 
\end{description}
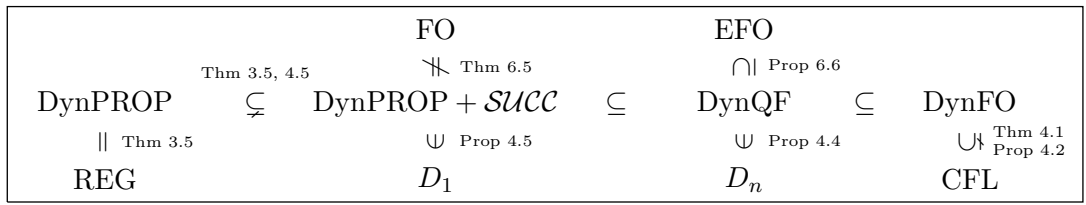
\begin{figure}
\begin{center}
  \begin{pspicture}(.7,0)(15,2)
    \psline[linewidth=.5pt]{-}(.7,-0.3)(15,-.3)(15,2.3)(.7,2.3)(.7,-.3)
    \rput    (2.0,0){REG}
    \rput{90}(2.0,0.5){$=$}
    \rput    (2.7,0.5){\tiny Thm~\ref{theo:dynprop-reg}}
%    \rput    (2.6,0.5){(\ref{theo:dynprop-reg})}
    \rput    (2.0,1.0){$\dynprop$}
    \rput    (4.0,1.0){$\subsetneq$}
    \rput    (4.0,1.4){\tiny Thm~\ref{theo:dynprop-reg}, \ref{theo:dyckone}}
%    \rput    (4.0,1.4){(\ref{theo:dynprop-reg},\ref{theo:dyckone})}
    \rput    (6.4,0){$D_1$}
    \rput{90}(6.4,0.5){$\in$}
    \rput    (7.2,0.5){\tiny Prop~\ref{theo:dyckone}}
%    \rput    (7.1,0.5){(\ref{theo:dyckone})}
    \rput    (6.4,1.0){$\dynprop + \SetSucc$}
    \rput{90}(6.4,1.5){$\neq$}
    \rput    (7.2,1.5){\tiny Thm~\ref{theo:fo-neq-dynprop}}
%    \rput    (7.1,1.5){(\ref{proposition:AltReachDynSucc})}
    \rput    (6.4,2.0){FO}
%    \rput    (8.8,1.4){(??)}
    \rput    (8.8,1.0){$\subseteq$}
%    \rput    (8.8,0.4){\bf\tiny $\left(\begin{minipage}{1.2cm}Proof of
%     \\Prop~\ref{proposition:AltReachDynSucc}\end{minipage}\right)$}
    \rput    (10.5 ,0){$D_n$}
    \rput{90}(10.5 ,0.5){$\in$}
    \rput    (11.3,0.5){\tiny Prop~\ref{theo:dyckn}}
%    \rput    (10.7,0.5){(\ref{theo:dyckn})}
    \rput    (10.5,1.0){$\dynqf$}
    \rput{90}(10.5 ,1.5){$\supseteq$}
    \rput    (11.3,1.5){\tiny Prop~\ref{theorem:EFOdynprop}}
%    \rput    (10.7,1.5){(\ref{theorem:EFOdynprop})}
    \rput    (10.5 ,2.0){\EFO}
    \rput    (12.1,1.0){$\subseteq$}
    \rput    (13.5 ,0){CFL}
    \rput{90}(13.5 ,0.5){$\subsetneq$}
    \rput    (14.3,0.5){\begin{minipage}{1cm}\tiny Thm~\ref{theo:cflinfo}\\Prop~\ref{prop:non-cf-func}\end{minipage}}
    \rput    (13.5,1.0){$\dynFO$}
  \end{pspicture}
  \end{center}
  \caption[]{An overview of the main results in this paper.\footnotemark }
  \label{fig:overview}
\end{figure}
\footnotetext{In this figure the dynamic complexity classes are
  allowed to operate with precomputation. Some of the results also hold without precomputation, for example all results concerning formal languages.}
\noindent
Continuing the above lines of research, this paper studies
the dynamic complexity of formal languages with a particular focus on
dynamic classes between \dynprop and \dynQF. Our main contributions
are as follows (see also Figure~\ref{fig:overview}):
\begin{itemize}
\item We give an exact characterization of the dynamic complexity of
  regular languages: a language can be maintained in \dynprop if and
  only if it is regular. This also holds in the presence of arbitrary
  precomputed (aka built-in) relations. (Section~\ref{sec:reg})
\item We provide (presumably) better upper bounds for context-free
  languages: every context-free language can be maintained in \dynFO,
  Dyck languages even in \dynQF, Dyck languages with one kind of
  brackets in a slight extension of \dynprop, where built-in successor
  and predecessor functions can be used. (Section~\ref{sec:cfl})
\item As an immediate consequence, we get a separation between
  \dynprop and \dynQF, thereby also separating \dynprop from \dynFO
  and \dynp.
\item We investigate a slightly different semantic for dynamic string
  languages, and we show that also regular tree languages can be
  maintained in \dynprop, when allowing
  precomputation and the use of built-in functions. (Section~\ref{sec:variations}).
\item We also study general structures, and show that
  (bounded-depth) alternating reachability is not maintainable in
  \dynprop. From this we can conclude that not all first-order
  definable properties are maintainable in \dynprop. On the other hand, we prove
  that all existential first-order definable properties are
  maintainable in \dynQF when allowing
  precomputation. (Section~\ref{sec:graphs})
\end{itemize}

\noindent{\bf Related work.} We already discussed most of the related work
above. A related research area is the study of incremental
computation and the complexity of problems in the cell probe model.
Here, the focus is not on structural (parallel) complexity of
updates but rather on (sequential) update time \cite{Miltersen99cellprobe,MiltersenSVT94}. 
In particular, \cite{FrandsenHMRS95, FrandsenMS97} give efficient incremental
algorithms and analyse the complexity of formal language classes based
on completely different ideas. 

Another area related to dynamic formal languages is the incremental maintenance of schema information (aka regular tree languages) \cite{DBLP:journals/tods/BalminPV04, BarbosaMLMA04} and XPath query evaluation \cite{BjoerklundGMM09} in XML documents. There, the interest is mainly in fast algorithms, less in structural dynamic complexity. Nevertheless techniques of dynamic algorithms on string languages also find applications in these settings.

\section{Definitions}
\label{sec:defs}
Let $\Sigma=\{\sigma_1, ..., \sigma_k\}$ be a fixed alphabet. We
represent words over $\Sigma$ encoded by \emph{word structures}, i.e.,
logical structures $W$ with universe $\{1\ldots,n\}$, one unary relation $R_\sigma$ for each symbol
  $\sigma\in\Sigma$, and the canonical linear order $<$ on  $\{1\ldots,n\}$.
We only consider structures in which, for each $i\le n$, there is at most one $\sigma\in\Sigma$
such that $R_\sigma(i)$ holds, but  there might be none such
$\sigma$. We write $W(i)=\sigma$ if
$R_{\sigma}(i)$ holds and $W(i)=\epsilon$ if no such $\sigma$
exists. We call $n$ the \emph{size} of $W$. 

The word $w=\word(W)$ represented by a word structure $W$ is
simply the concatenation $W(1)\circ\cdots\circ W(n)$. Notice that, due to the fact
that certain elements in $W$ might not carry a symbol,
the actual length of the string can be less than $n$. In particular,
every word $w$ can be encoded by infinitely many different word structures. 
  Let $\intc{i}{j}$ and $\into{i}{j}$ denote the intervals from $i$ to $j$, resp.~from $i+1$ to $j-1$.
  For a word structure $W$, and positions $i\le j$ in $\intc{1}{n}$,
  we write $w\intc{i}{j}$ for the (sub-)string $W(i)\circ\cdots\circ W(j)$. In particular
  $w[i,i-1]$ denotes the empty substring between positions $i$ and $i-1$.
  
By \emptyws we denote the structure with universe $\{1,..,n\}$
representing the empty string $\epsilon$ (thus in $\emptyws$ all
relations $R_\sigma$ are empty).

\subsection{Dynamic Languages and Complexity Classes}

In this section, we first define dynamic counterparts of formal
languages. Informally, a dynamic language consists of all sequences of
insertions and deletions of symbols that transform the empty string
into a string of a particular (static) language $L$. Then we define
dynamic programs which are intended to keep track of whether the
string resulting from a sequence of updates is in $L$. Finally we
define complexity classes of dynamic languages. Most of our
definitions are inspired by \cite{PatnaikI97} but, as we consider
strings as opposed to arbitrary structures,  we try to keep
the formalism as simple as possible.
\bigskip

\noindent{\it Dynamic Languages.}
We will associate with each string language $L$ a dynamic language
$\dynamic{L}$. The idea is that words can be changed by insertions and
deletions of letters and $\dynamic{L}$ is basically the set of update
sequences $\alpha$ which turn the empty string into a string in $L$.

For an alphabet $\Sigma$ we define the set $\ops := \{\ins_\sigma
\mid \sigma\in\Sigma\}\cup\{\reset\}$ of \emph{abstract updates}. 
A \emph{concrete update} is a term of the form $\ins_\sigma(i)$ or
$\reset(i)$, where $i$ is a positive integer. A concrete update is
\emph{applicable} in a word structure of size $n$ if $i\le n$.
By $\Delta_n$ we denote the set of applicable concrete updates for word structures of size $n$.
If there is no danger of confusion we will simply write ``update'' for
concrete or abstract updates. 

The semantics of applicable updates is defined as expected: $\delta(W)$ is the
structure resulting from $W$ by
\begin{itemize}
\item  setting $R_{\sigma}(i)$ to true and $R_{\sigma'}(i)$ to false,
  for $\sigma'\not=\sigma$, if $\delta=\ins_{\sigma}(i)$, and
\item setting all $R_{\sigma}(i)$ to
false, if $\delta=\reset(i)$.  
\end{itemize}
For a sequence $\alpha=\delta_1\ldots
\delta_k\in\Delta_n^+$ of updates we define $\alpha(W)$ as
$\delta_k(\ldots(\delta_1(W))\ldots)$.  

\begin{definition}
Let $L$ be a language over alphabet $\Sigma$. The \emph{dynamic
  language} \dynamic{L}  is the set of all (non-empty) sequences $\alpha$
of updates, for which there is $n > 0$ such that $\alpha \in \Delta_n^+$
and $\word(\alpha(\emptyws)) \in L$. We call $L$ the \emph{underlying language} of
$\dynamic{L}$.\footnote{There is a danger of confusion as we deal with
two kinds of languages: ``normal languages'' consisting of ``normal
strings'' and dynamic languages consisting of sequences of updates. We
use the terms ``word'' and ``string'' only for ``normal strings'' and
call the elements of dynamic languages ``sequences''.}
\end{definition}

%\bigskip

\noindent{\it Dynamic Programs.}
%Dynamic problems can be solved by dynamic programs. 
Informally, a dynamic program is a transition system which reads
sequences of concrete updates and stores the current string and some
auxiliary relations in its state. It also maintains the information
whether the current string is in the (static) language under consideration.

A \emph{program state} $S$ is a word structure $W$
extended by (auxiliary) relations over the universe of
$W$. The \emph{schema} of $S$ is the set of names and arities of the
auxiliary relations  of $S$. We require that each program has a 0-ary
relation $\accept$.

A \emph{dynamic program} $P$ over alphabet $\Sigma$ and schema $\cR$
consists of an \emph{update function} $\phi^R_{\op}(y;
x_1,\allowbreak \ldots,x_k)$, for every $\op \in \ops$ and $R\in \cR$, where
$k=\arity(R)$.  A dynamic program $P$ operates as follows. Let $S$ be
a program state with word structure $W$. The application of an
applicable update $\delta = \op(i)$ on $S$ yields the new state
$S'=\delta(S)$ consisting of $W'=\delta(W)$ and new relations $R' =
\{\bar{j} \mid S \models \phi^R_{\op}(i,\bar{j})\}$, for each
$R\in\cR$.  For each $n \in \nat$ and update sequence
$\alpha=\delta_1\ldots \delta_k \in \Delta_n^+$ we define $\alpha(S)$
as $\delta_k(\ldots(\delta_1(S))\ldots)$.  We say that a state $S$ is
\emph{accepting} iff $S \models \accept$, i.e., if the 0-ary
\accept-relation contains the empty tuple.\footnote{0-ary relations
  can be viewed as propositional variables: either they contain the
  empty tuple (corresponding to TRUE) or not.}

We say that a dynamic program $P$ \emph{recognizes} the dynamic language
$\dynamic{L}$ if for all $n\in\nat$ and all $\alpha\in\Delta_n^+$ it
holds that $\alpha(\emptyws')$ is accepting iff $\word(\alpha(E_n))\in L$,
where $\emptyws'$
denotes the  state with word structure $\emptyws$ and otherwise empty relations.

\bigskip

\noindent{\it Dynamic Complexity Classes.}
\dynFO is the class of all dynamic languages that are recognized by
dynamic programs whose update functions are
definable by first-order formulas. \dynprop is the subclass of \dynFO
where all these formulas are quantifier free. 

\subsection{Extended Dynamic Programs}
To gain more insight into the subtle mechanics of dynamic
computations, we study two orthogonal extensions of dynamic programs:
auxiliary functions and precomputations.
\bigskip

\noindent{\it Extending dynamic programs with functions.}
A dynamic program $P$ \emph{with auxiliary functions} is a dynamic
program over a schema $\cR$, possibly containing function symbols, which has, 
 for each $\sigma\in\Sigma$ and each function symbol $F\in\cR$ an update function   
	          $\psi^F_\sigma(i;x_1, ..., x_k)$ where $k=\arity(F)$. 

                  As we are mainly interested in quantifier free
                  update functions for updating auxiliary functions we
                  restrict ourselves to update functions defined by
                  \emph{update terms}, defined as:
\begin{itemize}
\item Every $x_i$ is an update term.
\item If $F\in\cR$ is a function and $\bar t$ contains
  only update terms then $F(\bar t)$ is an update term.
\item If $\phi$ is a
  quantifier free formula (possibly using update terms)  and
  $t_1$ and $t_2$ are update terms then
  $\ite(\phi,t_1,t_2)$ is an update term.
\end{itemize}

The semantics of update terms is straightforward for the first two
rules. A term $\ite(\phi,t_1,t_2)$ takes the value of $t_1$
if $\phi$ evaluates to true and the value of $t_2$ otherwise.

After an update $\delta$, the auxiliary functions in the new state
are defined by the update functions in the straightforward way.
Unless otherwise stated, the functions in the initial state
$\emptyws'$ map every tuple to its first element. 
\bigskip

\noindent{\it Extending dynamic programs with precomputations.} 
Sometimes it can be useful for a dynamic algorithm to have a
precomputation which prepares some sophisticated data
structures. Such precomputations can easily be incorporated into the
model of dynamic programs.

In \cite{PatnaikI97} the class $\dynFO^+$ allowed polynomial time
precomputations on the auxiliary relations. The
structual properties of dynamic algorithms with precomputation were
further studied and refined in \cite{WeberS05}. In this paper,
we do not consider different complexities of precomputations but
distinguish only the cases where precomputations are allowed or
not.

A \emph{dynamic program $P$ with precomputations} uses an additional
set of \emph{initial auxiliary relations} (and possibly \emph{initial
  auxiliary functions}). For each initial auxiliary relation symbol
$R$ and each $n$, $P$ has a relation $R_n^{\mbox{init}}$ over
$\{1,\ldots,n\}$. The semantics of dynamic programs with
precomputations is adapted as follows: in the initial state
$\emptyws'$ each initial auxiliary relation $R$ is interpreted by
$R_n^{\mbox{init}}$. Similarly, for initial auxiliary function symbol
$F$ and each $n$ there is a function $F_n^{\mbox{init}}$ over
$\{1,\ldots,n\}$.

Initial auxiliary relations and functions are never updated, i.e., $P$
does not have update functions for them. 
\medskip

The extension of dynamic programs by functions and precomputations can
be combined and gives rise to different complexity classes: For
$I\in\{\bot,\Rel,\Fun\}$ and $A\in\{\Rel,\Fun\}$ we denote by
$\dynC(I,A)$ the class of dynamic languages recognized by dynamic programs
\begin{itemize}
\item without precomputations, if $I=\bot$,
\item with initial auxiliary relations, if $I=\Rel$, 
\item with initial auxiliary relations and functions, if $I=\Fun$,
\item with (updatable) auxiliary relations only, if $A=\Rel$, and
\item with (updatable) auxiliary relations and functions, if $A=\Fun$.
\end{itemize}

Thus, we have $\dynFO=\dynFO(\bot,\Rel)$ and
$\dynprop=\dynprop(\bot,\Rel)$. If the base class $\dynC$ is
$\dynprop$ or $\dynFO$, $\dynC(I,A)$ is clearly monotonic with respect
to the order $\bot<\Rel<\Fun$
In particular, 
\[
\dynprop(\Rel,\Rel)\subs\dynprop(\Fun,\Rel)\subs\dynprop(\Fun,\Fun)
\]
As we are particularly interested in the class $\dynprop(\bot,\Fun)$
we denote it also more consisely by \dynqf. 

As auxiliary functions can be
simulated by auxiliary relations if the update functions are
first-order formulas we also have 
$\dynFO(\Rel,\Rel)=\dynFO(\Fun,\Fun)$ and $\dynFO=\dynFO(\bot,\Fun)$.
Thus, in our setting there are
only two classes with base class $\dynFO$: the one with and the one
without precomputations.
\bigskip

We will also examine the setting where we only allow a specific set of
initial auxiliary (numerical) functions, namely built-in successor and
predecessor functions. For each universe size $n$ let $\succM$ be the
function that maps every universe element to its successor (induced by
the ordering) and the element $n$ to itself, let $\preM$ be the function
mapping to predecessors and the element $1$ to itself, and let $\minM$ be the
constant (i.e.~nullary function) mapping to the minimal element $1$ in
the universe. Then $\dynprop(\SetSucc, \Rel)$ is the class of dynamic
languages recognized by dynamic programs using quantifier free formulas 
with initial (precomputed)
auxiliary relations, the auxiliary functions $\succM$, $\preM$ and $\minM$
and updatable auxiliary relations.
\bigskip

\noindent {\it Dynamic Programs with initialisation.} Let us note here that in some cases dynamic programs need some weak kind of precomputation. In these cases it will be useful to be able to suitably initialize the auxiliary relations, namely in settings where no precomputation is allowed. The following
lemma shows that this is indeed possible, if the initialization
functions can be defined by a quantifier free formula. A \emph{dynamic
program with initialization} is a dynamic program with additional
quantifier free formulas $\beta_R(\bar x)$, for each auxiliary
relation $R$. The value of each relation $R$ in the initial state
$E_n'$ is then determined by $\beta_R$.

\begin{lemma}\label{lem:init}
  For each dynamic \dynprop- or \dynFO-program $P$ with initialization
  there is an equivalent program $P'$ that does not use initialization.
\end{lemma}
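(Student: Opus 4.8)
The idea is to simulate the initialization using a single "warm-up" phase at the very beginning of the update sequence. Since the initial state $E_n'$ represents the empty string and has empty universe of size $n$, the first genuine update $\delta_1 = \op(i)$ in any sequence $\alpha$ touches position $i$; the plan is to use the occurrence of this first update to install the values dictated by the $\beta_R$ into the auxiliary relations of $P'$, and from then on to mimic $P$ exactly.

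\medskip

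\noindent\textit{Approach.}
The plan is to add to $P'$ one fresh $0$-ary auxiliary relation, call it $\textsf{Init}$, which is false in the initial state and is set to true by every update function of $P'$; thus $\textsf{Init}$ is false precisely in $E_n'$ and true in every state reachable by a non-empty sequence. For each auxiliary relation $R$ of $P$ of arity $k$ and each $\op\in\ops$, I would define the update function of $P'$ by a case distinction on $\textsf{Init}$:
\[
\phi'^{R}_{\op}(y;\bar x) \;=\; \bigl(\neg\textsf{Init} \wedge \chi^{R}_{\op}(y;\bar x)\bigr)\;\vee\;\bigl(\textsf{Init}\wedge \phi^{R}_{\op}(y;\bar x)\bigr),
\]
where $\chi^{R}_{\op}$ is the formula describing "what $R$ should become after applying $\op(y)$ to the state in which every $R$ already holds the value $\beta_R$". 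Concretely, $\chi^{R}_{\op}(y;\bar x)$ is obtained from $\phi^{R}_{\op}(y;\bar x)$ by replacing every atom $R'(\bar t)$ referring to an auxiliary relation $R'$ by $\beta_{R'}(\bar t)$ (atoms over the $R_\sigma$ and over $<$ are left untouched, since in $E_n'$ all $R_\sigma$ are empty, matching the empty word, and $<$ is unchanged). All of $\chi^{R}_{\op}$, $\beta_R$, $\phi^{R}_{\op}$ are quantifier free, so the combined formula is quantifier free, and if $P$ is an \FO-program the same construction works with the quantifier-free requirement dropped.

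\medskip

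\noindent\textit{Key steps, in order.}
\begin{enumerate}
\item Introduce $\textsf{Init}$ and verify it behaves as claimed (false only in $E_n'$).
\item Define $\chi^{R}_{\op}$ by substitution of the $\beta_{R'}$ as above, and define the combined update functions $\phi'^{R}_{\op}$.
\item Prove by induction on the length of $\alpha$ that for every $n$ and every $\alpha\in\Delta_n^+$, the state $\alpha(E_n')$ reached by $P'$ agrees, on all relations of $P$, with the state $\alpha(\widehat{E_n'})$ reached by $P$ from the state $\widehat{E_n'}$ that has word structure $\emptyws$ and each $R$ interpreted by $\beta_R$. The base case ($|\alpha|=1$) is exactly where $\chi$ is used: applying $\delta_1 = \op(i)$ in $P'$ from $E_n'$ (where $\textsf{Init}$ is false) yields, for each $R$, the set $\{\bar j \mid E_n' \models \chi^{R}_{\op}(i,\bar j)\}$, which by the substitution equals $\{\bar j \mid \widehat{E_n'}\models \phi^{R}_{\op}(i,\bar j)\}$, i.e.\ the state $P$ reaches from $\widehat{E_n'}$. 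For the inductive step ($|\alpha|\ge 2$), $\textsf{Init}$ is already true, so $\phi'^{R}_{\op}$ reduces to $\phi^{R}_{\op}$ and the two programs run in lockstep.
\item Conclude that $\alpha(E_n')$ is accepting in $P'$ iff $\alpha(\widehat{E_n'})$ is accepting in $P$, which by definition of $P$ (as a program with initialization recognizing $\dynamic{L}$) holds iff $\word(\alpha(\emptyws))\in L$; hence $P'$ recognizes $\dynamic{L}$ without initialization.
\end{enumerate}

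\medskip

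\noindent\textit{Main obstacle.}
There is no deep difficulty here; the only subtle point — and the one I would be careful about in the write-up — is the substitution step in item 2 and its use in the base case of the induction. One must check that replacing auxiliary-relation atoms $R'(\bar t)$ by $\beta_{R'}(\bar t)$ inside $\phi^{R}_{\op}$ is legitimate even when the $\bar t$ are non-variable terms built from auxiliary function symbols (in the \dynqf / function setting one would substitute analogously into update terms using \ite), and that atoms over the input relations $R_\sigma$ and the order $<$ are evaluated identically in $E_n'$ and in $\widehat{E_n'}$ — which they are, since these two states share the same word structure $\emptyws$. Everything else is a routine induction.
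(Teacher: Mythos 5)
Your proposal is correct and is essentially the paper's own construction: the paper likewise introduces a $0$-ary flag ($I_0$, true once some update has occurred) and rewrites the update formulas so that on the first update the atoms referring to auxiliary relations are read from the $\beta_R$'s instead of from the (still empty) relations, afterwards running $P$ unchanged. The only cosmetic difference is that the paper performs the $I_0$-case-split inside each atom rather than once at the top of each update formula, which is logically equivalent to your $\chi^{R}_{\op}$ substitution.
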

\begin{proof}
  The simulating program $P'$ uses an additional 0-ary relation $I_0$ which
  contains the empty tuple if some update has already occurred. The
  update formulas of $P'$ are obtained from those of $P$ by replacing
  each atom of the form $R(\bar x)$ by $(I_0 \land \beta_R(\bar
  x))\lor(\neg I_0 \land R(\bar x))$. The update formulas for $I_0$
  are constantly true.    
\end{proof}

\section{Dynamic Complexity of Regular Languages}\label{sec:reg}

As already mentioned in the introduction, it was  shown by
Patnaik and Immerman \cite{PatnaikI97}  that every regular language
can be recognized by a \dynFO program. Hesse \cite{Hesse03a} showed
that the full power of \dynFO is actually not needed: every regular
language is recognized by some \dynqf program.

Our first result is a precise characterization of the dynamic
languages \dynamic{L} with an underlying regular language $L$: they
exactly constitute the class \dynprop. 
Before stating the result formally and sketch its proof, we will give
a small example to illustrate how regular languages can be maintained
in \dynprop.

\begin{example}\label{example:reg}
  Consider the regular language $(a+b)^*a(a+b)^*$ over the alphabet
  $\{a,b\}$.  One has to maintain one binary relation $A(i,j)$ that is
  true iff $i<j$ and there exists $k\in\into{i}{j}$ such that
  $w\intc{k}{k}=a$ and two unary relations $I(j) \equiv \exists k<j:
  w\intc{k}{k}=a$ and $F(i) \equiv \exists k>i: w\intc{k}{k}=a$. 

We will state here the update formulas for the three kinds of
operations: $\ins_a$, $\ins_b$, and $\reset$. The formulas for the insertion of a $b$ into the string or the deletion of a string symbol are the same, since the language only cares about whether there exist an $a$ in the string or not.
\smallskip

\noindent After the operation $\ins_{a}(y)$, the relations can be updated as follows
  \begin{eqnarray*}
     \phi^{A}_{\ins_a} (y; x_1, x_2) & \equiv & 
       \big[(y \leq x_1 \vee y \geq x_2) \land A(x_1,x_2)\big] \lor \allowbreak \big[x_1 < y < x_2] \\
     \phi^{I}_{\ins_a} (y; x) & \equiv & 
       \big[y \geq x \land I(y)\big] \lor \allowbreak \big[y < x] \\
     \phi^{F}_{\ins_a} (y; x) & \equiv & 
       \big[y \leq x \land F(y)\big] \lor \allowbreak \big[y > x] \\
     \phi_{\ins_a}^{\accept}(y) & \equiv & \text{true},
  \end{eqnarray*}
and after the operations $\ins_{b}(y)$ and $\reset(y)$, the relations can be updated as follows 
  \begin{eqnarray*}
     \phi^{A}_{\reset/\ins_b} (y; x_1, x_2) & \equiv & 
       \big[(y \leq x_1 \vee y \geq x_2) \land A(x_1,x_2)\big] \lor \allowbreak \big[x_1 < y < x_2 \land
       A(x_1,y) \lor  A(y,x_2) \big] \\
     \phi^{I}_{\reset/\ins_b} (y; x) & \equiv & 
       \big[y \geq x \land I(y)\big] \lor \allowbreak \big[y < x \land
       I(y) \lor  A(y,x) \big] \\
     \phi^{F}_{\reset/\ins_b} (y; x) & \equiv & 
       \big[y \leq x \land F(y)\big] \lor \allowbreak \big[y > x \land
       F(y) \lor  A(y,x)\big] \\
     \phi_{\reset/\ins_b}^{\accept}(y) & \equiv & I(y) \lor  F(y).
  \end{eqnarray*}
It is crucial here that $A(i,j)$ refers to the substring from $i+1$ up to
position $j-1$ (as opposed to $i$ and $j$). Otherwise it would not be possible to maintain these auxiliary relations. In the update formula $\phi^{A}_{\ins_{a}} (y; x_1, x_2)$ for example, one 
can only use the three variables $y$, $x_1$ and $x_2$ to compute the new value of $A(x_1, x_2)$ but needs the knowledge about the string on the intervals \into{x_1}{y} and \into{y}{x_2}. \qed
\end{example}

\begin{proposition}\label{prop:regprop}
For every regular language $L$, $\dynamic{L}\in\dynprop$.
\end{proposition}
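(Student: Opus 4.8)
The plan is to fix a DFA $\cA=(Q,\Sigma,\delta,q_0,F)$ recognizing $L$ and maintain, for every pair of states $p,q\in Q$, a binary auxiliary relation $R_{p,q}$ such that $R_{p,q}(i,j)$ holds iff $i<j$ and the substring $w\intc{i+1}{j-1}$ (i.e.\ the "open" interval $\into{i}{j}$, exactly as in Example~\ref{example:reg}) drives $\cA$ from $p$ to $q$, together with unary relations $I_q(j)$ recording the state reached after reading the prefix $w\intc{1}{j-1}$, and unary relations $F_q(i)$ recording, for each state $q$, whether reading the suffix $w\intc{i+1}{n}$ from $q$ lands in $F$. The $\accept$ relation is then $\bigvee_{q\in F'} I_q$-type information read off at the end of the string; more precisely, $\accept$ should be maintained directly as a $0$-ary relation saying "the run of $\cA$ on the whole of $w$ is accepting", updated using the $R$, $I$, $F$ relations. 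The key point, already flagged in the example, is that $R_{p,q}$ refers to the \emph{open} interval, so that when a letter is written at position $y$ strictly between $x_1$ and $x_2$, the substring $w\into{x_1}{x_2}$ splits as $w\into{x_1}{y}\cdot W(y)\cdot w\into{y}{x_2}$, and all three pieces are described by auxiliary relations at the three available variables $x_1,y,x_2$.

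The core of the argument is writing the quantifier-free update formulas. After an update $\op(y)$ with $\op\in\{\ins_\sigma,\reset\}$, let $\sigma' = \sigma$ if $\op=\ins_\sigma$ and $\sigma'=\epsilon$ if $\op=\reset$ (reading $\epsilon$ meaning "no transition taken", i.e.\ the state is unchanged). For positions $x_1<x_2$:
\[
  \phi^{R_{p,q}}_{\op}(y;x_1,x_2) \;\equiv\;
  \big[(y\le x_1 \vee y\ge x_2)\wedge R_{p,q}(x_1,x_2)\big]
  \;\vee\;
  \big[x_1<y<x_2 \wedge \bigvee_{r,s\in Q} \big(R_{p,r}(x_1,y)\wedge T^{\sigma'}_{r,s}\wedge R_{s,q}(y,x_2)\big)\big],
\]
where $T^{\sigma'}_{r,s}$ is the constant "true" iff $\delta(r,\sigma')=s$ (with $\delta(r,\epsilon)=r$); note $\bigvee_{r,s}$ is a fixed finite disjunction, so the formula is quantifier-free. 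The relations $I_q$ and $F_q$ are updated analogously — for $I_q(x)$ one keeps the old value when $y\ge x$ and otherwise recomputes $I_q(x) \equiv \bigvee_{r,s}(I_r(y)\wedge T^{\sigma'}_{r,s}\wedge R_{s,q}(y,x))$ when $y<x$, and symmetrically for $F_q$ — and $\accept$ is maintained as $\bigvee_{q}\big(I_q(y)\wedge [\,\text{reading }W'(y)\text{ from }q\text{ reaches some }s\,]\wedge F_s(y)\big)$, all within the two/three available variables, so all formulas stay quantifier-free. Correctness is then a routine structural induction on the update sequence: one checks that if the invariant (the intended meaning of each auxiliary relation) holds in state $S$, then it holds in $\op(y)(S)$, using that $\cA$ is deterministic so the "split" state $r$ at position $y$ is unique.

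The main obstacle — and the only genuinely delicate point — is the bookkeeping forced by the restriction to \emph{two} free variables in the unary-relation updates and \emph{three} in the binary-relation updates. One must verify that every substring whose automaton behaviour is needed after the update can indeed be named using only the update position $y$ and the (at most two) free variables of the relation being updated; the open-interval convention is exactly what makes this work, since it lets the point $y$ serve simultaneously as the right endpoint of one interval and the left endpoint of the next without overlap or omission of $y$ itself (whose contribution is the explicitly written constant $T^{\sigma'}$). Handling the boundary/degenerate cases $y=x_1+1$, $y=x_2-1$, or $x_2=x_1+1$ (where one or both of the subintervals $w\into{x_1}{y}$, $w\into{y}{x_2}$ is empty) requires that $R_{p,p}(i,i+1)$ be true for the empty substring and false for $p\ne q$ — this is part of the initial configuration and is preserved — so these cases cause no trouble. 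Since the empty string is represented by $E_n$ and all auxiliary relations start appropriately (the only nonempty ones being $R_{p,p}$ on consecutive pairs and the $I_{q_0}$/$F_q$ relations reflecting the empty word), and since by Lemma~\ref{lem:init} any such quantifier-free initialization can be dispensed with, we obtain a genuine \dynprop-program with no precomputation, establishing $\dynamic{L}\in\dynprop$.
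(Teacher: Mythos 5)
Your proposal is correct and takes essentially the same route as the paper: the same DFA-based relations $R_{p,q}$, $I_q$, $F_q$ over open intervals, the same quantifier-free split-at-$y$ update formulas (your transition constants $T^{\sigma'}_{r,s}$ just unfold into the paper's finite disjunctions over the pairs with $\delta(p',\sigma)=q'$, resp.\ over single states for $\reset$), and the same appeal to Lemma~\ref{lem:init} to discharge the initialization. The only slip is the parenthetical describing the initial state: since the word represented by $\emptyws$ is empty, $R_{p,p}(i,j)$ must initially hold for \emph{all} pairs $i<j$, not only consecutive ones (this is exactly how the paper initializes it); with only consecutive pairs set, the very first insertion would already read a false value of $R_{p,p}(x_1,y)$ for non-adjacent $x_1,y$.
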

\proof 
Let $A = (Q,\delta,s,F)$ be a DFA accepting $L$. Here, $Q$ is the set
of states, $\delta: Q \times \Sigma \to Q$ is the transition function,
$s$ is the initial state and $F$ is the set of accepting states. As
usual, we denote by $\delta^*:Q \times \Sigma^* \to Q$ the
reflexive-transitive closure of $\delta$. Then, $w \in L(A)$ iff
$\delta^*(s,w) \in F$.

The program $P$ recognizing $\dynamic{L}$ uses the following relations.
\begin{itemize}
\item  For any pair of states $p,q \in
  Q$, a relation $$R_{p,q} = \{(i,j) \mid i < j \land
  \delta^*(p,w[i+1,j-1]) = q\};$$
\item For each state $q$, a relation $I_q=\{j\mid \delta^*(s,w[1,j-1])=q\}$;
\item For each state $p$, a relation $F_p=\{i\mid
  \delta^*(p,w[i+1,n])\in F\}$,\\
  where $n$ is the size of the word structure.
\end{itemize}

As already mentioned in example \ref{example:reg}, it is crucial here that $R_{p,q}(i,j)$ refers to the substring from position $i+1$ up to position $j-1$ (as opposed to $j$), as will become clear in the following.

Thanks to Lemma \ref{lem:init} we can assume that these relations are
initialized as follows.
\begin{itemize}
\item $R_{p,p} = \{(i,j)
  \mid i < j\}$ and $R_{p,q} = \emptyset$, for $p \neq q$;
\item $I_s=\{1,\ldots,n\}$ and $I_q=\emptyset$, for $q\not=s$;
\item $F_p=\{1,\ldots,n\}$ if $p\in F$ and $F_p=\emptyset$, otherwise.
\end{itemize}

  We now show how these relations can be maintained. First, for each $\sigma \in
  \Sigma$ and $p,q\in Q$, we have the following update formulas for relations $R_{p,q}$
\vspace{-.3cm}
\begin{align*}
  \phi_{\ins_\sigma}^{R_{p,q}}(y;x_1,x_2) & \equiv
  \phantom{\lor\;} \big(y \notin \into{x_1}{x_2} \land R_{p,q}(x_1,x_2)\big) \\
  & \phantom{=\;} \lor\; \big(y \in \into{x_1}{x_2} \land
  \bigvee\limits_{\substack{p',q' \in Q\\ \delta(p',\sigma)=q'}}
  R_{p,p'}(x_1,y)
  \land  R_{q',q}(y,x_2)\big),\\
  \phi_{\reset}^{R_{p,q}}(y;x_1,x_2) & \equiv
  \phantom{\lor\;} \big(y \notin \into{x_1}{x_2} \land R_{p,q}(x_1,x_2)\big) \\
  & \phantom{=\;} \lor\; \big(y \in \into{x_1}{x_2} \land
  \bigvee_{p'\in Q} R_{p,p'}(x_1,y) \land  R_{p',q}(y,x_2)\big).\\
  \intertext{The formulas for the other relations are along the same
    lines, e.g., for each $\sigma \in \Sigma$ and $q\in Q$, and the
    relation $I$ we have the following update formula }
  \phi_{\ins_\sigma}^{I_{q}}(y;x) & \equiv
  \phantom{\lor\;} \big(y \geq x \land I_{q}(x)\big) \\
  & \phantom{=\;}      \lor\;  \big(y < x \land \bigvee\limits_{\substack{p',q' \in Q\\\delta(p',\sigma)=q'}} I_{p'}(y) \land  R_{q',q}(y,x)\big).\\
  \intertext{Finally, $\accept$ can be updated by the formulas}
\end{align*}\vspace{-1.3cm}
\[
\phi_{\ins_\sigma}^{\accept}(y)  \equiv \bigvee\limits_{\substack{p',q' \in Q\\\delta(p',\sigma)=q'}} I_{p'}(y) \land  F_{q'}(y) \qquad\text{ and }\qquad
  \phi_{\reset}^{\accept}(y)  \equiv \bigvee\limits_{p' \in Q}
  I_{p'}(y) \land F_{p'}(y).
\]
\qed

%%%%

As a matter of fact, the converse of Proposition \ref{prop:regprop} is
also true, thus \dynprop is the exact dynamic counterpart of the
regular languages.

\begin{proposition}\label{prop:propreg}
Let $L=\dynamic{L'}$ be a dynamic language  in $\dynprop$. Then $L'$
is regular.
\end{proposition}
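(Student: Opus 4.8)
The plan is to show that if $\dynamic{L'}$ is maintained by a \dynprop program $P$, then $L'$ has only finitely many Myhill--Nerode classes. The key idea is that the state of $P$ after processing a sequence of updates is a relational structure over a universe $\{1,\dots,n\}$ whose update formulas are quantifier-free, hence \emph{local} in a very strong sense: the new value of a tuple depends only on the tuple itself and the position being updated. I will exploit this locality together with an Ehrenfeucht--Fra\"iss\'e / indistinguishability argument to bound the amount of information $P$ can remember about a long string.

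\smallskip
\noindent\textbf{Key steps.} First I would fix the schema $\cR$ of $P$ and let $r$ be the maximal arity occurring in it. I would then consider, for a string $u\in\Sigma^*$, the following way to build it inside a word structure: choose $n$ large, and insert the letters of $u$ into the positions $1,2,\dots,|u|$ of $E_n$ in left-to-right order, leaving positions $|u|+1,\dots,n$ carrying $\epsilon$. Call the resulting state $S_u^n$. The central claim is a \emph{substitution lemma}: for any two strings $u,v$ that are ``$P$-equivalent'' on short windows, and any continuation, the accept-bit of $P$ after appending that continuation is the same. To make this precise I would define an equivalence $u\sim_P v$ meaning: the states $S_u^n$ and $S_v^n$ (for suitable common $n$, with $u,v$ written in a block of the same length after renaming positions) agree on all $\cR$-atoms whose arguments all lie within the ``active'' block of positions plus a bounded boundary region. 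Because the update formulas are quantifier-free, inserting or resetting a letter at position $i$ changes the value of an atom $R(\bar a)$ only as a Boolean function of the atoms on $\bar a$ together with the constants comparing $\bar a$ to $i$; in particular the change is determined \emph{locally}. Iterating, one shows $\sim_P$ is a right congruence: if $u\sim_P v$ then $uw\sim_P vw$ for every $w$, and moreover $P$ accepts on $u\cdot w$ iff it accepts on $v\cdot w$. Since $P$ recognizes $\dynamic{L'}$, this gives $u\sim_P v \Rightarrow (u\in L' \iff v\in L')$ and closure under right concatenation, i.e.\ $\sim_P$ refines Myhill--Nerode equivalence.

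\smallskip
\noindent\textbf{Finiteness.} It remains to argue that $\sim_P$ has finitely many classes. Here I would use a pumping/indistinguishability argument: if a string $u$ is long enough, two distant positions $i<j$ carrying the same letter have, by a pigeonhole on the bounded-size ``local type'' of a position (the type being the tuple of truth values of $\cR$-atoms on small neighborhoods of the position, of which there are only boundedly many), the same local type; a careful use of quantifier-freeness then lets me delete the block between them without changing any relevant atom, hence $u\sim_P u'$ for a strictly shorter $u'$. This bounds the length of $\sim_P$-minimal representatives, so there are finitely many classes. I expect the \textbf{main obstacle} to be making the substitution/congruence lemma fully rigorous: one has to be careful that the auxiliary relations of $P$ can still ``reach across'' the cut point via atoms with arguments on both sides, so the right formulation of the local equivalence $\sim_P$ must track not just the interior of a block but also a constant-width boundary, and one must verify that this boundary information is itself preserved under updates — this is exactly the role played by the subtle choice in Example~\ref{example:reg} and Proposition~\ref{prop:regprop} of letting $R_{p,q}(i,j)$ speak about the \emph{open} interval $\into{i}{j}$. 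Handling the interaction of this boundary bookkeeping with resets (which can empty a position) is the delicate point; everything else is a routine locality argument.
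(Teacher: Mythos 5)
Your right-congruence step is plausible in outline, but the finiteness step is where the argument genuinely breaks, and the repair you sketch does not work. As you define it, $\sim_P$ demands agreement on \emph{all} $\cR$-atoms over the active block, so a priori it has unboundedly many classes (one per possible restriction of the state to the block); the entire burden falls on the pumping step. That step pigeonholes on the \emph{local} type of a position, i.e.\ the atoms over a bounded neighbourhood, and then claims the block between two positions of equal local type can be excised without changing any relevant atom. But quantifier-freeness of the \emph{update} formulas implies no locality for the \emph{maintained relations} in a reachable state: the canonical program of Proposition~\ref{prop:regprop} maintains a binary relation $R_{p,q}(i,j)$ whose value, for arbitrarily distant $i<j$, encodes the transformation $\delta^*(p,w[i+1,j-1])$ of the entire intervening substring. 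An atom $R_{p,q}(a,b)$ with $a$ to the left and $b$ to the right of the deleted block is preserved only if the deleted block acts as the identity on every state $p$ of the DFA, and equality of the local types of the two cut points does not guarantee this (the local type sees only unary atoms and short-range binary atoms at each endpoint, not the global effect of the block on all states). So the claimed $u\sim_P u'$ fails in general, the reduction to bounded-length representatives collapses, and with it the finiteness of $\sim_P$. (There is also a definitional wrinkle: you only defined $\sim_P$ for strings occupying blocks of equal length after renaming, so the pumping step, which produces a strictly shorter string, is not even an instance of the relation you defined.)

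The paper's proof avoids looking at the active block altogether, and this is the idea your proposal is missing. It shows that after any update sequence touching only positions $\le i$, the set of \emph{untouched} positions $\{i+1,\ldots,n\}$ is $k$-indiscernible ($k$ the maximal arity), that the type of any ordered tuple over such a set is determined by the ordered $k$-type of its first $k$ elements, and that inserting a letter at the leftmost untouched position transforms this single ordered $k$-type in a way depending only on the letter. That $k$-type is thus a finite-state summary of the whole history as far as all future behaviour (including acceptance) is concerned, and the automaton for $L'$ is read off directly; finiteness is immediate from the bounded number of $k$-types, with no pumping needed. If you want to salvage your plan, the object to pigeonhole on is this suffix type, not the local types of positions inside the block.
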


\proof
The idea of the proof is as follows. We consider a dynamic program $P$
for $L$ and see what happens if, starting from the empty word, the
positions of a word are set in a left-to-right fashion. Since the acceptance of
the word by $P$ does not depend on the sequence of updates used to produce the
word, it suffices to consider only this one update sequence.

We make the following observations.
\begin{enumerate}[(1)]
\item After each update, in a sense that will be made precise soon,
  all tuples of positions that have not been set yet behave the same
  with respect to the auxiliary relations.
\item There is only a bounded number (depending only on $P$) of
  possible ways these tuples behave.
\item The change in behavior of the tuples by one update is uniquely determined by
  the inserted symbol.
\end{enumerate}
Together these observations will enable us to define a finite
automaton for $L'$.
\newpage
We first define the concept of the type of a tuple of
elements. Informally, the type of a tuple captures all information a
quantifier free formula can express about a tuple. Let $\bar
i=(i_1,\ldots,i_l)$ be an $l$-tuple of elements of a state $S$ and let
$\varphi$ be a quantifier free formula using variables from
$x_1,\ldots,x_l$. Then we write $\varphi[\bar i]$ for the formula
resulting from $\varphi$ by replacing each $x_j$ with $i_j$. E.g., for
$\bar i=(2,5,4)$ and the atom $\varphi=R(x_3,x_1)$ we get
$\varphi[\bar i]=R(4,2)$.

Let the \emph{type} $\type{S}{\bar i}$ of an $l$-tuple $\bar
i=(i_1,\ldots,i_l)$ in state $S$ be the set of those atomic formulas
$\varphi$ over
$x_1,\ldots,x_l$ for which $\varphi[\bar i]$ holds in $S$. 
A tuple $\bar i=(i_1,\ldots,i_l)$ is \emph{ordered} if
$i_1<i_2<\cdots<i_l$. An \emph{ordered type} is the type of an ordered tuple.

We call a set $I$ of elements of a state $S$ \emph{$l$-indiscernible}
if all ordered $l$-tuples over $I$ have the same type. Notice that if
$l'<l<|I|$ and $I$ is $l$-indiscernible then $I$ is also
$l'$-indiscernible.

Let $P$ be a \dynprop program recognizing a dynamic language
$L=\dynamic{L'}$ and let $k\ge 1$ be the highest arity of any
auxiliary relation of $P$. Our goal is to construct a finite automaton
for $L'$ thus showing that $L'$ is regular. We start by making some
observations.

\paragraph{\bf Observation 1}
{\em  Let $S$ be a state that is reached from $E_n'$ by insertions and
  deletions at positions $\le i$, for some $i$. Then the set
  $\{i+1,\ldots,n\}$ is $k$-indiscernible.}
\proof
Consider two ordered $k$-tuples $\bar j =(j_1, \dots, j_k)$ and $\bar{j'}=(j'_1, \dots, j'_k)$ of elements in $\{i+1,\ldots,n\}$. And let $J$ and $J'$ be the tuples $(1,\dots,i, j_1, \dots, j_k)$ and $(1,\dots,i, j'_1, \dots, j'_k)$.
We will inductively argue that after every considered sequence of updates starting in state $E_n$ and resulting in state $S$ it holds that \[
\type{S}{J} = \type{S}{J'}. 
\]
If this holds for every pair of ordered $k$-tuples in
$\{i+1,\ldots,n\}$ one can conclude that $\{i+1,\ldots,n\}$ is indeed
$k$-indiscernible. Obviously in the state $E_n'$ the equation is true.
Assume now that in some state $S$ the equation holds and consider one
update operation on an element $i'$ in the set $\{1,\ldots,i\}$
resulting in state $S'$. Let $\varphi$ be any atom over the set of
variables $\{x_1, ...,x_{i+k}\}$. The value of $\varphi$ after the
update operation is computed via a quantifier free formula $\psi$ over
$i'$ and $\{x_1, ...,x_{i+k}\}$. Since it holds that $\psi[J]$ is true iff
$\psi[J']$ is true it follows that after the update $\varphi[J]$ is
true iff $\varphi[J']$ is true.
\qed
\paragraph{\bf Observation 2}
{\em Let $S$ be a state and let $l>k$. If a set $I$ of at least $l$
  elements from $S$ is $k$-indiscernible then it is also
  $l$-indiscernible. Furthermore, the type of any ordered $l$-tuple over
  $I$ is uniquely determined by the type of its first $k$ elements.}
\begin{myproof}
  Suppose $I$ is $k$-indiscernible. Let $\bar{i} = (i_1,\ldots,i_l)$
  and $\bar{i'} = (i_1',\ldots,i_l')$ be two ordered $l$-tuples over
  $I$. We show that $\type{S}{\bar{i}} = \type{S}{\bar{i'}}$, from which it then
  follows that $I$ is $l$-indiscernible. To show $\type{S}{\bar{i}} =
  \type{S}{\bar{i'}}$ it suffices to show that for any $R \in S$, with
  $\arity(R) = k'$, and any $j_1,\ldots,j_{k'} \in [1,l]$ it holds that
  $R(i_{j_1},\ldots,i_{j_k'})$ holds in $S$ iff
  $R(i_{j_1}',\ldots,i_{j_k'}')$ holds in $S$. This, however,
  immediately follows from the fact that $k' \leq k$, $I$ is
  $k$-indiscernible, and hence $\type{S}{i_{j_1},\ldots,i_{j_k'}} =
  \type{S}{i_{j_1}',\ldots,i_{j_k'}'}$. Therefore, $\type{S}{\bar{i}} =
  \type{S}{\bar{i'}}$ and thus $I$ is $l$-indiscernable.

  We next show that the type of an ordered $l$-tuple $\bar{i} =
  (i_1,\ldots,i_l)$ is already completely defined by the type of its
  first $k$ elements $i_1$ to $i_k$. Indeed, the type of $\bar{i}$ is
  completely defined by determining, for every relation $R$, with
  $\arity(R) = k'$, and $j_1,\ldots,j_{k'} \in [1,l]$ whether
  $R(i_{j_1},\ldots,i_{j_k'})$ holds in $S$. However, as $k' \leq k$,
  the set $\{i_{j_1},\ldots,i_{j_k'}\}$ contains less than $k$
  different elements and hence as $I$ is $k$-indiscernable, we can
  determine whether $R(i_{j_1},\ldots,i_{j_k'})$ holds in $S$ by
  looking at $\type{S}{i_1,\ldots,i_k}$, the type of its first $k$
  elements.
\end{myproof}

Clearly, the number of possible different
$k$-types is bounded by a number only depending on the schema of $P$.

\paragraph{\bf Observation 3}
{\em Let $S_1,S_1'$ be states with universes of size $n$ and  $n'$,
   respectively and assume that
   $\type{S_1}{i,\ldots,i+l}=\type{S'_1}{i',\ldots,i'+l}$. 
 Let $S_2$ and $S_2'$ be the states resulting
   from  $S_1$ and $S_1'$ by inserting the same symbol $\sigma$ at
   positions $i$ and $i'$, respectively. Then
   $\type{S_1}{i+1,\ldots,i+l}=\type{S'_1}{i'+1,\ldots,i'+l}$.  
}

This observation can be proved along the same lines as the proof of
Observation~1. \medskip

The automaton for $L'$ now is defined as follows.  We call a type
$\tau$ of ordered $k$-tuples \emph{allowed} if there is a (not
necessarily reachable) state $S$ with elements $1,\ldots,k+1$ for
which every ordered $k$-tuple is of type $\tau$.  Let $Q$ be the set
of allowed types of ordered $k$-tuples. For each such type $\tau$ and
each symbol $\sigma$ let $\delta(\tau,\sigma)$ be determined as
follows: Let $S$ be a state\footnote{The states of $P$ should not be
  confused with the states of $A$. We reserve the word ''state'' for
  the former and refer to the latter as $A$-states.} with elements
$\bar i=1,\ldots,k+1$ in which every ordered $k$-tuple is of type
$\tau$. Let $S'$ be the state reached from $S$ after the update
$\ins_\sigma(1)$. Then $\delta(\tau,\sigma)$ is
$\type{S'}{2,\ldots,k+1}$. This new type is also allowed, which can be
seen as follows. Because $\tau$ is an allowed type, the set $\{1,
\dots, k+1\}$ was $k$-indiscernable before the update, and hence
$k'$-indiscernable for any $k' \leq k$. Therefore also the set
$\{2,\dots,k+1\}$ has to be $k'$-indiscernable, for any $k' \leq k$
after the update operation. Now we can add one more element $k+2$ and
define the auxiliary relations of all tuples containing $k+2$ just
like any arbitrary other tuple (not containing $k+2$) with the same
ordering on the elements.  Let $F$ be the set of types for which
$\accept$ holds. Then $A=(Q,\delta,\tau_0,F)$, where $\tau_0$ is
$\type{E'_k}{1,\ldots,k}$. Notice that as the number of $k$-types is
bounded, $A$ is indeed a finite automaton.

We now argue that $L(A) = L'$. Thereto, consider any word $w =
\sigma_1\cdots\sigma_n$, and the associated update sequence $\alpha_w
= \ins_{\sigma_1}(1)\cdots\ins_{\sigma_n}(n)$. Now, we consider an execution
of $P$ on this update sequence in a universe of size $n+k$. Then,
$\word(\alpha_w(E'_{n+k})) = w$, and hence $\alpha_w(E'_{n+k}) \models
\accept$ iff $w \in L'$. Using the observations above it can now be
shown that, for any $i \in [0,n]$, it holds in state
$\alpha_{w[1,i]}(E'_{n+k})$ that (1) the set $\{i+1,\ldots,n+k\}$ is
$l$-indescernable, for any $l$; and (2) $\delta(w[1,i],\tau_0)$ is exactly
the $k$-type of the set $\{i+1,\ldots,i+k\}$, determining the type of
the entire set $\{i+1,\ldots,n+k\}$. As $\tau \in F$ iff $\accept$
holds in $\tau$, it follows that $w \in L(A)$ iff $\alpha_w(E'_{n+k})
\models \accept$.
\qed

%%%%

\begin{remark}Proposition \ref{prop:propreg} is a powerful tool for
proving lower bounds as it, of course, shows that, for every non-regular language
$L$, $\dynamic{L}\not\in\dynprop$. 
\end{remark}

The proof of Proposition \ref{prop:propreg} intuitively relies on the
fact that all remaining string positions cannot be distinguished
before they are set. Using a Ramsey argument, this idea can be generalized to the setting with
precomputations, thus showing that (relational) precomputations do not increase the expressive power of \dynprop-programs. 
This fact and the above two propositions can then be combined into the following theorem.

\begin{theorem}\label{theo:dynprop-reg}
  Let $L$ be a language. Then, the following are equivalent:
  
\begin{enumerate}
	\item $L$ is regular;
	\item $\dynamic{L} \in \dynprop$; and
        \item $\dynamic{L} \in \dynprop(\Rel,\Rel)$. 
\end{enumerate}
\end{theorem}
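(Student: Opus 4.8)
The plan is to close the cycle $(1)\Rightarrow(2)\Rightarrow(3)\Rightarrow(1)$. The implication $(1)\Rightarrow(2)$ is exactly Proposition~\ref{prop:regprop}, and $(2)\Rightarrow(3)$ is immediate from $\dynprop=\dynprop(\bot,\Rel)\subseteq\dynprop(\Rel,\Rel)$. All of the work is in $(3)\Rightarrow(1)$: one must show that precomputed relations cannot help a quantifier-free program recognize a non-regular dynamic language, which upgrades Proposition~\ref{prop:propreg} to the presence of built-in relations (this is the Ramsey argument alluded to in the remark preceding the theorem).

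So assume $\dynamic{L}$ is recognized by a $\dynprop(\Rel,\Rel)$-program $P$, let $k$ be the maximal arity of an auxiliary relation of $P$, and let $c$ be the (schema-dependent, hence $n$-independent) bound on the number of types of ordered $k$-tuples. The single place where the proof of Proposition~\ref{prop:propreg} breaks down is Observation~1: in the initial state $E_n'$ the precomputed relations may distinguish positions, so $\{i+1,\dots,n\}$ need not be $k$-indiscernible. I would repair this with a Ramsey argument. Viewing a universe $\{1,\dots,n\}$ as carrying only the precomputed relations of $P$, colour every $k$-element subset by the type of its increasing enumeration; this uses at most $c$ colours, and note that such a type also fixes all relations on shorter and on degenerate tuples drawn from the set. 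By the hypergraph version of Ramsey's theorem there is a function $N(m)$ such that for $n\ge N(m)$ there is a set $I\subseteq\{1,\dots,n\}$ with $|I|=m$ all of whose ordered $k$-tuples share a single type; since there are only $\le c$ such types, by pigeonhole one fixed type $\tau_0$ arises this way for arbitrarily large $n$.

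The key observation is that, because every update formula and the formula for $\accept$ is quantifier free, performing an update $\op(y)$ with $y\in I$ and re-evaluating a relation on a tuple over $I$ only inspects relations applied to subtuples of $I$. Hence, as long as $P$ is run on updates located only at positions of $I$, its computation never leaves $I$, and Observations~1--3 together with the construction of the automaton $A=(Q,\delta,\tau_0,F)$ go through essentially verbatim, with ``$\{i+1,\dots,n\}$'' replaced throughout by ``the elements of $I$ not yet used for an update'' (the homogeneity of $I$ is precisely what makes the base case of Observation~1 hold). Then, given a word $w=\sigma_1\cdots\sigma_m$, I would pick $n\ge N(m+k)$ realizing $\tau_0$, take the corresponding $I=\{p_1<\cdots<p_{m+k}\}$, and run $P$ from $E_n'$ on $\alpha_w=\ins_{\sigma_1}(p_1)\cdots\ins_{\sigma_m}(p_m)$. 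Since all other positions keep the empty symbol, $\word(\alpha_w(E_n'))=w$, so $\alpha_w(E_n')\models\accept$ iff $w\in L$, while by the indiscernibility facts this happens iff the $k$-type reached after reading $w$ lies in $F$, i.e.\ iff $w\in L(A)$. Thus $L=L(A)$ is regular.

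I expect the main obstacle to be the bookkeeping around the Ramsey step: one must (i) keep the number of colours uniform in $n$ so that a single homogeneous type $\tau_0$ serves infinitely many universe sizes, and (ii) check carefully that the quantifier-free update dynamics really is ``closed'' on the homogeneous set $I$ — including for the degenerate tuples that appear when variables are identified — so that every ingredient of the proof of Proposition~\ref{prop:propreg} transfers to the relativized setting. Everything else is a direct re-run of that proof.
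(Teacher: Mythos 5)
Your proposal is correct and follows essentially the same route as the paper: the only nontrivial implication is $(3)\Rightarrow(1)$, which the paper also handles by colouring $k$-element subsets of the initial state with their ordered $k$-type and applying the hypergraph Ramsey theorem to extract a $k$-indiscernible set on which the automaton construction of Proposition~\ref{prop:propreg} is re-run. Your extra care about fixing a single homogeneous type $\tau_0$ across universe sizes (via pigeonhole) and about the quantifier-free updates being ``closed'' on the homogeneous set is a welcome sharpening of details the paper leaves implicit, but it is not a different argument.
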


\proof 
The only thing left to prove is that for any language $L'$ such that
$L=\dynamic{L'}$ is recognized by a $\dynprop(\Rel,\Rel)$ program, it
holds that $L'$ is regular.  Thereto, we extend the technique of the
proof of Proposition~\ref{prop:propreg} to also handle \dynprop
programs with precomputations. The proof is a generalization of that
proof by a Ramsey argument.

To this end, let $P$ be a $\dynprop(\Rel,\Rel)$ program recognizing a dynamic language
$L=\dynamic{L'}$ and let $k\ge 1$ be the highest arity of any
auxiliary or initial auxiliary relation of $P$. Again, our goal is to
construct a finite automaton for $L'$ thus showing that $L'$ is regular.

The key to the proof is the following observation.
\paragraph{\bf Observation 1'}
{\em  For each $n$ there is some $m$ such that for every state $S$ over a universe of size $m$
  there is a $k$-indiscernible set $I$ of size $n$.
}
\proof This observation can be proved using a version of Ramsey's
theorem for hypergraphs~\cite{book:RamseyTheory}: \emph{Given a number $c$ of colors
  and a natural number $n$ there exists a number $R_c(n)$ such that if
  the edges of a complete $k$-hypergraph (all edges are of size $k$)
  with $R_c(n)$ vertices are colored with $c$ colors, then it must
  contain a complete sub-$k$-hypergraph with $n$ vertices whose edges
  are all colored with the same color.}

Let $c$ be the number of different ordered $k$-types (which only
depends on the number and arity of the initial auxiliary
relations). Then $m$ can be chosen as $R_c(n)$. Consider a state $S$
over a universe of size $m$. Construct a hypergraph $G$ as follows. As
the vertex set use the set of universe elements and add for every set
of elements of size $k$ a $k$-hyperedge colored with its
$k$-type. This leads to a complete $k$-hypergraph for which the vertex
set of each complete monocolored sub-$k$-hypergraph corresponds to a
$k$-indiscernable set. By Ramsey's theorem, $G$ must contain a
monocolored sub-$k$-hypergraph of size at least $n$ and hence $S$
contains a $k$-indiscernable set $I$ of size $n$. \qed

We only consider computations of $P$ which set the elements of some
$k$-indiscernible set in a left-to-right fashion. The automaton $A$ is
constructed similarly as in the proof of Proposition
\ref{prop:propreg}. Now for every string $w$ of some length $n$ there
is, by Observation 1', an $m$ such that every state over $m$
elements has a $k$-indiscernible set $I = \{i_1,\ldots,i_{n+k}\}$ of
size $n+k$. By considering the left-to-right update sequence $\delta_w
= \ins_{\sigma_1}(i_1)\cdots\ins_{\sigma_n}(i_n)$ which sets the word
$w = \sigma_1\cdots\sigma_n$ on the elements of $I$, in a universe of
size $m$, it is easy to
show that $w\in L'$ if and only if $w$ is accepted by $A$.  \qed

\section{Dynamic Complexity of Context-free Languages}\label{sec:cfl}

In the previous section we have seen that the regular languages are
exactly those languages that can be recognized by a \dynprop
program. In this section, we will study the dynamic complexity of
context-free languages.  We first show that any context-free language can be maintained
in \dynFO. Later on, we exhibit languages that can be maintained in
\dynQF or a weak extension of \dynprop.

\begin{theorem}\label{theo:cflinfo}
  Let $L$ be a context-free language. Then, $\dynamic{L}$ is in \dynFO.
\end{theorem}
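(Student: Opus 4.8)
The plan is to maintain, for every nonterminal $X$ of a fixed context-free grammar $G$ in Chomsky normal form generating $L$, a binary auxiliary relation $D_X$ with the intended meaning $D_X(i,j) \equiv i < j \text{ and } X \Rightarrow^* w\intc{i+1}{j-1}$, exactly as in the regular case (Proposition~\ref{prop:regprop}) the offsets are shifted so that $D_X(i,j)$ speaks about the open interval $\into{i}{j}$. Acceptance is then recorded by a 0-ary relation that is true iff $D_S(0,n+1)$ holds for the start symbol $S$ (using two sentinel positions, or equivalently by quantifying; since we are in \dynFO this is harmless). Because CNF derivations are built from a single split point, the static correctness of these relations is the usual CYK recurrence: $D_X(i,j)$ holds iff either $j = i+2$ and $X \to W(i+1)$ is a production, or there is a split position $m$ with $i < m < j$ and a production $X \to YZ$ such that $D_Y(i,m)$ and $D_Z(m,j)$. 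The point of the shifted offsets is the same as in Example~\ref{example:reg}: after an update at position $y$ strictly between $x_1$ and $x_2$, the new value of $D_X(x_1,x_2)$ must be expressible using only the old relations evaluated on sub-intervals that do \emph{not} contain $y$ as an endpoint-interior point, together with the knowledge of which symbol now sits at $y$.

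The core of the proof is writing the update formulas. After $\ins_\sigma(y)$, for a pair $x_1 < x_2$ with $y \notin \into{x_1}{x_2}$ the relation is unchanged, so the formula is $\big(y \notin \into{x_1}{x_2} \land D_X(x_1,x_2)\big)$ disjoined with the case $y \in \into{x_1}{x_2}$. In that case the derivation of $w\intc{x_1+1}{x_2-1}$ either has its top split at $y$ itself --- contributing $\bigvee_{X \to YZ} \big(D_Y(x_1,y) \land (\text{position } y \text{ alone is derivable from some } Z')\big)$, handled via a unary relation or a disjunction over productions $Z \to \sigma$ --- or the top split is at some other position, in which case we existentially quantify a split point $m \in \into{x_1}{x_2}$, $m \neq y$, and branch on whether $y$ lies left or right of $m$, using $D$-values on intervals with $y$ interior to exactly one side. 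This is where first-order quantification (which we are allowed, unlike in \dynprop) does the real work: the split point $m$ is quantified, and on the side containing $y$ we recursively appeal to the fact that $D$ on that sub-interval was itself correctly maintained --- but note there is no recursion in the \emph{formula}; it is a single first-order formula referring to the previous state's relations. The $\reset$ update is analogous but simpler, with the case "split at $y$" dropped (and replaced by re-deriving the now-empty position, i.e., treating $W(y)=\epsilon$). One should also handle, as in the regular proof and via Lemma~\ref{lem:init}, the initialization: in $E_n'$ set $D_X(i,j)$ to true exactly when $X \Rightarrow^* \epsilon$ and $i<j$, which for a CNF grammar means $X = S$ and $L$ contains $\epsilon$, plus the degenerate $i+1 > j-1$ case.

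The main obstacle, and the place to be careful, is the "split at $y$" and "$y$ on one side of the split" bookkeeping: one must verify that every sub-interval whose $D$-value is consulted either avoids $y$ entirely or has $y$ strictly interior (never as an endpoint in a way that would make the old value meaningless), so that the shifted-offset convention makes the recurrence go through; this is exactly the subtlety flagged after Example~\ref{example:reg}, now with an extra existential quantifier for the CYK split. I would first fix the offset convention and the CYK recurrence on static strings, then check it is preserved under $\ins_\sigma$ by a straightforward case analysis on the position of $y$ relative to the top split of a witnessing derivation, then do $\reset$, then assemble the automaton-free argument that $P$ recognizes $\dynamic{L}$ by induction over update sequences (the acceptance relation is correct at each step because $D_S$ is). Everything else --- quantifier rank, number of relations (one per nonterminal), correctness of initialization --- is routine.
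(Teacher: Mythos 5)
Your plan rests on maintaining only the binary CYK relations $D_X(i,j)\equiv X\rightarrow^* w\into{i}{j}$, and this is where it breaks. Consider the update $\ins_\sigma(y)$ with $y\in\into{x_1}{x_2}$. A derivation of the new substring from $X$ places the new symbol $\sigma$ at a leaf $Z\to\sigma$, and what you must decide is whether $X\rightarrow^* w[x_1+1,y-1]\,Z\,w[y+1,x_2-1]$ holds for the \emph{old} string on the two flanks. Your formula tries to reach this by guessing the top split $m$ and then, on the side containing $y$, ``appealing to the fact that $D$ on that sub-interval was itself correctly maintained.'' But the value you need there is the \emph{new} value of $D$ on a strictly smaller interval still containing $y$ — the old value refers to the old symbol at $y$ and is simply the wrong thing. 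Since all new relations are computed simultaneously from the old state, you cannot use it, and unfolding the dependence by hand peels off only one production of the spine from the root to position $y$ per quantifier block; that spine has unbounded length, so no single first-order formula over the old binary relations can express the condition. (Equivalently: the ``derivation with a hole'' relation is a least-fixed-point over the CYK relations, not an FO-definable consequence of them.) Your special case ``top split at $y$'' is also incomplete as written, since it accounts for the part left of $y$ and the single position $y$ but not for $w[y+1,x_2-1]$.

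The missing idea — and the heart of the paper's proof — is to make the hole a first-class auxiliary relation: maintain the 4-ary relations $R_{X,Y}(i_1,i_2,j_1,j_2)$ meaning $X\rightarrow^* w[i_1,j_1-1]\,Y\,w[j_2+1,i_2]$. Then acceptance after $\ins_\sigma(z)$ is just $\bigvee_{Z\to\sigma}R_{S,Z}(\mincons,\maxcons,z,z)$ evaluated on the old state, and the real work goes into updating $R_{X,Y}$ itself, which requires existentially quantifying three auxiliary split points to recombine the old hole $[y_1,y_2]$ with the new hole created at $z$ (this is where \dynFO's quantifiers are genuinely used). I would recommend reworking your argument around this richer invariant; the binary relations alone do not carry enough information to be incrementally maintainable.
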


\proof 

  Let $L$ be a context-free language defined by grammar $G = (V,S,D)$
  over an alphabet $\Sigma$. Here, $V$ is the set of non-terminals, $S
  \in V$ is the initial non-terminal, and $D$ is the set of derivation
  rules. W.l.o.g. we assume that $G$ is in chomsky normal form,
  i.e. every rule in $D$ is either of the form $U \rightarrow XY$,
  with $X,Y \in V$, $U \rightarrow a$, with $a \in \Sigma$, or $U
  \rightarrow \epsilon$. Further, w.l.o.g., we assume that there is a
  distinguished non-terminal $E \in V$ such that $E \rightarrow
  \varepsilon$ and for all $U \in V$, $U \rightarrow UE$ and $U
  \rightarrow EU$. For $U \in V$, and $w \in (V \cup
  \Sigma)^*$, we denote by $U \rightarrow^* w$ that $w$ can be derived
  from $U$. Then, $L(G) = \{w \mid w \in \Sigma^* \land S \rightarrow^*
  w\}$.

  Our dynamic program $P$ recognizing $L$ will maintain for all $X,Y
  \in V$ the following relation: $$R_{X,Y} = \{(i_1,i_2,j_1,j_2) \mid
  [j_1,j_2] \subseteq [i_1,i_2] \land X \rightarrow^* w[i_1,j_1-1] Y
  w[j_2+1,i_2]\}$$

  Intuitively, $(i_1,i_2,j_1,j_2) \in R_{X,Y}$ implies that, assuming
  $Y \rightarrow^* w[j_1,j_2]$, it follows that $X \rightarrow^*
  w[i_1,i_2]$. Notice also that, due to our assumptions above, we have $X
  \rightarrow w[i_1,j_1-1]w[j_2+1,i_2]$ iff $R_{X,E}(i_1,i_2,j_1,j_2)$.
\medskip

  We will now state the update formulae. For each $\sigma \in \Sigma$, and
  $X,Y \in V$ the update formula for $\phi_{\ins_\sigma}^{R_{X,Y}}(z;x_1,x_2,y_1,y_2)$ is
\[
 {[y_1,y_2]} \subseteq {[x_1,x_2]} \land \phi_1 \land \phi_2 \land \phi_3 
\]
where $\phi_1$, $\phi_2$, and $\phi_3$ are defined according to the
  position of $z$ with respect to the other variables:
\begin{align*}
\phi_1 \equiv\;&
  (z \notin [x_1,x_2] \vee z \in [y_1,y_2]) \land R_{X,Y}(x_1,x_2,y_1,y_2)\\
\intertext{In this situation the truth value
    of $R_{X,Y}(x_1,x_2,y_1,y_2)$ is not modified.}
\phi_2 \equiv\;&
   z \in {[x_1,y_1[}\; \land \\
   \bigvee_{\substack{Z,U,U_1,U_2 \in V \\ Z\rightarrow \sigma, U\rightarrow U_1U_2 \in D}}
    & \begin{minipage}{12cm}
    $\exists u_1,u_2,u_3:\;  u_1 \leq u_2 < u_3 \land u_1,u_2 \in {[x_1,y_1[} \land  u_3 \in [y_2,x_2]\; \land  $
    \medskip\\
    $R_{X,U}(x_1,x_2,u_1,u_3)  \land R_{U_1,Z}(u_1,u_2,z,z) \land R_{U_2,Y}(u_2+1,u_3,y_1,y_2)$
    \end{minipage}\\
%    & \exists u_1,u_2,u_3:\;  u_1 \leq u_2 < u_3 \land u_1,u_2 \in
%  {[x_1,y_1[} \land  u_3 \in [y_2,x_2]\;
%  \land  \\
%  &  R_{X,U}(x_1,x_2,u_1,u_3)
%  \land R_{U_1,Z}(u_1,u_2,z,z) \land R_{U_2,Y}(u_2+1,u_3,y_1,y_2)
\intertext{Here the value of $R_{X,Y}(x_1,x_2,y_1,y_2)$ can be
  modified. Figure~\ref{fig:cflUpdate} illustrates this situation. The
  situation if $z \in {]y_2,x_2]}$ and the corresponding formula for $\phi_3$ is quite alike.}
\end{align*}

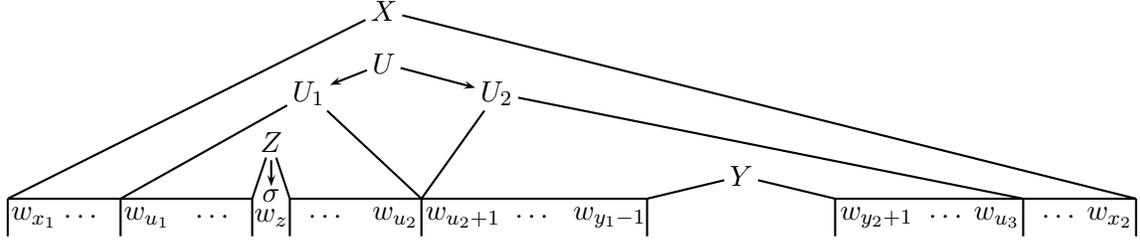
\begin{figure}
\begin{center}
  \begin{pspicture}(15,3.75)
    \pnode(0,1){X1}
    \pnode(1.5,1){U1}
    \pnode(3.25,1){Z1}
    \pnode(3.75,1){Z2}
    \pnode(5.5,1){U2}
    \pnode(8.5,1){Y1}
    \pnode(11,1){Y2}
    \pnode(13.5,1){U3}
    \pnode(15,1){X2}
    \pnode(0,0.5){DX1}
    \pnode(1.5,0.5){DU1}
    \pnode(3.25,0.5){DZ1}
    \pnode(3.75,0.5){DZ2}
    \pnode(5.5,0.5){DU2}
    \pnode(8.5,0.5){DY1}
    \pnode(11,0.5){DY2}
    \pnode(13.5,0.5){DU3}
    \pnode(15,0.5){DX2}
    \ncline{DX1}{X1}
    \ncline{DU1}{U1}
    \ncline{DZ1}{Z1}
    \ncline{DZ2}{Z2}
    \ncline{DU2}{U2}
    \ncline{DY1}{Y1}
    \ncline{DY2}{Y2}
    \ncline{DU3}{U3}
    \ncline{DX2}{X2}
    \rput(5,3.5){\rnode{X}{$X$}}
    \rput(5,2.8){\rnode{U}{$U$}}
    \rput(4,2.4){\rnode{BU1}{$U_1$}}
    \rput(6.5,2.4){\rnode{BU2}{$U_2$}}
    \rput(3.5,1.75){\rnode{Z}{$Z$}}
    \rput(3.5,1.05){\rnode{SIG}{$\sigma$}}
    \rput(9.75,1.3){\rnode{Y}{$Y$}}
    \ncline[nodesepB=2pt]{X1}{X}
    \ncline[nodesepB=2pt]{X2}{X}
    \ncline[nodesepB=2pt]{U1}{BU1}
    \ncline[nodesepB=2pt]{U2}{BU1}
    \ncline[nodesepB=2pt]{U2}{BU2}
    \ncline[nodesepA=3pt, nodesepB=1pt]{->}{Z}{SIG}
    \ncline[nodesepA=2pt]{Y}{Y1}
    \ncline[nodesepA=2pt]{Y}{Y2}
    \ncline[nodesepB=2pt]{U3}{BU2}
    \ncline{X1}{Z1}
    \ncline[nodesepA=2pt]{Z}{Z1}
    \ncline[nodesepA=2pt]{Z}{Z2}
    \ncline{Z2}{Y1}
    \ncline{Y2}{X2}    
    \ncline[nodesep=2pt]{->}{U}{BU1}
    \ncline[nodesep=2pt]{->}{U}{BU2}
    \rput(0.35,0.75){$w_{x_{1}}$}
    \rput(1,0.75){$\cdots$}
    \rput(1.85,0.75){$w_{u_{1}}$}
    \rput(2.75,0.75){$\cdots$}
    \rput(3.5 ,0.75){$w_{z    }$}
    \rput(4.25,0.75){$\cdots$}
    \rput(5.15,0.75){$w_{u_{2}}$}
    \rput(6.05,0.75){$w_{u_{2}+1}$}
    \rput(7,0.75){$\cdots$}
    \rput(8,0.75){$w_{y_{1}-1}$}
    \rput(11.55,0.75){$w_{y_{2}+1}$}
    \rput(12.5,0.75){$\cdots$}
    \rput(13.15,0.75){$w_{u_{3}}$}
    \rput(14,0.75){$\cdots$}
    \rput(14.65,0.75){$w_{x_{2}}$}
  \end{pspicture}
  \end{center}
  \caption{Update of $R_{X,Y}$ after operation $\ins_{\sigma}(z)$}
  \label{fig:cflUpdate}
\end{figure}

  For all $X,Y \in V$, the update formula
  $\phi_\reset^{R_{X,Y}}$ is defined very similar as the
  formula for $\phi_{\ins_\sigma}^{R_{X,Y}}$ above. Essentially, the only
  difference is that $Z$ (for which $Z \rightarrow \sigma \in D$) is
  replaced by $E$ (for which $E \rightarrow \varepsilon \in D$).

  We finally give the update
  formulae for the acceptance relation $\accept$:
\begin{align*}
  \accept_{\ins_\sigma}(z) &\equiv \bigvee_{\substack{Z \in V \\ Z \rightarrow
    \sigma \in D}} R_{S,Z}(\mincons,\maxcons,z,z) \\
\intertext{and}
  \accept_{\reset}(z) &\equiv R_{S,E}(\mincons,\maxcons,z,z).
\end{align*}

Notice that we have used many abbreviations in the above
formulae. However, these can all easily seen to be definable in
first-order logic using the built-in order. In particular, the
constants $\mincons$ and $\maxcons$ and the successor function
are definable and are hence not precomputed functions as
in other settings considered in this paper.
\qed

%%%% 

However, we cannot hope for an equivalence between \dynFO and the
context-free languages, as for \dynprop and the regular languages
before. This follows easily as opposed to the class of context-free
languages, \dynFO is closed under intersection and complement.
Furthermore, one can show that non-contextfree languages can be maintained in $\dynqf$ and $\dynprop(\SetSucc,\Rel)$.
This is because unary counters can be implemented easily by dynamic programs in these classes. 
Let \equaln{r} be the language over
the alphabet $\Sigma = \{a_1,\dots,a_r\}$ containing all strings with an
equal number of occurrences of each symbol $a_i$. Note that already \equaln{3} is not
context-free. Using the counters one can prove the following

\begin{proposition}\label{prop:non-cf-func} \hfill
\begin{enumerate}
	\item  $\dynamic{\equaln{r}} \in \dynprop(\SetSucc,\Rel)$
	\item  $\dynamic{\equaln{r}} \in \dynqf$
\end{enumerate}
\end{proposition}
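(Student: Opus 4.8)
The plan is to realize the languages $\equaln{r}$ by maintaining, for each symbol $a_i$, a ``unary counter'' recording the number of occurrences of $a_i$ in the current string, and to show that such counters can be maintained both in $\dynprop(\SetSucc,\Rel)$ and in $\dynqf$. The acceptance condition is then simply that all $r$ counters are equal, which is a comparison of a bounded number of counters. I would first fix the representation: a count $c\in\{0,\dots,n\}$ of occurrences of $a_i$ will be represented by a position $p_i$ in the universe $\{1,\dots,n\}$, with the convention that $p_i$ is the $c$-th element (so $c=0$ is the minimal element and incrementing/decrementing is applying $\succM$/$\preM$). In the $\dynqf$ setting $p_i$ is just an auxiliary constant (a nullary function); in the $\dynprop(\SetSucc,\Rel)$ setting one instead maintains a unary relation $P_i$ that is a singleton $\{p_i\}$, and uses $\succM$, $\preM$, $\minM$ to move the marked position.

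The core work is the update formulas. An operation $\ins_{a_i}(y)$ potentially changes two counters: it may increment the counter for $a_i$, and it may decrement the counter for whatever symbol previously sat at position $y$. So each update formula must, using only quantifier-free logic (plus $\succM$/$\preM$ in the first setting, plus the function machinery in the second), branch on which symbol $w[y]$ held before the update — this is a bounded disjunction over $\Sigma$, decided by the relations $R_\sigma(y)$ — and then set each $p_j$ to $p_j$, $\succM(p_j)$, or $\preM(p_j)$ accordingly. For $\dynqf$ this is a straightforward $\ite$-term over the atoms $R_{a_j}(y)$; for $\dynprop(\SetSucc,\Rel)$ one writes $\phi^{P_j}_{\ins_{a_i}}(y;x) \equiv \bigvee_{\text{cases}} (x = t \land P_j(\dots))$-style formulas where $t$ is the appropriate $\succM$/$\preM$ applied to the old marked element; since $P_j$ is a singleton this correctly moves the mark. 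The $\reset(y)$ operation only ever decrements one counter (the one for $w[y]$, if any), handled the same way. One must also be careful that $\succM$ saturates at $n$ and $\preM$ at $1$, but since the true count never exceeds $n$ and never drops below $0$, the representation stays faithful; the initial state $\emptyws'$ has all counts $0$, i.e.\ all marks at $\minM$, which matches the default initialization (functions map to the first element; relations can be initialized to $\{\min\}$ via Lemma~\ref{lem:init}).

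Finally, the acceptance formula: $\accept$ should hold iff $p_1 = p_2 = \dots = p_r$. In $\dynqf$ this is the quantifier-free formula $\bigwedge_{j<r} (p_j = p_{j+1})$ over the constants. In $\dynprop(\SetSucc,\Rel)$, since equality of two singletons $P_j$, $P_{j+1}$ is not directly a quantifier-free atom over relations, I would instead maintain $r-1$ auxiliary $0$-ary relations $\equal_j$ recording $p_j = p_{j+1}$, updated alongside the marks: whenever an update moves some marks, the bounded case analysis also determines in a quantifier-free way whether the new marked positions coincide (e.g.\ $\succM(x)=\succM(x')$ reduces to $x=x'$ away from the top, etc.), so $\equal_j$ can be refreshed. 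Then $\accept \equiv \bigwedge_{j} \equal_j$.

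The main obstacle I expect is purely the bookkeeping in $\dynprop(\SetSucc,\Rel)$: keeping the equality bits $\equal_j$ correctly synchronized with the moving singleton marks under all cases of $\ins_{a_i}$ and $\reset$, including the boundary behavior of $\succM$ at $n$ (two distinct large counts could become ``equal'' if one hits the ceiling — but this never happens because the counts are genuine occurrence counts bounded by the string length, so I must state and use that invariant explicitly). Everything else is a routine finite case analysis over $\Sigma$, and the $\dynqf$ part is essentially immediate once the $\ite$-term representation is set up. I would therefore organize the proof as: (i) fix the counter representation and state the invariant; (ii) give the mark-update terms/formulas; (iii) give the $\equal_j$ updates and the acceptance formula; (iv) check correctness by induction on the update sequence.
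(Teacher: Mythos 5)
Your overall idea (unary counters moved by successor/predecessor, acceptance by comparing them) is in the same spirit as the paper's proof, but as written the proposal has three concrete problems, two of which are exactly the points the paper's proof is organized around. First, for part (2): $\succM$, $\preM$ and $\minM$ are \emph{not available} in \dynqf --- that is precisely what distinguishes part (2) from part (1). Your \dynqf update terms apply $\succM(p_j)$ and $\preM(p_j)$ as if these were built in. The paper's proof of part (2) consists essentially of constructing $\minM$, $\succM$, $\preM$ on the fly as updatable auxiliary functions defined on the ``active'' positions (those already touched by an update), and then verifying that the counter values never exceed the number of active elements, so the partial successor structure suffices. This construction is the whole content of part (2) and is absent from your proposal. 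Second, for part (1): in $\dynprop(\SetSucc,\Rel)$ the update formula for a $0$-ary relation $\equal_j$ only has the update position $y$ as a free variable, so ``the two singletons $P_j,P_{j+1}$ coincide'' cannot be read off the relations without a quantifier. To refresh $\equal_j$ after a step you need to know whether the two marks were at distance $0$, $1$ or $2$ before the step; to maintain ``distance $2$'' you need ``distance $3$''; and so on. The regress only terminates if you store the exact pairwise difference --- which is why the paper does not keep one counter per symbol but maintains the \emph{difference} $|\sharp a_1-\sharp a_2|$ as a singleton-marked unary relation $C$ together with two sign flags $A_1,A_2$, and tests acceptance by both flags being false. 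Your representation has to be changed to this one (or something equivalent) for the $\equal_j$ bookkeeping to go through quantifier-freely.

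Third, there is an off-by-one you explicitly wave away with an incorrect justification. A per-symbol count ranges over $\{0,\dots,n\}$, which is $n+1$ values, while your marks range over $n$ elements; so some count does hit the ceiling (fill the universe with $a_1$), the mark saturates at $n$, and from then on it is permanently off by one: after resetting all but one position, both marks sit at $\minM$ and the program accepts the string $a_1\notin\equaln{2}$. ``The counts are bounded by the string length'' does not save you, since the string length can be $n$. The paper avoids this by letting the \emph{empty} relation $C$ encode difference $0$ (so $n+1$ values fit), and by the normalization that $\ins_\sigma(y)$ is only applied to empty positions and $\reset(y)$ only to occupied ones. All three issues are repairable, but the repairs are not routine bookkeeping: they amount to switching to the paper's difference-counter representation for part (1) and supplying the on-the-fly successor construction for part (2).
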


\proof

In both cases, we just prove the proposition for the language
\equaln{2}. The general case then is an easy generalization of this
proof. 

We will maintain the language \equaln{2} by implementing a unary
counter, which can be done in $\dynprop(\SetSucc,\Rel)$. This counter
will count the difference of the number of occurences of the symbols $a_1$ and
$a_2$ in the string. For $i \in [1,2]$, let $\sharp a_i$ denote the
number of $a_i$s in the current string. We then maintain the following relations:
\begin{itemize}
	\item Nullary relations (flags) $A_1$ and $A_2$ such that
          $A_1$ is true iff $\sharp a_1 > \sharp a_2$ and $A_2$ is
          true iff $\sharp a_2 > \sharp a_1$.
	\item A unary relation $C$ such that $C(i)$ is true iff
          $|\sharp a_1-\sharp a_2|=i$. Hence, as the universe consists
          of the elements $\{1,\ldots,n\}$ at each time $C$ is true
          for one value $i$ if $\sharp a_1 \neq \sharp a_2$ and is
          false for all $i$ iff $\sharp a_1 = \sharp a_2$.
\end{itemize}
We will give the update functions for these relations only for the
case of the insertion of a symbol $a_1$. The deletion and the
$a_2$-case work similarly.

To simplify the presentation we will make the following assumption. We
assume that all update sequences are such that (1) whenever an
update $\reset(z)$ occurs, the position $z$ carried a symbol before
the update, and (2) whenever an update $\ins_\sigma(z)$ occurs, the
position $z$ was empty (i.e. did not carry a symbol). Although a
sequence of updates must not obey these restrictions, it is easy to
transform a program $P$ using these assumptions into an equivalent one
$P'$ which does not. Indeed, for the reset operation, $P'$ can test
whether $z$ used to be empty in which case it can return the original
value of the updated relation or function; or, if $z$ carried a symbol, it can
use the update functions of $P$. In the case of an insertion at a
position $z$ for which $z$ already carried a symbol, $P'$ can
\emph{simulate} what would happen if in $P$ consecutively the updates
$\reset(z)$ and $\ins_\sigma(z)$ would occur. Technically, this can be
achieved by replacing in all formulas $\phi_{\ins_\sigma}^R$ any
occurrence of a relation name $R'$ by $\phi_{\reset}^{R'}$. These
modified update formulas will then compute exactly the relations and
functions $P$ would compute after handling the updates $\reset(z)$ and
$\ins_\sigma(z)$.

Using this assumption, consider the update $\ins_{a_1}(x)$. Then, the
flags $A_1$ and $A_2$ can be updated as follows
\[
\phi_{\ins_{a_1}}^{A_1} \equiv \neg A_2 \text{ and }
\phi_{\ins_{a_1}}^{A_2} = A_2 \wedge \neg C(\minM).
\]
For the update of $C$ we distinguish three cases:
\begin{eqnarray*}
\phi_{\ins_{a_1}}^{C}(x) & \equiv & (\neg(A_1\vee A_2) \wedge x = \minM) \vee\\
			&   & (A_1 \wedge C(\preM(x)) \wedge x \neq \minM) \vee \\
			&   & (A_2 \wedge C(\succM(x))) 
\end{eqnarray*}
The acceptance query just tests whether both $A_1$ and $A_2$ are false
after the update. That is,
$$\phi_{\ins_{a_1}}^{\accept}(x) \equiv \neg \phi_{\ins_{a_1}}^{A_1}(x)
\wedge \neg \phi_{\ins_{a_1}}^{A_2}(x)$$
\bigskip

To proof (2), we will use the same algorithm as before. But, of
course, the algorithm makes extensive use of the functions of
$\SetSucc$, which are not available in $\dynQF$. Instead, we will use
the fact that in $\dynQF$ one can maintain functions to incrementally
construct the $\minM$, $\succM$ and $\preM$ functions.

Here, we do not require that the constructed $\minM$, $\succM$ and
$\preM$ functions are consistent with the order relation. Instead, $\minM$ will
be the first position where a symbol is inserted, its successor the
second such position etc. At each point in time, $\succM$ and $\preM$
therefore define a successor function on those positions that carry a
symbol or carried a symbol earlier. We will not give the precise
update functions which are necessary to construct these auxiliary
functions, but simply mention the ideas necessary to construct them.

Thereto, we additionally maintain a unary relation $\actdom$,
containing all \emph{active} elements currently included in the
successor function, and a constant (i.e. nullary function) $\text{\scshape max}$ denoting the last element of
the successor ordering. Recall that $\succM(\text{\scshape max}) = \text{\scshape max}$ and
$\preM(\minM) = \minM$ should hold by definition of our successor and
predecessor functions.

Then, when an update on an element $x$ occurs there are two
possibilities. Either $\actdom(x)$ already holds in which case nothing
has to be changed, or $\actdom(x)$ does not hold and hence $x$ has to
be added to the successor structure. This is done by setting
$\actdom(x)$, making $x$ the maximal element and setting the
predecessor and successor functions of $x$, $\minM$, and (the old)
$\text{\scshape max}$ corresponding to the new situation.

We finally argue that the program constructed above still works
properly when using these on-the-fly constructed functions instead of
the precomputed ones in $\SetSucc$. Thereto, notice that there are
only two differences. First, the constructed successor functions are
not consistent with the built-in order relation. However, as the
original program does not make use of this order relation, this does
not make a difference. Second, at any time the constructed successor
functions are only defined on $k$ elements, where $k$ is the number of
active elements. However, observe that whenever only $k$ elements are
active, the current string cannot contain more than $k$ symbols, and
hence $C(i)$ does not hold for $i > k$. It should be noted, however,
that $C(k)$ can hold. Therefore we should for every update
\emph{first} compute the new successor functions and use these newly
computed functions in the updates of the other relations. This can
also done without any problems, and hence we can conclude that the original
program remains to work correctly. \qed
 
%%%%%
\noindent From Proposition~\ref{prop:non-cf-func} and Theorem
~\ref{theo:dynprop-reg} one can conclude the following
\begin{corollary}\hfill
\begin{enumerate}
	\item  $\dynprop \subsetneq \dynprop(\SetSucc, \Rel)$
	\item  $\dynprop \subsetneq \dynqf$\qed
\end{enumerate}
\end{corollary}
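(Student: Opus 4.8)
The plan is to obtain both separations directly from the results already established, so there is little to do beyond assembling the pieces. First I would note that the two inclusions are trivial: $\dynprop=\dynprop(\bot,\Rel)$, $\dynqf=\dynprop(\bot,\Fun)$, and $\dynprop(\SetSucc,\Rel)$ is just $\dynprop$ augmented with the fixed built-in functions $\succM,\preM,\minM$; in each case a $\dynprop$-program is in particular a program of the richer type which leaves the additional features unused. Invoking the monotonicity of $\dynC(I,A)$ in both coordinates (noted in the definitions section) gives $\dynprop\subseteq\dynprop(\SetSucc,\Rel)$ and $\dynprop\subseteq\dynqf$.

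For strictness, I would exhibit a single dynamic language lying in both larger classes but not in $\dynprop$. The witness is $\dynamic{\equaln{2}}$ (equivalently, $\dynamic{\equaln{r}}$ for any $r\ge 2$). By Proposition~\ref{prop:non-cf-func}, $\dynamic{\equaln{r}}\in\dynprop(\SetSucc,\Rel)$ and $\dynamic{\equaln{r}}\in\dynqf$. Conversely, $\equaln{2}$ --- the set of strings over $\{a_1,a_2\}$ with equally many occurrences of each symbol --- is not regular (a standard pumping argument), so by the equivalence of items (1) and (2) in Theorem~\ref{theo:dynprop-reg} we get $\dynamic{\equaln{2}}\notin\dynprop$. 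Putting the two halves together yields $\dynprop\subsetneq\dynprop(\SetSucc,\Rel)$ and $\dynprop\subsetneq\dynqf$.

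There is no real obstacle; the only thing worth highlighting is that Theorem~\ref{theo:dynprop-reg} is doing all the work, since it is precisely the characterization ``$\dynamic{L}\in\dynprop$ iff $L$ is regular'' (robust even under relational precomputation) that turns an easy static lower bound into a dynamic separation. If desired, the very same argument run with $\equaln{3}$, which is not even context-free, additionally shows that neither $\dynprop(\SetSucc,\Rel)$ nor $\dynqf$ is contained in the class of dynamic languages with context-free underlying language, but this is not needed for the corollary as stated.
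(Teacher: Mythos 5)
Your proof is correct and follows exactly the paper's route: the corollary is derived from Proposition~\ref{prop:non-cf-func} (placing $\dynamic{\equaln{r}}$ in both larger classes) together with Theorem~\ref{theo:dynprop-reg} (which excludes it from $\dynprop$ since $\equaln{r}$ is not regular). Nothing is missing.
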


One can also get better upper bounds for the Dyck-languages,
the languages of properly balanced parentheses. For a set of opening
brackets $\{(_1, ..., (_n\}$ and the set of its closing brackets
$\{)_1, ..., )_n\}$ the language $D_n$ is the language produced by the
context free grammar: 
\[
S \rightarrow SS \mid (_1 S)_1 \mid ... \mid (_n S )_n \mid \epsilon
\]

\begin{proposition}\label{theo:dyckn}
  For every $n>0$, $D_n\in\dynQF$.
\end{proposition}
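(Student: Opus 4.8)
The plan is to maintain, for all positions $i<j$, the binary relation $B(i,j)$ that holds iff $w[i+1,j-1]\in D_n$, i.e.\ the content of the \emph{open} interval between $i$ and $j$ is a well-formed (balanced and type-consistent) bracket word. As in the regular- and context-free-language constructions earlier in the paper, it is essential that $B$ talks about the open interval, since then a single update at a position $y$ with $i<y<j$ can be analysed through $B$-facts about $\into{i}{y}$ and $\into{y}{j}$. In addition we maintain two auxiliary \emph{matching functions}: $\mathrm{cm}(z)$ is the unique position $m>z$, if one exists, such that $w[z+1,m-1]\in D_n$ and $m$ carries a closing bracket (and $\mathrm{cm}(z)=z$ otherwise), and $\mathrm{om}$ is its left--right mirror image. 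Uniqueness --- which makes $\mathrm{cm}$ a legitimate function --- uses the elementary fact that a $D_n$-word never begins with a closing bracket: hence among all $m$ with $w[z+1,m-1]\in D_n$ at most one carries a closing bracket. We also keep the first/last non-empty-position functions, a flag telling whether $w$ is empty, and unary variants of $B$ for prefixes and suffixes of the word; the successor/predecessor/min-style functions are maintainable with quantifier-free updates by the same on-the-fly bookkeeping used in the proof of Proposition~\ref{prop:non-cf-func}. Acceptance is then a quantifier-free condition on $B$, the prefix/suffix variants, and these flags.

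Given the functions $\mathrm{cm}$ and $\mathrm{om}$, the update of $B$ is quantifier free. Take $\ins_{(_a}(z)$. If $z\notin\into{x_1}{x_2}$ the content of that interval is unchanged. If $z\in\into{x_1}{x_2}$, the new content is $w[x_1+1,z-1]\,(_a\,w[z+1,x_2-1]$, which lies in $D_n$ iff $w[x_1+1,z-1]\in D_n$, the inserted $(_a$ is matched \emph{inside} $\into{x_1}{x_2}$, and the two halves it delimits are in $D_n$. By the uniqueness observation the only candidate partner of the inserted $(_a$ is $m=\mathrm{cm}(z)$ evaluated in the old state, which must then carry $)_a$ and satisfy $z<m<x_2$; so
\[
\phi^{B}_{\ins_{(_a}}(z;x_1,x_2)\;\equiv\;\bigl(z\notin\into{x_1}{x_2}\wedge B(x_1,x_2)\bigr)\;\vee\;\bigl(\mathrm{cm}(z)\in\into{z}{x_2}\wedge R_{)_a}(\mathrm{cm}(z))\wedge B(x_1,z)\wedge B(z,\mathrm{cm}(z))\wedge B(\mathrm{cm}(z),x_2)\bigr),
\]
a quantifier-free formula. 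The case $\reset(z)$ is the same with the inserted symbol replaced by the empty word (so $\mathrm{cm}$ is replaced by ``first non-empty position after $z$'' in the natural way), and inserting a closing bracket is symmetric via $\mathrm{om}$. The prefix/suffix variants and the acceptance relation are updated along the same lines.

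The real work is maintaining $\mathrm{cm}$ and $\mathrm{om}$ --- and, as auxiliary help, parent pointers in the matching forest --- across an update. A single insertion at $z$ can change the match of linearly many brackets, precisely those whose matched pair strictly enclosed $z$, so one must verify that the new values are still quantifier-free definable from the old functions. They are, because the change is a \emph{rigid shift}: the newly inserted opener takes over the partner of the innermost pair enclosing $z$, and every enclosing opener moves outward by one nesting level, inheriting the partner of its parent pair (the outermost enclosing opener may become unmatched). Concretely, for a position $x$ enclosing $z$ the new value $\mathrm{cm}'(x)$ is a fixed composition of the old functions such as $\mathrm{cm}(\mathrm{par}(x))$; positions not enclosing $z$ keep their value; and for $z$ itself one takes $\mathrm{cm}(\mathrm{par}(z))$, where $\mathrm{par}$ is maintained for every position (including empty ones) as ``the innermost matched pair strictly containing this position'' and is itself updated by the same rigid shift, all expressible with $\ite$ and composition of the old functions.

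I expect this last step to be the main obstacle: one has to make the rigid-shift description precise and check it in all cases --- inserting an opener versus a closer, deletions (the inverse shift), the situation where the inserted bracket is \emph{not} absorbed by any enclosing pair (so one reaches the ``top level'' and needs a pointer to the next top-level closer), and the ``fusion'' of several adjacent balanced blocks into one when a newly inserted closer matches far to the left --- and confirm that in each case the updated pointer functions, and hence $B$ and the acceptance bit, are quantifier-free definable from the old state. Everything else is routine: translating the above into update terms, a standard initialisation of the auxiliary relations and functions (cf.\ Lemma~\ref{lem:init}), and the usual reduction (as in Proposition~\ref{prop:non-cf-func}) to update sequences that only insert at empty positions and only reset non-empty ones.
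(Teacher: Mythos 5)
There is a genuine gap, and it sits in the step you treat as easy, not in the one you flag as the main obstacle. Your update formula for $B$ rests on the claim that $u\,(_a\,v$ lies in $D_n$ iff $u\in D_n$, the inserted $(_a$ is matched by some $)_a$ at a position $m$, and the two pieces of $v$ on either side of $m$ are in $D_n$. Only the ``if'' direction is true. Concretely, let the interval $\into{x_1}{x_2}$ currently read $(_1\,\square\,)_2\,)_1$, where $\square$ is the empty position $z$, and insert $(_2$ at $z$. The new content is $(_1(_2)_2)_1\in D_2$, yet your formula requires $B(x_1,z)$, i.e.\ $w[x_1+1,z-1]=(_1\in D_2$, which fails; so the formula answers incorrectly. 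The correct characterization is: the $(_a$ at $z$ matches some $)_a$ at $m$, the interval $\into{z}{m}$ is balanced, \emph{and the concatenation $w[x_1+1,z-1]\cdot w[m+1,x_2-1]$ is balanced}. That last condition concerns an interval with a hole cut out and is not determined by any collection of ``$B(i,j)$''-facts about contiguous subintervals. This is precisely why the paper's proof (like the general context-free construction of Theorem~\ref{theo:cflinfo}) maintains, in addition to the binary $R_1$, the 4-ary relation $R_2(i_1,i_2,j_1,j_2)$ meaning ``$w[i_1,i_2]$ with $w[j_1,j_2]$ excised is well-bracketed''; the same defect breaks your reset case ($uv\in D_n$ is not equivalent to $u\in D_n\wedge v\in D_n$) and, symmetrically, the insertion of closers.

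Your second step --- maintaining the matching functions $\mathrm{cm},\mathrm{om}$ and parent pointers by a ``rigid shift'' --- is also not established, as you acknowledge, and the shift picture is not accurate in general: an inserted opener at top level can acquire as partner a previously unmatched closer far to the right without having any ``parent pair'' to inherit from, and an inserted closer can fuse an unbounded number of adjacent balanced blocks. The paper sidesteps a single matching function altogether by maintaining the two-parameter counting functions $\fr(u,v)$, $\fl(u,v)$ (the first position right/left of $u$ at which $v$ pending brackets can be closed) together with $\gr,\gl$, whose updates compose only a constant number of old function values; the count parameter $v$, encoded via on-the-fly successor functions as in Proposition~\ref{prop:non-cf-func}, is what absorbs the unbounded relinking that defeats a plain one-argument match pointer. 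To repair your proof you would need both to add the 4-ary ``hole'' relation and to replace the rigid-shift maintenance of $\mathrm{cm},\mathrm{om}$ by something of this counting form.
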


\proof   The basic idea is similar to the proof of Theorem
  \ref{theo:cflinfo}. We maintain relations 
  $R_1$ and $R_2$ corresponding to $R_{S,E}$ and $R_{S,S}$ in the
  terminology of Theorem \ref{theo:cflinfo}. More precisely,
  $R_1(i_1,i_2)$ should hold if the current substring $w[i_1,i_2]$ is
  well-bracketed. Likewise, $R_2(i_1,i_2,j_1,j_2)$ should hold if the
  string $w[i_1,i_2]$ without the symbols at positions
  $j_1,\ldots,j_2$ is well-bracketed. Stated more formally,
  $R_2(i_1,i_2,j_1,j_2)$ should hold iff $S \rightarrow^*
  w[i_1,j_1-1]\, S\, w[j_2+1,i_2]$.

Nevertheless, the update formulas in the proof of Theorem~\ref{theo:cflinfo} make extensive use of existential quantifiers which
are not available in \dynQF. In
the current proof we will therefore replace these existential quantifiers by
means of functions. To this end, we will maintain several functions
described below. 

As in the proof of Proposition~\ref{prop:non-cf-func}(2), we will make
use of on-the-fly constructed functions $\minM$, $\succM$, and $\preM$,
defined at any time on the elements on which an update already
occurred in the update sequence. Then, we associate numbers with elements in
this successor function, and let $\minM$ denote the number 0, its
successor 1, and so on. We denote the number represented by an element
$v$ as $\repr{v}$. We also denote the element representing a number
$l$ by $\repr{l}$.

Now we can define the four auxiliary functions needed to maintain
$R_1$ and $R_2$. In the following, for two positions $i_1<i_2$, we
write $d(i_1,i_2)$ for the number of closing brackets in
$[i_1,i_2]$ minus the number of opening brackets in
$[i_1,i_2]$. We write $\Cl(v)$ if position $v$ carries a closing
bracket and $\Op(v)$ if it carries an opening bracket.

\begin{itemize}
\item $\fr(u,v)\mydef \minM\{w \mid \repr{v}\ge 1\land \Cl(w)\land w>u \land
  d(u+1,w)=\repr{v}\}$.\\
 Intuitively, $\fr(u,v)$ is the position to
  the right of $u$ where, for the first time, $\repr{v}$ many
  brackets pending at $u$ could be closed.
\item Analogously, $\fl(u,v)\mydef \max\{w \mid \repr{v}\ge 1\land \Op(w)\land w<u \land
  d(w,u-1)=-\repr{v}\}$.
\item $\gr(u,v)\mydef \repr{\max\{d(u+1,w) \mid
  u<w\le v\}}$.\\
 Thus, $\gr(u,v)$ gives the maximum surplus of
  closing brackets in a prefix of $w[u+1,v]$. Intuitively, this is the
  maximum number of pending open brackets at $u$ that can be
  ``digested'' by $w[u+1,v]$. Note that the value of  $\gr(u,v)$ might
  well be $0$.
\item $\gl(u,v)\mydef \repr{\max\{-d(w,u-1) \mid
 v\le w<u\}}$.
\end{itemize}

The attentive reader might have noticed that these functions are not
always defined for all combinations of arguments $u,v$. To this end,
for each of them there is an accompanying relation, telling which function
values are valid. E.g., $R_f^\rightarrow(u,v)$ holds iff $\fr(u,v)$ is
defined. 

As some of the update terms in the dynamic program for $D_n$ are
slightly involved we present the formulas by means of {\em update
  programs} in a pseudocode. These update programs (which should not
be confused with the overall dynamic program) get the parameters of
the relation or function as input, can assign (position) values to
local variables, use conditional branching and return a function value
(or TRUE or FALSE for relations). We abstain from a formal definition
of update programs but it is straightforward to transform them into
update terms by successively replacing each local variable with its
definition.

As noted in the proof of Proposition~\ref{prop:non-cf-func} we can
assume that all update sequences are such that (1) whenever an update
$\reset(z)$ occurs, the position $z$ carried a symbol before the
update, and (2) whenever an update $\ins_\sigma(z)$ occurs, the
position $z$ was empty (i.e. did not carry a symbol).

Using this assumption, we now give the update formulas for the
different relations and functions. In the update programs the
following subroutine $P_0$ will appear three times in update programs
for $D_n$. Its meaning will become clear when it is first used.

\begin{center} 
\begin{myalg}{7cm}{Subroutine $P_0$}
 \STATE \COMMENT{Parameters: $x_1,x_2,y_1,y_2,i_0,j_0,z$}\\
  \STATE $m:= \gr(j_0,y_1-1)$\\
  \STATE $j_1:=\fr(j_0,m)$\\
  \STATE $i_1:=\fl(i_0,m)$\\
  \STATE $m':=\gl(y_1,j_1+1)$\\
  \STATE $j_2:=\fr(y_2,m')$\\
  \IF{$R_1(i_0+1,j_0-1)$ AND\newline
$R_2(i_1,j_1,i_0,j_0)$ AND\newline
 $R_2(j_1+1,j_2,y_1,y_2)$ AND \newline
      $R_2(x_1,x_2,i_1,j_2)$} 
    \STATE Return TRUE
    \ELSE \STATE Return FALSE
  \ENDIF
\end{myalg}
\end{center}

We first give the update program for $R_2(x_1,x_2,y_1,y_2)$ for
insertions of a symbol $(_l$ at a position $z$. Only the case where $z$
is in the left interval (i.e. $[x_1,y_1-1]$) is considered. The other
case is symmetric to the insertion of $)_l$ into the left interval
which will be handled below.

Intuitively, the string is split into four
parts each of which has to be well-bracketed:
\begin{itemize}
\item The string between $i_0=z$ and the corresponding bracket to the
  right ($j_0$) (assuming that this is before $y_1$),
\item the maximally bracketed string (from $i_1$ to $j_1$) around $z$ inside $[x_1,y_1-1]$
  without $[i_0,j_0]$,
\item the substring starting to the right of $j_1$ and ending at the
  corresponding ($=$ matching) position ($j_2$) in $[y_2+1,x_2]$, and
\item the remaining string before $i_1$ and after $j_2$.
\end{itemize}
An illustration can be found in Figure~\ref{fig:strings}(a)

If the matching bracket for $z$ is not before $y_1$ the construction
is slightly different (Figure~\ref{fig:strings}(b)):
\begin{itemize}
\item The string between $z$ and its matching bracket at $j_0$ in
  $[y_2+1,x_2]$ has to be well-bracketed, and
\item the remaining string consisting of $w[x_1,z-1]$ and
  $w[j_0+1,x_2]$ has to be well-bracketed
\end{itemize}

\begin{figure}

  \begin{center}
    
  \begin{pspicture}(12,0.8)
\rput(-1,0.5){(a)}
    \mybput{0}{A}{|}{x_1}
    \mybput{1.7}{B}{[}{i_1}
    \mybput{3}{C}{(}{i_0}
    \mybput{4}{D}{)}{j_0}
    \mybput{5}{E}{]}{j_1}
    \mybput{5.2}{F}{\langle}{}
    \mybput{6}{G}{|}{y_1}
    \mybput{8}{H}{|}{y_2}
    \mybput{9.2}{I}{\rangle}{j_2}
    \mybput{11}{J}{|}{x_2}
   \myline{A}{B}{solid}
   \myline{B}{C}{dotted}
{\psset{dash=2pt 1pt}
   \myline{C}{D}{dashed}}
   \myline{D}{E}{dotted}
   \myline{F}{G}{dashed}
   \myline{H}{I}{dashed}
   \myline{I}{J}{solid}
  \end{pspicture}

\vspace{6mm}

  \begin{pspicture}(12,0.8)
\rput(-1,0.5){(b)}
    \mybput{0}{A}{|}{x_1}
    \mybput{3}{C}{(}{i_0}
    \mybput{6}{G}{|}{y_1}
    \mybput{8}{H}{|}{y_2}
    \mybput{9.2}{D}{)}{j_0}
    \mybput{11}{J}{|}{x_2}
   \myline{A}{C}{solid}
   \myline{C}{G}{dotted}
   \myline{H}{D}{dotted}
   \myline{D}{J}{solid}
  \end{pspicture}
  \end{center}
  \caption{Illustration of the update programs for (a) insertion of
    $($ if the matching bracket is in the left string, (b) if it is in
  the right string. }
  \label{fig:strings}
\end{figure}
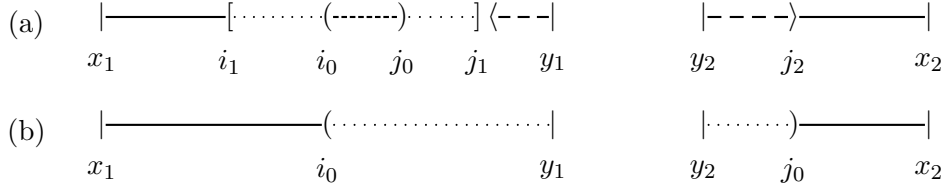

\begin{center} 
\begin{myalg}{7cm}{Update $R_2(x_1,x_2,y_1,y_2)$: insert $(_l$ at $z$}
  \IF{$z\in[x_1,y_1-1]$}
    \STATE $i_0:=z$\\
    \STATE $j_0:=\fr(z,1)$ \COMMENT{find the matching closing bracket}\\
    \IF {$j_0<y_1$}
      \IF{$R_{)_l}(j_0)$}
        \STATE $P_0$
      \ELSE 
        \STATE Return FALSE\\
      \ENDIF
  \ELSE
  \STATE $m:=\gl(y_1,z)$\\
  \STATE $j_0:=\fr(y_2,m+1)$
  \IF{$R_{)_l}(j_0)$ AND\newline
$R_2(z+1,j_0-1,y_1,y_2)$ AND\newline
 $R_2(x_1,x_2,z,j_0)$}  
    \STATE Return TRUE
    \ELSE 
     \STATE Return FALSE
  \ENDIF
  \ENDIF
  \ELSE
   \STATE \COMMENT{Symmetric case $z\in[y_2+1,x_2]$}
  \ENDIF
\end{myalg}
\end{center}

Note that the internal variable $m$ is used for a position that is
interpreted as a number (encoded as explained above). Thus, $m+1$ is
an abbreviation for $\succM(m)$. Likewise, $0$ is an abbreviation for $\minM$.

It could be the case that in line 3 no matching bracket is found. In
this case the update program fails and returns FALSE. In the actual
function terms this can be handled by the help of relation $R_f^\rightarrow$.
We will stick to
this convention also in the following: whenever a function value is
not defined the value of the update program becomes FALSE
(corresponding to undefined values for the function update programs below).

Next we describe the update program for insertions of $)_l$. This case
is very similar to the insertion of $(_l$: The only difference is that
$j_0$ is now the position $z$ and $i_0$ is the matching position to
the left. Furthermore, there is no case distinction as $j_0$ is always
in the left string.

\begin{center} 
\begin{myalg}{7cm}{Update $R_2(x_1,x_2,y_1,y_2)$: insert $)_l$ at $z$}
  \IF{$z\in[x_1,y_1-1]$}
    \STATE $i_0:=\fl(z,1)$\\
    \STATE $j_0:=z$\\
    \STATE $P_0$
  \ELSE
   \STATE \COMMENT{Symmetric case $z\in[y_2+1,x_2]$}
  \ENDIF
\end{myalg}
\end{center}

Finally, the following update program handles reset operations. This
can be handled just as an insertion but here there is no string
between $i_0$ and $j_0$. 

\begin{center} 
\begin{myalg}{7cm}{Update $R_2(x_1,x_2,y_1,y_2)$: reset $z$}
  \IF{$z\in[x_1,y_1-1]$}
    \STATE $i_0:=z$\\
    \STATE $j_0:=z$ \COMMENT{The empty string $w[z+1,z-1]$ is
      well-bracketed...}\\ 
    \STATE $P_0$
  \ELSE
   \STATE \COMMENT{Symmetric case $z\in[y_2+1,x_2]$}
  \ENDIF
\end{myalg}
\end{center}

The update programs for $R_1$ are similar but easier. We now describe
the update programs for the functions $\fl,\fr,\gl,\gr$. We only
describe the update programs for $\fr$ and $\gr$ as $\fl$ and $\gl$
are again symmetric. We do not explicitly state the update programs
for $R_f^\rightarrow$ and $R_g^\rightarrow$ as they are completely
analogous to the programs for the functions.

For $\fr(x,m)$ we only need to consider the case where $m$ has a corresponding
number and is different from $\minM$.

The insertion of $(_l$ at position $z$ only affects $\fr(x,m)$ if
$x<z<\fr(x,m)$. In that case, the insertion of $z$ increases $d(x,w)$
by one for all $w>z$ and therefore the previous value of $\fr(x,m+1)$
is the new value for $\fr(x,m)$. 

\begin{center} 
\begin{myalg}{7cm}{Update $\fr(x,m)$: insert $(_l$ at $z$}
  \IF{$z\le x$}
    \STATE return $\fr(x,m)$
   \ELSE
    \STATE $y:=\fr(x,m)$\\
    \IF {$y<z$}
      \STATE $y:=\fr(x,m)$
    \ELSE
      \STATE $y:=\fr(x,m+1)$
    \ENDIF
  \ENDIF     
\end{myalg}
\end{center}

Notice that in this program we are using the assumption that $z$ was
empty before the insertion. The update of $\fr(x,m)$ under insertion
of a closing bracket is slightly more involved. If $x<z<\fr(x,m-1)$
then the new value is just $\fr(x,m-1)$. Otherwise, we have to
identify the maximal pair of matching brackets around $z$ where the
left bracket is to the right of $\fr(x,m-1)$ ($= y$). Due to the
additional closing bracket at $z$ the right bracket of this pair
($y'$) is then the new value for $\fr(x,m)$. In case $m=1$ we simply
replace the role of $\fr(x,m-1)$ by $x$. The main case is illustrated
by Figure \ref{fig:updatef}

\begin{figure}

  \begin{center}
  \begin{pspicture}(12,0.8)
%\rput(-1,0.5){(a)}
    \mybput{0}{A}{|}{}
    \mybput{2}{B}{|}{x}
    \mybput{3.5}{C}{]}{}
    \mybput{4.5}{D}{]}{y}
    \mybput{4.7}{E}{\langle}{}
    \mybput{6}{F}{)}{z}
    \mybput{7}{G}{\rangle}{y'}
    \mybput{11}{H}{|}{}
   \myline{A}{B}{solid}
   \myline{B}{C}{dashed}
   \myline{C}{D}{dashed}
   \myline{E}{F}{dotted}
   \myline{F}{G}{dotted}
   \myline{G}{H}{solid}
  \end{pspicture}
  \end{center}
  \caption{Illustration of the update program for $\fr$ under insertion of
    a closing bracket. }
  \label{fig:updatef}
\end{figure}
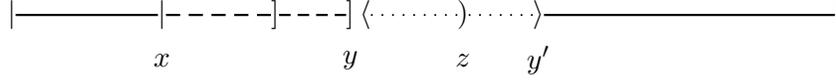

\begin{center} 
\begin{myalg}{7cm}{Update $\fr(x,m)$: insert $)_l$ at $z$}
   \IF{$z\le x$}
    \STATE Return $\fr(x,m)$
   \ELSE
    \IF{$m=1$}
     \STATE $y:=x$
    \ELSE
     \STATE $y:=\fr(x,m-1)$\\
    \ENDIF
    \IF {$y>z$}
     \STATE Return $y$
    \ELSE
     \STATE $m':=\gl(z,y+1)$\\
     \IF {$m'=0$}
        \STATE Return $z$
     \ELSE
     \STATE Return $\fr(z,m')$\\
     \ENDIF
    \ENDIF
   \ENDIF     
\end{myalg}
\end{center}

The update program for a reset operation is similar to the insertion
of $)_l$ in case $z$ carries an opening bracket and simple if $z$ carries a closing bracket.

\begin{center} 
\begin{myalg}{7cm}{Update $\fr(x,m)$: reset $z$}
   \IF{$z$ carries a closing bracket}
     \STATE $y:=\fr(x,m)$
     \IF{$y<z$}
      \STATE Return $y$
     \ELSE  \STATE Return $\fr(x,m+1)$
     \ENDIF
   \ELSE
   \IF{$z\le x$}
    \STATE return $\fr(x,m)$
   \ELSE
    \IF{$m=1$}
     \STATE $y:=x$
    \ELSE
     \STATE $y:=\fr(x,m-1)$\\
    \ENDIF
    \IF {$y>z$}
     \STATE Return $y$
    \ELSE
     \STATE $m':=\gl(z,y+1)$\\
     \STATE Return $\fr(z,m'+1)$\\
    \ENDIF
   \ENDIF     
   \ENDIF
\end{myalg}
\end{center}

Next, we give update programs for $\gr(x,y)$. The first one handles
insertion of an opening bracket and also the reset for closing brackets.

\begin{center} 
\begin{myalg}{7cm}{Update $\gr(x,y)$: insert $(_l$ at $z$}
   \IF{$z\le x$ OR $z>y$}
    \STATE Return $\gr(x,y)$
   \ENDIF
     \STATE $m:=\gr(x,y)$
     \STATE $v:=\fr(x,m)$
    \IF {$v<z$}
     \STATE Return $m$
    \ELSE
     \STATE Return $m-1$
    \ENDIF
\end{myalg}
\end{center}

The next one handles insertion of closing brackets.

\begin{center} 
\begin{myalg}{7cm}{Update $\gr(x,y)$: insert $)_l$ at $z$}
   \IF{$z\le x$ OR $z>y$}
    \STATE Return $\gr(x,y)$
   \ENDIF
     \STATE $m:=\gr(x,y)$
     \STATE $v:=\fr(x,m)$
    \IF {$v>z$}
     \STATE Return $m+1$
    \ENDIF
     \STATE $m':=\gl(z,v)$
    \IF{$m'=0$}
     \STATE Return $m+1$
    \ENDIF
    \IF{$\fr(z,m')\le y$}
     \STATE Return $m+1$
    \ELSE
     \STATE Return $m$
    \ENDIF
\end{myalg}
\end{center}

The last update program takes care of reset of opening brackets.

\begin{center} 
\begin{myalg}{7cm}{Update $\gr(x,y)$: reset $(_l$ at $z$}
   \IF{$z\le x$ OR $z>y$}
    \STATE Return $\gr(x,y)$
   \ENDIF
     \STATE $m:=\gr(x,y)$
     \STATE $v:=\fr(x,m)$
    \IF {$v>z$}
     \STATE Return $m+1$
    \ENDIF
     \STATE $m':=\gl(z,v)$
    \IF{$\fr(z,m'+1)\le y$}
     \STATE Return $m+1$
    \ELSE
     \STATE Return $m$
    \ENDIF
\end{myalg}
\end{center}

Finally, we give the update formulas for the acceptance relation
$\accept$. Thereto, we maintain two additional constants (0-ary
functions) $\first$ and $\last$. Here, $\first$ will denote the first
element (first according to the given order, not the constructed
successor functions) which has been touched, and, similarly, $\last$
denotes the last such element. Hence, at any time $w[1,\first-1] =
w[\last+1,n] = \varepsilon$. These functions can easily be maintained. We
give the update formulas for our acceptance relation again in our
usual formalism:
$$\phi_{\ins_\sigma}^\accept(z) \equiv
\phi_{\ins_\sigma}^{R_1}(z;\phi_{\ins_\sigma}^\first(z),\phi_{\ins_\sigma}^\last(z))$$
and
$$\phi_{\reset}^\accept(z) \equiv
\phi_{\reset}^{R_1}(z;\phi_{\reset}^\first(z),\phi_{\reset}^\last(z)).$$

That is, the string is valid iff $R_1(\first,\last)$ holds after the update
has occurred. This completes the description of the update
programs. The correctness proof is tedious but straightforward.
\qed

We expect the result to hold for a broader class of
context-free languages which has yet to be pinned down exactly. It is
even conceivable that all deterministic or unambiguous context-free
languages are in \dynQF.

It turns out that for Dyck languages with only one kind of brackets,
i.e., $D_1$, auxiliary functions are not needed, if built-in successor
and predecessor functions are given. 

\begin{proposition} \label{theo:dyckone}
$D_1\in\dynprop(\SetSucc, \Rel)$ 
\end{proposition}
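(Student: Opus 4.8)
The plan is to use the classical characterisation: a string $w$ over $\{(,)\}$ belongs to $D_1$ if and only if its total numbers of opening and of closing brackets coincide and every prefix of $w$ has at least as many opening as closing brackets. For a word structure $W$ of size $n$ and $0\le k\le n$ write $h(k)$ for the number of opening brackets minus the number of closing brackets among positions $1,\dots,k$ of $W$. Then $w\in D_1$ iff $h(n)=0$ and $\min_{0\le k\le n}h(k)\ge 0$. The only thing a dynamic program over this representation can do is observe that an update at position $z$ leaves $h(k)$ unchanged for $k<z$ and changes $h(k)$ by a fixed amount $s\in\{-1,0,+1\}$ for all $k\ge z$, where $s$ is $+1$ for an inserted $($ or a reset $)$, $-1$ for an inserted $)$ or a reset $($, and $0$ otherwise. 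As in the proof of Proposition~\ref{prop:non-cf-func} we may assume that insertions happen only at empty positions and resets only at non-empty ones, so that $s$ can be read off from the input relations at $z$.

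I would therefore maintain three pieces of auxiliary data. First, a signed unary counter for the total balance $h(n)$ — two $0$-ary flags recording its sign and a unary relation recording its magnitude as a position — maintained exactly as in Proposition~\ref{prop:non-cf-func}(1); this handles the condition $h(n)=0$. Second, a binary relation $\Phi$ which encodes, for every $i\le j$, the value $\Phi(i,j)=\min_{i\le k\le j}h(k)$ — again as a sign indicator together with a magnitude position, so that all of $|\Phi(i,j)|\in\{0,\dots,n\}$ can be represented. Third, a unary relation $\bar\Phi$ encoding $\bar\Phi(i)=\min_{i\le k\le n}h(k)$ in the same way; since we have $\minM$ but no built-in maximum, the global minimum is recovered as $\bar\Phi(\minM)$. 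All of these equal $0$ on the empty word, which is a quantifier-free initial condition, so Lemma~\ref{lem:init} removes the need for precomputation. The acceptance relation is updated by the quantifier-free formula stating ``the new total balance is $0$ and the new value of $\bar\Phi(\minM)$ is $\ge 0$'', the latter being spelled out via the update formula for $\bar\Phi$ below.

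The core of the proof is the update formulas for $\Phi$ and $\bar\Phi$ after an update at position $z$ with shift $s$. For $\Phi(i,j)$: if $z>j$ then $\Phi(i,j)$ is unchanged; if $z\le i$ then the whole interval is shifted and $\Phi_{\mathrm{new}}(i,j)=\Phi_{\mathrm{old}}(i,j)+s$; and if $i<z\le j$ then $\Phi_{\mathrm{new}}(i,j)=\min\bigl(\Phi_{\mathrm{old}}(i,\preM(z)),\ \Phi_{\mathrm{old}}(z,j)+s\bigr)$, because the interval splits into the unchanged part $[i,z-1]$ and the shifted part $[z,j]$, and on the latter $h$ was globally offset so it suffices to add $s$ to the stored minimum. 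The formula for $\bar\Phi(i)$ is the same with the upper endpoint $n$ in place of $j$: if $z\le i$ then $\bar\Phi_{\mathrm{new}}(i)=\bar\Phi_{\mathrm{old}}(i)+s$, and if $z>i$ then $\bar\Phi_{\mathrm{new}}(i)=\min\bigl(\Phi_{\mathrm{old}}(i,\preM(z)),\ \bar\Phi_{\mathrm{old}}(z)+s\bigr)$. The essential point is that the old relations are consulted only at the argument tuples $(i,j)$, $(i,\preM(z))$, $(z,j)$ and $z$, all of which are terms built from the free variables and $z$, and that the operations ``$+1$'', ``$-1$'', ``$\min$'' and the test ``$\ge 0$'' on the sign/magnitude encoding are quantifier-free definable using $\succM$, $\preM$ and the built-in order. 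Hence the program lies in $\dynprop(\SetSucc,\Rel)$, and correctness follows by a straightforward induction on the update sequence, checking that $\Phi$, $\bar\Phi$ and the counter always hold their intended values.

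I expect the main obstacle to be conceptual rather than technical: it is precisely the fact that a single update shifts the heights of unboundedly many prefixes, so the quantity that has to be recomputed — the minimum of the profile over the shifted suffix, taken from an arbitrary starting point $z$ — cannot be obtained from any bounded global summary, since a minimum over a middle sub-interval of $[1,n]$ is not determined by the minima over a left and a right part. Storing the profile minimum for every interval in the binary relation $\Phi$ is what breaks this circularity, as the relevant left part $[i,z-1]$ is then always available as $\Phi(i,\preM(z))$. The remaining work — choosing the sign/magnitude encoding so that the values $0$ through $n$ all fit, discharging the ``insert-into-empty, reset-from-non-empty'' normalisation as in Proposition~\ref{prop:non-cf-func}, and handling the degenerate cases $z=i$ and $z=\minM$ in the update formulas — is routine but must be carried out carefully.
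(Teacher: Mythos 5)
Your reduction of $D_1$ to the height profile $h$ and the observation that an update at $z$ shifts $h(k)$ by a uniform $s\in\{-1,0,+1\}$ for all $k\ge z$ is exactly the ``level trick'' the paper also starts from. However, there is a genuine gap at the step you dismiss as routine: the claim that ``$\min$'' on your sign/magnitude encoding is quantifier-free definable. In your encoding the value $\Phi(i,j)$ is represented as \emph{the position $m$ at which a relation $\Phi(i,j,\cdot)$ holds}. To evaluate $\Phi_{\mathrm{new}}(i,j)=\min\bigl(\Phi_{\mathrm{old}}(i,\preM(z)),\ \Phi_{\mathrm{old}}(z,j)+s\bigr)$ at a candidate magnitude $m$, the update formula must decide which of the two old values is smaller; but each of them is stored at an \emph{unknown} position, and a quantifier-free formula over the free variables $i,j,m,z$ can only probe the old relations at argument tuples within bounded $\succM/\preM$-distance of those variables. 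Deciding $a\le b$ for two numbers encoded this way requires locating the two witness positions, i.e.\ an existential quantifier. So the central update formula for $\Phi$ (and likewise for $\bar\Phi$) is not quantifier-free as written. One could try to patch this by additionally maintaining bounded-offset comparison relations such as ``$\Phi(i,j)\le\Phi(i',j')+d$'' for $d$ in a fixed finite range and showing these are themselves maintainable, but that is a substantial extra construction you would have to carry out and verify, not a routine detail.

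It is worth noting that the paper's proof is organised precisely to avoid this comparison problem. Instead of storing the \emph{minimum} of the profile over each interval, it stores, for every interval $[i,j]$ and every level $l$, the \emph{set} of positions in $[i,j]$ at level $l$, represented as a ringlist (the edge relation of a directed cycle), together with emptiness flags and the minimal/maximal element of each list. Acceptance then only requires an \emph{emptiness test} -- the last position has level $0$ and the ringlist of level $-1$ over the whole string is empty -- rather than a minimum computation, and the effect of an update (merging the unshifted ringlists on $[i,\preM(z)]$ with the shifted ones on $[z,j]$) is local pointer surgery around $z$, the interval endpoints, and the stored extremal elements, all of which is quantifier-free. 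Your interval-decomposition idea (consulting the old data only at $(i,\preM(z))$ and $(z,j)$) is sound and parallels the paper's merge step, but the choice of what to store per interval is where your proposal breaks down.
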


\proof In \cite{PatnaikI97} it was shown that $D_1$ is maintainable in \dynFO
using the well known "level trick". To each position $i$ of the string
a number $L(i)$ (the level) is assigned such that $L(i)$ is equal to
the number of opening brackets minus the number of closing brackets in
the substring $w[1..i]$. Then the string is in $D_1$ iff there is no
negative level and the level of the last position in the string equals
0.

In the following program we will maintain a data structure, called a
ringlist, capable of storing a set of elements. Here, a \emph{ringlist} is
the edge relation of a directed graph that is a circle. For instance,
the set $\{a,b,c\}$ can be stored by storing the edge relation
$\{(a,b),(b,c),(c,a)\}$.

The $\dynprop(\SetSucc, \Rel)$-program for $D_1$ will maintain for all
pairs $(i,j)$ of positions in the string and for each number
$l\in\{-n, ..., -2, -1, 0, 1, 2, ..., n\}$ a ringlist of all positions
$k\in\{i,..j\}$ of level $l$. Thereto we will use the following
relations:
\begin{itemize}
  \item $L_0(i,j,\cdot,\cdot)$ is a 4-ary relations containing the ringlist of all string positions of level 0,
	\item $L_+(i,j,l,\cdot,\cdot)$ and $L_-(i,j,l,\cdot,\cdot)$ are 5-ary relations containing ringlists for the positive and negative level $l$ and $-l$.
	\item $F_0(i,j)$ is binary and holds if $L_0(i,j)$ is not empty.
	\item $F_+(i,j,\cdot)$, $F_-(i,j,\cdot)$ are 3-ary relations
          telling whether the corresponding lists are not empty.
	\item $Fmax_0(i)$ as the unary relation that will be equal to
          $F_0(i,n)$ where $n$ is the universe size (remember that we
          only have access to the minimal element). 
	\item $Fmax_-(i,l)$ and $Fmax_+(i,l)$ equal to $F_-(i,n,l)$ and $F_+(i,n,l)$. 
	\item $Min_0(i,j,k)$ ($Max_0(i,j,k)$) is ternary and will be true iff $k$ is the minimal (maximal) element of the ringlist $L_0(i,j)$. 
	\item $Min_+(i,j,l,k)$, $Min_-(i,j,l,k)$, $Max_+(i,j,l,k)$ and $Max_-(i,j,l,k)$ are the corresponding relations for the ringlist of the other levels beside 0.
	\item $Last_0$ is a nullary relation stating that the level of the last position is 0. 
	\item $Last_-(l)$ and $Last_+(l)$ store the level of the last position.
\end{itemize}
\medskip

Initially we have that for all $i$ and $j$, $F_0(i,j)$, $Fmax_0(i)$, $Last_0$, $Min_0(i,j,i)$ and $Max_0(i,j,j)$ are true and 
\[
L_0(i,j,a,b) = (a,b\in\{i,..,j\} \land b=\succM(a)) \vee (a=j \wedge b=i).
\]
Thanks to Lemma \ref{lem:init} we can assume these initializations to take place before the computation of the program.
\medskip
 
We can maintain these relations because of the following observation:
{\em After an update operation on some position $x$ in the string, the
  level of all succeeding positions increases or decreases
  simultaniously by 1.} 

Here again (like in the proof of
Proposition~\ref{prop:non-cf-func}) we can assume that (1) whenever an
update $\reset(z)$ occurs, the position $z$ carried a symbol before
the update, and (2) whenever an update $\ins_\sigma(z)$ occurs, the
position $z$ was empty (i.e. did not carry a symbol).

So to get the new ringlist for some level $l$ after an update at a
position $x$ one has to merge the ringlist for the position between
$i$ and $\preM(x)$ of level $l$ and the one for position between $x$
and $j$ of level $l+1$ or $l-1$. In order to do this, only the
relations around the update position $x$, its two borders $i$ and $j$
and the minimal and maximal element (relative to the ordering) of the
considered ringlistes have to be changed. We will show that it
possible to express these updates using quantifier free formulas.
\medskip

Let us first consider the update function for $L_0(i,j)$ and the
operation $\ins_((x)$. Here the levels of all positions from $x$ to
$n$ have to increase by one. The update formulas for the relations
$L_-$ and $L_+$ are then along the same line, and also the ones for the
update operations $\ins_)(x)$ and $\reset(x)$ can be obtained in the same
way.  For readability we will
use case distinctions and state the formulae for each case
separately. They can easily be put together in one (quantifier free)
formula.
\begin{itemize}
\item If $x$ does not lie in the interval $[i,j]$ then nothing
  happens, $L_0$ remains the same.
\item If $x=i$ then the whole list has to be increased by one,
  so \[\phi^{L_0}_{\ins_(}(x;i,j,a,b) \equiv L_-(i,j,1,a,b).\] Let us
  remark here that the constant 1 is not included as a nullary
  function but can be accessed via $\succM(\minM)$.
\item Else, if $x\in [\succM(i), j]$ then one has to merge the list
  $L_0(i,\preM(x),\cdot,\cdot)$ and $L_-(x,j,1,\cdot,\cdot)$. Here the
  emptiness-relations $F_0(i,\preM(x))$ and $F_-(x,j,1)$ come into
  play, because if one of the corresponding ringlists is empty, the
  other ringlist just has to be copied. If both are empty, then
  $L_0(i,j,\cdot,\cdot)$ has to be empty after the update. Only if
  both $F_0(i,\preM(x))$ and $F_-(x,j,1)$ are false the following
  formula applies:
\begin{eqnarray*}
   \phi^{L_0}_{\ins_(}(x;i,j,a,b) & \equiv & a<b<x \wedge L_0(i,\preM(x),a,b) \vee \\
                 &   & a<x\leq b \wedge (Max_0(i,\preM(x),a) \wedge Min_-(x,j,1,b)) \vee \\
                 &   & x\leq a < b \wedge L_-(x,j,1,a,b) \vee \\
                 &   & b<a \wedge (Min_0(i,\preM(x), b) \wedge Max_-(x,j,1,a))
\end{eqnarray*}
\end{itemize}

Similar to the level-relations $L_0$, $L_-$ and $L_+$ we will state
here only the update formula for $F_0(i,j)$ after the update operation
$\ins_((x)$. The formulas for the other emptiness-relations $F_-$ and
$F_+$ and for the other kind of update operations are similar. 
\begin{itemize}
\item If $x$ does not belong to $\intc{i}{j}$, then $F_0(i,j)$ stays the same;
\item if $x=i$ then \[\phi^{F_0}_{\ins_(}(x;i,j)\equiv F_-(i,j,1)\] because
  the whole ringlist $L_-(i,j,1,\cdot,\cdot)$ was shifted to
  $L_0$. Hence, if $L_-$ was empty before the update operation then
  after the update $L_0$ should be empty.
\item
In the third case, if $x\in [\succM(i), j]$ then $F_0(i,j)$, is
non-empty if either $L_0(i,\preM(x), \cdot, \cdot)$ or
$L_-(x,j,1,\cdot,\cdot)$ was non-empty before the update operation. So
\[
\phi^{F_0}_{\ins_(}(x;i,j) \equiv F_0(i,\preM(x)) \lor F_-(x,j,1).
\]
\end{itemize}
The relations $Fmax_0$, $Fmax_-$ and $Fmax_+$ can be maintained in a similar way.
\medskip

Now we will show how to maintain the relation $Min_0(i,j,k)$ after the
update operation $\ins_((x)$. Again, three cases have to be
distinguished.
\begin{itemize}
\item 
If $x\notin\intc{i}{j}$ then nothing changes.
\item If $x=i$ then \[\phi^{Min_0}_{\ins_(}(x;i,j,k) \equiv Min_-(i,j,1,k)\]
\item Else, we have to check whether the list
  $L_0(i,\preM(x),\cdot,\cdot)$ is empty or not. If it is empty, then
  the minimum has to be taken from the list $L_-(x,j,1)$. If not its minimum
  remains the same. So we have the following formula for the third case:
\begin{eqnarray*}
\phi^{Min_0}_{\ins_(}(x;i,j,k) & \equiv & (\phantom{\neg}F_0(i,\preM(x)) \land Min_0(i,\preM(x),k)) \lor \\
															 &   & (\neg F_0(i,\preM(x)) \land Min_-(x,j,1,k))
\end{eqnarray*}
\end{itemize}

Again, the relations $Max_0$, $Min_-$, $Min_+$, $Max_+$ and $Max_-$
can be updated similarly. The last relations which have to be updated
are $Last_0$, $Last_-$ and
$Last_+$. However, their change does not depend on the position of the actual
update operation, but only on the type of the inserted or deleted
symbol. In fact they only have to count the difference between the
number of opening and closing brackets in the string. The maintenance
of these relations is straightforward. For example after the insertion of an
opening bracket we have $\phi^{Last_0}_{\ins_(}(x)\equiv Last_-(1)$.
\bigskip

Now only the acceptance of a string has to be detected. The string
will be accepted if and only if, after the update, the level of the
last position equals 0 and the ringlist of level -1 is empty. We only
have to check the level -1, and not all negative levels, because if
there is a position with level less then -1 there also has to be a
position which has level -1. So, for instance for the update $\ins_($, the
update formula for \accept is
\[
\phi^{\accept}_{\ins_(}(x) \equiv \neg \phi^{Fmax_-}_{\ins_(}(x;\minM, 1) \land \phi^{Last_0}_{\ins_(}(x)
\]

\qed

So we see that, whereas built-in relations did not increase the power of \dynprop,
already the three simple functions  $\succM$, $\preM$ and $\minM$
allow the maintenance of non-regular languages.

\section{Variations}\label{sec:variations}

\noindent
{\it Alternative Semantics.}
Following~\cite{PatnaikI97}, we have introduced in
Section~\ref{sec:defs} dynamic languages in which it is both allowed
to insert or change labels at positions in the string and to delete
elements at positions. In a universe of size $n$, one can thus create all
strings of length smaller or equal than $n$.

However, one can also consider the setting in which each position in
the string must at any time be assigned a symbol. Although this
setting is less ``dynamic'', it has the advantage that a word is
always associated with its canonical logical structure.  This can be
achieved by starting with an initial structure in which each symbol is
already assigned a symbol, and subsequently only allowing labels to be
changed (and not deleted).

More formally, we assign to every language $L$, a dynamic language
$\dynamicAlt{L}$ as follows. For a distinguished \emph{initial symbol}
$a \in \Sigma$, and $n \in \nat$, let $E^a_n$ be the word structure in
which $R_a(i)$ is true, for all $i$, and $R_\sigma$ is empty, for all
$\sigma \neq a$. Further, $\Delta_n = \{\ins_\sigma \mid \sigma \in
\Sigma\}$. Then, $\dynamicAlt{L} = \{(n,\delta) \mid \delta \in
\Delta_n^+ \wedge \word(\delta(E^a_n)) \in L\}$\footnote{Notice that
  $\dynamic{L}$ consists only of update sequences $\delta$, whereas
  $\dynamicAlt{L}$ contains tuples $(n,\delta)$. This change is
  necessary as the membership of a word of a language under the
  current semantics can depend both on the size of the initial
  structure $n$, and the update sequence $\delta$.}.

Proposition~\ref{prop:alt-sem} shows that the situation is less
appealing than in the original semantics. In particular, there are
regular languages which cannot be maintained without precomputation;
and with precomputation all regular, but also non-regular, languages
can be maintained. Here, $\midd = \{wbw' \mid |w| = |w'|\}$ is the
language over the alphabet $\Sigma = \{a,b\}$ which contains all
strings whose middle element is $b$, which is clearly not regular.

\begin{proposition}\label{prop:alt-sem}\hfill
  \begin{enumerate}
  \item $\dynamicAlt{L((aa)^*)} \notin \dynprop$
  \item For any regular language $L$, $\dynamicAlt{L} \in \dynprop(\Rel,\Rel)$
  \item $\dynamicAlt{\midd} \in \dynprop(\Rel,\Rel)$
  \end{enumerate}
\end{proposition}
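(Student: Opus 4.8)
The plan is to handle the three parts separately, using the alternative semantics $\dynamicAlt{L}$ in which the universe is fixed to a full word of length $n$ and only relabelings are allowed.

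\textbf{Part (1): $\dynamicAlt{L((aa)^*)} \notin \dynprop$.} The point is that under the alternative semantics there is \emph{no} update sequence at all that can change the universe size, so the program can never ``learn'' the parity of $n$ except through the initial structure $E^a_n$. First I would observe that $E^a_n$ encodes the word $a^n$, so $E^a_n \models \accept$ iff $n$ is even. Since no precomputation is allowed, the initial auxiliary relations are empty, so the \emph{only} information distinguishing the start state for even $n$ from that for odd $n$ is the bare linear order on $\{1,\dots,n\}$. I would then argue by a type/indiscernibility argument in the spirit of the proof of Proposition~\ref{prop:propreg}: pick $n$ even and large, and consider the single update $\ins_a(1)$ (which changes nothing, since position $1$ already carries $a$) — actually, better, consider that the empty update sequence is not allowed, so take $\delta = \ins_a(n)$, which leaves the word $a^n$ unchanged. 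In state $E^a_n$ all positions are $k$-indiscernible (only the order is available), and one update at position $n$ keeps $\{1,\dots,n-1\}$ indiscernible and $\accept$ unchanged; hence the program's acceptance after $\delta$ depends only on $k$-types present in $E^a_n$, which are identical for all sufficiently large $n$ of either parity. This contradicts the requirement that $\accept$ holds after $\delta$ iff $a^n \in (aa)^*$, i.e.\ iff $n$ is even. So no \dynprop program can recognize $\dynamicAlt{L((aa)^*)}$.

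\textbf{Parts (2) and (3): with relational precomputation.} For part (2), given a regular $L$ with DFA $A=(Q,\delta,s,F)$, I would reuse the relations $R_{p,q}$, $I_q$, $F_p$ from the proof of Proposition~\ref{prop:regprop}, which are exactly the relations needed to track runs of $A$ over substrings; the only difference is that now the ``holes'' of the word structure never occur, so every position always carries a symbol. The crucial new ingredient is that in the alternative semantics we are allowed precomputation, so we can \emph{initialize} $R_{p,q}, I_q, F_p$ consistently with the actual starting word $a^n$ (rather than with the empty word): set $I_q = \{j \mid \delta^*(s,a^{j-1})=q\}$, $F_p=\{i\mid \delta^*(p,a^{n-i})\in F\}$, and $R_{p,q}=\{(i,j)\mid i<j,\ \delta^*(p,a^{j-i-1})=q\}$. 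These are all computable from $A$ and $n$, hence legitimate precomputed relations. The update formulas for a relabeling $\ins_\sigma(y)$ are then literally the $\reset/\ins$ update formulas from the proof of Proposition~\ref{prop:regprop} (a relabeling is handled just like an insertion over a word with no holes), so $\dynamicAlt{L}\in\dynprop(\Rel,\Rel)$. For part (3), $\midd=\{wbw'\mid |w|=|w'|\}$ is recognized by checking whether position $(n+1)/2$ carries a $b$ when $n$ is odd (and rejecting when $n$ is even); I would precompute the nullary flag ``$n$ is odd'' and the unary relation $M(i)$ that holds exactly at the middle position $i=(n+1)/2$ (empty if $n$ even) — both are functions of $n$ alone, hence precomputable — and then maintain a nullary relation $\accept$ whose update formula after $\ins_\sigma(y)$ is $(\,n\text{ odd}\,)\wedge\big[(M(y)\wedge \sigma=b)\ \vee\ (\neg M(y)\wedge B)\big]$, where $B$ is a maintained nullary flag recording whether the current middle symbol is $b$; $B$ itself is updated by $\phi^{B}_{\ins_\sigma}(y)\equiv (M(y)\wedge\sigma=b)\vee(\neg M(y)\wedge B)$. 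Initially $B$ is true (since $E^a_{n}$ with $n$ odd has an $a$ — wait, we need $b$ in the middle, so actually initialize $B$ to false, consistent with the all-$a$ start word), and this is a legal quantifier-free precomputed initialization. Hence $\dynamicAlt{\midd}\in\dynprop(\Rel,\Rel)$.

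\textbf{Main obstacle.} The routine bookkeeping of parts (2) and (3) is straightforward once one sees that precomputation lets us initialize the auxiliary data to match the starting word $a^n$. The genuinely delicate step is the lower bound in part (1): one must argue carefully that without precomputation the program's behavior on $E^a_n$ is governed entirely by a bounded set of $k$-types of the bare order, so that parity of $n$ is invisible — and one must be slightly careful about the nonemptiness requirement on update sequences (hence using a ``dummy'' relabel $\ins_a(n)$ rather than the empty sequence). This is where I expect to spend the most effort, essentially repeating the indiscernibility argument of Proposition~\ref{prop:propreg} adapted to the fixed-universe, relabel-only setting.
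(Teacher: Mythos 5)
Your proposal is correct and follows essentially the same route as the paper's proof: part (1) via a single dummy relabel whose acceptance value is determined by a quantifier-free formula evaluated on $E'^a_n$ (where every atom's truth value is independent of the parity of $n$), part (2) by reusing the program of Proposition~\ref{prop:regprop} with the relations $R_{p,q}$, $I_q$, $F_p$ precomputed to match the initial word $a^n$, and part (3) by precomputing the middle-position marker $M$ and maintaining a flag for whether the middle carries a $b$. The only cosmetic difference is that the paper argues part (1) by directly substituting truth values into $\phi_{\ins_a}^{\accept}(1)$ rather than invoking indiscernibility, but the content is the same.
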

\newpage
\proof
  (1) Let $L = L((aa)^*)$. Let $n$ be any positive even integer, and
  $\delta = \ins_a(1)$. Then, $\word(\delta(E^a_n)) \in L$, and
  $\word(\delta(E^a_{n+1})) \notin L$. Hence, $(n,\delta) \in
  \dynamicAlt{L}$ and $(n+1,\delta) \notin \dynamicAlt{L}$. We show
  that for any program $P \in \dynprop$, $(n,\delta) \in L(P)$
  iff $(n+1,\delta) \in L(P)$, which implies the proposition.

  Thereto, notice that $(n,\delta) \in L(P)$ iff $E'^a_n \models
  \phi_{\ins_a}^\accept(1)$, and, correspondingly, $(n+1,\delta) \in L(P)$ iff
  $E'^a_{n+1} \models \phi_{\ins_a}^\accept(1)$. However, these two
  questions can be decided in an identical manner: take
  $\phi_{\ins_a}^\accept$, replace any occurrence of $R_a$ by true and
  any occurence of a relation symbol different from $R_a$ by false,
  and evaluate the obtained boolean formula. Hence, $E'^a_{n} \models
  \phi_{\ins_a}^\accept(1)$ iff $E'^a_{n+1} \models
  \phi_{\ins_a}^\accept(1)$, which concludes the proof.

\smallskip

\noindent (2) As seen in the previous proof, 
$\dynprop$ program without precomputation are not capable of
maintaining all regular languages. The reason for this is that the
initial string is $a^n$, for some $n$, whereas the initial string was empty
in the original semantics. Then, when the computation starts, the
\dynprop program did not have the chance to initialize its data
structures according to $a^n$ and is immediately lost.

However, when allowing precomputation, we can simply reuse the program $P$
defined in the proof of Proposition~\ref{prop:regprop}. Indeed, the only
difference is in the initialization of the relations. Whereas they
could be initialized by quantifier free formulas when the initial
string was empty, we now have to use the power of precomputations
to initialize them. In particular, for a language $L$ accepted by
automaton $A = (Q,\delta,s,F)$ they should be initialized as follows:
\begin{itemize}
\item $R_{p,q} = \{(i,j) \mid i < j \wedge
  (p,a^{j-i-1},q) \in \delta\}$;
\item $I_q=\{i \mid (s,a^{i-1},q) \in \delta\}$; and
\item $F_p=\{i \mid (p,a^{n-i},q_f) \in \delta, \mbox{ for some }
  q_f \in F\}.$
\end{itemize}

From the correctness of the program of Proposition~\ref{prop:regprop}
and this precomputation, the correctness of this modified program
immediately follows.

\smallskip

\noindent (3) The dynamic program $P$ maintaining $\dynamicAlt{\midd}$
will make use of the precomputed unary relation $M$ containing the
middle element of the structure, if the universe size is
odd. Formally, for $n \in \nat$, $M^\text{init}_n = \{\lceil n/2
\rceil \mid n \mbox{ is odd}\}$. Then, $P$ only needs to maintain the
acceptance relation, which can be done as follows:

$$\phi_{\ins_a}^\accept(x) \equiv \accept \wedge \neg M(x)$$
and
$$\phi_{\ins_b}^\accept(x) \equiv \accept \vee M(x).$$
\qed

Notice that, contrary to Theorem~\ref{theo:dynprop-reg}, Proposition~\ref{prop:alt-sem} 
does not allow us to infer lower bounds for
$\dynprop(\Rel,\Rel)$ under the current semantics. However, if we
consider the class of languages with neutral elements, this becomes
possible again. We say that a language $L$ has a \emph{neutral element} $a$
if for all $w,w' \in \Sigma^*$ it holds that $ww' \in L$ iff $waw'
\in L$. Here, if a language has at least one neutral element we will assume
that the initial symbol for its dynamic algorithm is one of these
neutral elements.

Then, a straigthforward generalization of
Theorem~\ref{theo:dynprop-reg} yields the following proposition which
implies, for instance, that $\dynamicAlt{L} \notin \dynprop(\Rel,\Rel)$
for all non-regular languages $L$ which have a neutral element.

\begin{proposition}\label{prop:dynpropalt-reg}
  Let $L$ be a language which has a neutral element. Then, the
  following are equivalent:
    
\begin{enumerate}
	\item $L$ is regular;
	\item $\dynamicAlt{L} \in \dynprop$; and
        \item $\dynamicAlt{L} \in \dynprop(\Rel,\Rel)$. 
\end{enumerate}
\end{proposition}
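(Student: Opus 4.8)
The plan is to reduce Proposition~\ref{prop:dynpropalt-reg} to Theorem~\ref{theo:dynprop-reg} by using the neutral element to translate between the two semantics. The implications $(1)\Rightarrow(2)\Rightarrow(3)$ are essentially free: for $(1)\Rightarrow(2)$, given a regular language $L$ we can reuse the construction of Proposition~\ref{prop:regprop}, the point being that when $a$ is a neutral element, the canonical word structure $E^a_n$ represents $\word(E^a_n)=a^n$, and since $a$ is neutral, $a^n\in L$ iff $\varepsilon\in L$; more importantly the relations $R_{p,q}, I_q, F_p$ of Proposition~\ref{prop:regprop} can be \emph{initialized by quantifier-free formulas} in this case, because $\delta^*(p,a^m)$ depends on $m$ only through its residue modulo the (eventually periodic) behaviour of $\delta$ on the letter $a$, which for the tail can be captured without precomputation using the order (one only needs to distinguish the transient from the cycle, a bounded amount of information). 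Alternatively, and more cleanly, one observes that under the \dynamicAlt{} semantics with a neutral initial symbol, an update sequence $\delta$ starting from $E^a_n$ produces exactly the same word as the corresponding sequence in the \dynamic{} semantics applied to $E_n$ after first ``resetting'' nothing — so one can literally simulate $\dynamic{L}$ by $\dynamicAlt{L}$ at the level of dynamic programs, invoking Lemma~\ref{lem:init} to install the (now quantifier-free definable) initialization. The inclusion $(2)\Rightarrow(3)$ is immediate from monotonicity of the classes.

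The substance is $(3)\Rightarrow(1)$: if $\dynamicAlt{L}\in\dynprop(\Rel,\Rel)$ then $L$ is regular. Here I would adapt the automaton construction from the proof of Theorem~\ref{theo:dynprop-reg} (equivalently Proposition~\ref{prop:propreg} together with its Ramsey strengthening, Observation~1$'$). Given a $\dynprop(\Rel,\Rel)$ program $P$ for $\dynamicAlt{L}$ with maximum relation arity $k$, and a target word length $n$, we use Observation~1$'$ to choose $m$ so large that every state over a universe of size $m$ contains a $k$-indiscernible set $I$ of size $n+k$. We then consider the initial structure $E^a_m$ (all positions carry the neutral letter $a$) and only look at update sequences that overwrite, left-to-right, the symbols of a word $w=\sigma_1\cdots\sigma_n$ on the first $n$ elements $i_1<\cdots<i_n$ of $I$, leaving the remaining elements of $I$ carrying $a$. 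Because $a$ is neutral, the word represented after these updates is $a^{i_1-1}\sigma_1 a^{\cdots}\sigma_2\cdots\sigma_n a^{m-i_n}$, whose membership in $L$ is equivalent to $\sigma_1\cdots\sigma_n\in L$. The indiscernibility and the ``behaviour is determined by the inserted symbol'' arguments (Observations~1--3, transported verbatim) then show that the type of the tail $\{i_{r+1},\dots,i_{n+k}\}$ after processing $w[1,r]$ depends only on $w[1,r]$ via a bounded-state transition function, and that acceptance reads off from that type — yielding a finite automaton $A$ with $L(A)=L$.

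The main obstacle I anticipate is the very first step in the $(3)\Rightarrow(1)$ direction: one must be careful that the $k$-indiscernible set $I$ produced by Ramsey's theorem is available \emph{in the initial state} $E^a_m$ itself, before any updates occur, since here (unlike in Theorem~\ref{theo:dynprop-reg}) the initial state is not the bland $E_m'$ but the precomputed state over $E^a_m$. Observation~1$'$ as stated gives a $k$-indiscernible set for \emph{every} state over a universe of size $m$, so it applies to the initial state over $E^a_m$; the subtlety is that after we overwrite positions $i_1,\dots,i_r$ the remaining elements of $I$ must still be $k$-indiscernible, which is exactly the content of Observation~1 (updates at positions already ``touched'' preserve indiscernibility of the untouched tail) — but one must check it still goes through when the untouched tail elements carry the letter $a$ rather than being blank, which is harmless since the quantifier-free update formulas cannot tell $a$-labelled untouched positions apart either. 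Modulo this bookkeeping, the argument is a routine transcription of the proof of Theorem~\ref{theo:dynprop-reg}, so I would state it as ``a straightforward generalization'' and only spell out the neutral-element translation and the indiscernibility-preservation check in detail.
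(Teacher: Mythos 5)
Your overall architecture matches the paper's, and your $(3)\Rightarrow(1)$ direction is sound: it is the same Ramsey/indiscernibility transcription of Theorem~\ref{theo:dynprop-reg}, and your bookkeeping worry (that the untouched tail carries $a$ rather than being blank) is correctly dismissed. In fact your version is marginally more direct than the paper's: you build the automaton over the full alphabet $\Sigma$ and get $L(A)=L$ in one step, whereas the paper first builds an automaton for $L\cap(\Sigma\setminus\{a\})^*$ and then adds $a$-self-loops; both are fine.

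There is, however, a genuine flaw in your primary justification of $(1)\Rightarrow(2)$. You claim the relations $R_{p,q},I_q,F_p$ of Proposition~\ref{prop:regprop} can be initialized quantifier-freely because $\delta^*(p,a^m)$ is eventually periodic in $m$ and ``one only needs to distinguish the transient from the cycle.'' That is not enough: the initialization of $R_{p,q}$ on $w=a^n$ requires computing $\delta^*(p,a^{j-i-1})$, i.e.\ the residue of $j-i-1$ modulo the cycle length, and a quantifier-free formula in $i,j$ over the order can only express which of $i<j$, $i=j$, $i>j$ holds --- it cannot compute $j-i \bmod c$ for $c\ge 2$. If your argument were valid it would equally show $\dynamicAlt{L((aa)^*)}\in\dynprop$, contradicting Proposition~\ref{prop:alt-sem}(1). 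The correct repair, which is exactly what the paper does, is to take the \emph{minimal} DFA for $L$: neutrality of $a$ forces $\delta(q,a)=q$ for every state $q$ (since $u$ and $ua$ are Myhill--Nerode equivalent), so $\delta^*(p,a^m)=p$ for all $m$, there is no transient and no nontrivial cycle, and the initialization of Proposition~\ref{prop:regprop} is literally correct on the initial string $a^n$. Your ``alternative'' route gestures at the right idea but is also too glib as stated: a program for $\dynamic{L}$ started on $E'^a_n$ instead of $E_n'$ could in principle behave differently because $R_a$ is full in one initial state and empty in the other; one must either check (as one can for the specific program of Proposition~\ref{prop:regprop}) that the update formulas do not consult the input relations in a way that breaks this, or argue via the minimal-DFA observation above.
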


\proof
  As $(2) \Rightarrow (3)$ follows by definition, it suffices to show
  $(1) \Rightarrow (2)$ and $(3) \Rightarrow (1)$.

  \smallskip
  \noindent $(1) \Rightarrow (2)$: Let $L$ be a regular language with
  neutral element and $A$ be the minimal DFA accepting $L$. Then, the
  \dynprop program $P$, accepting $\dynamic{L}$, constructed in the
  proof of Proposition~\ref{prop:regprop} accepts exactly
  $\dynamicAlt{L}$.

  It should be clear that the correctness of the update functions of
  $P$ carries over immediately to the current setting. To see that
  also the initialization of the different relations is correct,
  notice that, as $A$ is minimal and $a$ is a neutral element, it
  must hold that $(q,a,p) \in \delta$ iff $q = p$, for all states $p$
  and $q$ of $A$. Since $\word(E^a_n) = a^n$ it now follows from this
  observation that the different relations are properly initialized.

  \smallskip
  \noindent $(3) \Rightarrow (1)$: Let $L$ be a language such that
  $\dynamicAlt{L}$ is accepted by a $\dynprop(\Rel,\Rel)$ program
  $P$. We show that $L$ is regular by constructing a finite automaton
  accepting $L$. Again, this can be done almost identically as in the
  proof of implication $(3) \Rightarrow (1)$ in
  Theorem~\ref{theo:dynprop-reg}. The key point to notice is that a
  position which is labeled $a$ in the current semantics can
  intuitively be seen as an empty, i.e. not-labeled, position in the
  original semantics due to the fact that $a$ is a neutral
  element.

  Therefore, we proceed in two steps. First, completely ignoring the
  symbol $a$, we create the automaton $A$ exactly as in the proof of
  Theorem~\ref{theo:dynprop-reg}. Denote $\Sigma \setminus \{a\}$ by
  $\Sigma_a$. Then, as before, it can be shown that $L(A) = L
  \cap \Sigma_a^*$, i.e. $A$ accepts all strings in $L$ that do not contain an
  $a$. Now, as $a$ is a neutral element of $L$, it holds that $L =
  \bigcup_{w = \sigma_1\cdots\sigma_n \in L(A)} L(a^*\sigma_1a^*\cdots
  a^*\sigma_na^*)$. Hence, the desired automaton $A'$, with $L(A') =
  L$ can be obtained from $A$ by adding the transition $(q,a,q)$ to
  $A$, for all states $q$ of $A$.
\qed
\bigskip
\noindent
{\it Regular Tree Languages.}
% In this section we will show that regular tree languages can be
% maintained in \dynprop(\Fun,\Rel). As our previous
% definitions only applied to word languages, w
We now investigate the dynamic complexity of the regular tree
languages. Thereto, we first define dynamic
tree language. A tree $t$ over an alphabet $\Sigma$ is
encoded by a logical structure $T$ with as universe the first $n$ elements of the list
  $(1,11,12,111,112,\allowbreak 121,122,\ldots)$, for some $n \in \nat$, and
  consisting of (1) one unary relation $R_\sigma$, for each symbol
  $\sigma\in\Sigma$, (2) a constant $\rootnode$, denoting the element
  $1$, and (3) binary relations $\lchild$ and $\rchild$, containing all tuples
  $(u,u1)$ and $(u,u2)$, respectively.

  The updates are terms $\ins_\sigma(u)$ and $\reset(u)$, setting and
  resetting the label of node $u$ in $T$, exactly as in the string
  case. So, the logical structure $T$ is a fixed balanced binary tree
  in which the labels can change. Then, the tree $t$ encoded by $T$ is
  the largest subtree of $T$ whose root is the element $1$ and in
  which all nodes are labelled with an alphabet symbol. Notice that a
  node of $T$ is included in $t$ if it, and all its ancestors, carry
  an alphabet symbol.

Exactly as for the word languages, for a tree language $L$, we let
$\dynamic{L}$ be the set of update sequences leading to a tree $t \in
L$. A dynamic program works on a dynamic tree language exactly as it
does on a dynamic language. We then obtain the following result.

\bigskip

\begin{proposition}\label{prop:trees}
  Let $L$ be a regular tree language. Then, $\dynamic{L} \in
  \dynprop(\Fun,\Rel)$.
\end{proposition}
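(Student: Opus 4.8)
The plan is to adapt the proof of Proposition~\ref{prop:regprop} (regular word languages are in \dynprop), replacing ``substring strictly between two positions'' by ``tree context strictly between two nodes''. Fix a deterministic bottom-up tree automaton $\mathcal{A}=(Q,\delta,F)$ recognizing $L$, extended with a fresh bottom symbol $\bot$ so that $\delta\colon (Q\cup\{\bot\})^2\times\Sigma\to Q$ computes the state of a node from the states of its two children, using $\bot$ for a child that is absent in the current tree $t$ (that is, an unlabelled node of $T$). For a node $u$ of $T$ let $\mathrm{st}(u)\in Q\cup\{\bot\}$ be $\bot$ if $u$ is unlabelled and $\delta(\mathrm{st}(u1),\mathrm{st}(u2),\sigma)$ if $u$ carries $\sigma$; then $t\in L$ iff $\mathrm{st}(\rootnode)$ is accepting. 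For nodes $u\preceq v$ (i.e.\ $v$ lies in the subtree $T_u$) let $\widehat{C}_{u,v}\colon Q\cup\{\bot\}\to Q\cup\{\bot\}$ be the function computed by the one-hole context obtained from $T_u$ by deleting the subtree $T_v$: it sends the state one places at the hole $v$ to the resulting state at $u$. The dynamic program maintains unary relations $S_q$ ($q\in Q\cup\{\bot\}$) with $S_q(u)\equiv\mathrm{st}(u)=q$, binary relations $R_{p,q}$ ($p,q\in Q\cup\{\bot\}$) with $R_{p,q}(u,v)\equiv u\preceq v\wedge\widehat{C}_{u,v}(q)=p$, and the relation $\accept$. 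Since the shape of $T$ is fixed, the program uses the precomputed relation $\anc$ and the precomputed functions $\lchildfun,\rchildfun,\lca$; this is exactly what puts the program in $\dynprop(\Fun,\Rel)$, and functions (not merely relations) are genuinely needed, since referring to a child of a \emph{variable} node inside a quantifier-free formula is only possible through a function symbol. In the initial (empty) tree $\mathrm{st}\equiv\bot$, so $R_{q,q}(u,u)$ holds, $R_{\bot,q}(u,v)$ holds for $\anc(u,v)$ and every $q$, and nothing else; these values are quantifier-free definable from $\anc$ and equality, so by Lemma~\ref{lem:init} we may assume this initialization.

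The heart of the proof is the design of quantifier-free update formulas after $\ins_\sigma(z)$ and $\reset(z)$. The key observation, the exact analogue of the crucial point in Proposition~\ref{prop:regprop}, is that relabelling $z$ never changes $\widehat{C}_{u,v}$ for any context whose hole is at $z$ (the label of $z$ is not part of that context), and more generally leaves $R_{p,q}(u,v)$ untouched whenever $z$ is not a proper node of $C_{u,v}$. Hence: for $S_q$, only $\mathrm{st}(u)$ for $u=z$ or $u$ a proper ancestor of $z$ can change; for $u=z$ the new state is read off $S_{q_1}(\lchildfun(z)),S_{q_2}(\rchildfun(z))$ via $\delta$ (or is $\bot$ for a reset), and for $u$ with $\anc(u,z)$ we use that $\widehat{C}_{u,z}$ is unchanged, so the new $\mathrm{st}(u)$ is the unique $p$ with $R_{p,\cdot}(u,z)$ applied to the freshly computed $\mathrm{st}(z)$ (inlined). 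For $R_{p,q}(u,v)$ we split, by quantifier-free conditions over $\anc$ and equality, into: \textbf{(a)} $z$ is not a proper node of $C_{u,v}$ (including $\neg(u\preceq v)$, $\neg(u\preceq z)$, or $v\preceq z$) --- the relation is unchanged; \textbf{(b1)} $z$ lies on the path from $u$ to $v$, i.e.\ $u\preceq z\prec v$ --- then $C_{u,v}$ factors into $C_{u,z}$, the node $z$ together with its off-path child subtree, and the context below $z$ towards $v$, all three of which are unaffected by relabelling $z$, so the new value composes $R_{\cdot,\cdot}(u,z)$, $\delta(\cdot,\cdot,\sigma)$, $R_{\cdot,q}(c,v)$ and $S_{\cdot}(c')$ where $c,c'$ are the children of $z$ identified by $\anc$-tests (for a reset this degenerates: the new value of $R_{p,q}(u,v)$ is simply $R_{p,\bot}(u,z)$, for every $q$); \textbf{(b2)} $z$ hangs off the path at the branch node $a:=\lca(z,v)$ --- then $C_{u,v}$ factors into $C_{u,a}$, the node $a$ with its $z$-side child subtree $T_s$, and the context below $a$ towards $v$, where only $\mathrm{st}(s)$ changes, so the new value composes the unchanged $R_{\cdot,\cdot}(u,a)$ and $R_{\cdot,q}(c,v)$ with $\delta(\cdot,\cdot,\mathrm{label}(a))$ using the freshly recomputed $\mathrm{st}(s)$, inlined exactly as in the $S_q$-update (which terminates because it bottoms out at the children of $z$). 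Finally, $\accept$ is updated by inlining the $S_q$-update at $\rootnode$ and testing whether the resulting state of $\rootnode$ is accepting (adding a $\bot$-disjunct iff the empty tree lies in $L$).

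Correctness is then a routine induction on update sequences: the base case is the initialization above, and the inductive step amounts to verifying that each of the quantifier-free formulas just described preserves the stated meanings of $S_q$, $R_{p,q}$ and $\accept$ --- case by case, following the context factorizations indicated. I expect the main obstacle to be case \textbf{(b2)} of the $R_{p,q}$-update: one has to locate the branch node $a=\lca(z,v)$, carry out the left/right bookkeeping distinguishing ``the child of $a$ towards $z$'' from ``the child of $a$ towards $v$'', recompute the side-subtree state $\mathrm{st}(s)$ in a bounded quantifier-free manner, and dispose of the degenerate configurations $a=u$, $s=z$, and $z$ being a child of $a$. As in Proposition~\ref{prop:regprop}, the reason all of this is possible within quantifier-free logic is precisely that the context relation $R_{p,q}(u,v)$, read for the fixed descendant $z$, is immune to relabelling $z$, which lets the program ``climb the ancestor chain of $z$'' in a single step rather than iterating.
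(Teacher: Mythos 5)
Your proof is correct and follows essentially the same route as the paper's: it maintains, relative to a deterministic bottom-up automaton, a subtree-state relation and a binary context-transformation relation $R_{p,q}(u,v)$, relies on the precomputed $\anc$, $\lca$ and child functions, and updates via the same case split on whether the touched node $z$ lies outside the context, on the path from $u$ to $v$, or hangs off that path at $\lca(z,v)$. The only notable difference is cosmetic: your $\bot$-extended transition function absorbs the paper's separate connectivity relation $\con$ and its $\eps$/$\leaf$ predicates into the state set (just make sure the $\bot$-convention also covers children absent from the fixed balanced tree $T$, i.e.\ the case $\lchildfun(z)=z$, which the paper guards with its $\leaf$ formula).
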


\proof
  We first introduce some notation. For a node $u$ of $T$, let $\subtree^u_T$ be
the largest subtree of $T$ whose root is $u$ and in which all nodes
are labelled with an alphabet symbol. Hence, $T$ encodes the tree
$\subtree^{\rootnode}_T$. Further, for a tree $t$, we denote its set
  of nodes by $\nodes(t)$, and for $u \in \nodes(t)$, $\lab_t(u)$
  denotes the \emph{label} of $u$ in $t$. 
\medskip

\noindent  The program will make use of the following precomputed relations and
  functions on $T$:
  \begin{itemize}
  \item a binary relation $\anc$, such that $\anc(x,y)$ holds if $x$
    is an ancestor of $y$;
  \item a binary funtion $\lca$, such that $\lca(x,y) = z$ if $z$ is
    the least common ancestor of $x$ and $y$;
  \item a unary function $\parent$ such that $\parent(u) = v$ if
    $\lchild(v,u)$ or $\rchild(v,u)$, and $\parent(u) = u$, if $u = \rootnode$;
  \item unary functions $\lchildfun$ and $\rchildfun$ such that $\lchildfun(u)
    = v$ if $\lchild(u,v)$ and $\lchildfun(u) = u$, otherwise; and $\rchildfun(u)
    = v$ if $\rchild(u,v)$ and $\rchildfun(u) = u$, otherwise.
  \end{itemize}

  Let $L$ be a regular (binary) tree language, and $A =
  (Q,\delta,(q_\sigma^I)_{\sigma \in \Sigma},F)$ be a bottom-up
  deterministic tree automaton accepting $L$, with $\delta: Q \times Q
  \times \Sigma \to Q$ the (complete) transition function. A
  \emph{run} of a $A$ on a tree $t$ is a mapping $\rho: \nodes(t)
  \rightarrow Q$ such that (1) for all leaf nodes $u$ of $t$, $\rho(u)
  = q_\sigma^I$, where $\lab_t(u) = \sigma$, and (2) for all
  non-leaf nodes $u$, with children $u_1,u_2$, we have
  $\delta(\rho(u_1),\rho(u_2),\lab(u)) = \rho(u)$. If $\rho(\rootnode)
  = q$, we say that $\rho$ is a run of $A$ on $t$ \emph{to} $q$. A tree
  $t$ is accepted if there is a run of $A$ on $t$ to $q_f$, for some
  $q_f \in F$.
\smallskip

  We denote by $\subtree^{u,v}_t$ the subtree of $t$ with root $u$
  which contains all descendants of $u$ but no descendants of
  $v$. For such a tree $\subtree^{u,v}$ we will also be interested in
  runs which assign a state $p$ to the new leaf node $v$, not
  necessarily consistent with the label of $v$, and are valid runs
  otherwise. Thereto, a function $\rho: \nodes(\subtree^{u,v}_t)
  \rightarrow Q$ is a run of $A$ on $\subtree^{u,v}_t[v \rightarrow
  p]$ to $q$ iff $\rho(u) = q$, $\rho(v) = p$, and $\rho$ is a valid
  run of $A$ on $\subtree^{u,v}_t$, except for the fact that $p =
  q^I_{\lab(v)}$ does not have to hold.

Before giving the relations we will maintain, we define a few
subformulas which will be used several times in the subsequent
definitions and formulas. 
$$\ancself(x,y) \equiv \anc(x,y) \vee x = y,$$
$$\eps(x) \equiv \bigwedge_{\sigma \in \Sigma} \neg R_\sigma(x),
\mbox{ and }$$
$$\leaf(x) \equiv (\lchildfun(x) = x \vee \rchildfun(x) = x \vee
(\eps(\lchildfun(x)) \wedge \eps(\rchildfun(x)))$$
\newpage
  Our dynamic program will maintain the following relations:
  \begin{itemize}
  \item $\con = \{(x,y) \mid \ancself(x,y) \wedge \forall z \mbox{ with }
\ancself(x,z) \wedge \anc(z,y) \mbox{, $R_\sigma(z)$ is true,}$
  for some
$\sigma \in \Sigma\}$
\item $R_{q} = \{x \mid \mbox{ there is a run of $A$ on $\subtree^x$ to
  $q$}\}, \mbox{ and }$
\item $R_{q_1,q_2} = \{(x_1,x_2) \mid \mbox{ there is a run of $A$ on
  $\subtree^{x_1,x_2}[x_2 \rightarrow q_2]$ to $q_1$}\}$
  \end{itemize}

That is, the relation $\con$ expresses whether elements $x$ and $y$
are \emph{connected} in $T$, i.e. whether all nodes on the path from
$x$ to $y$, except possibly $y$ itself, carry an alphabet symbol. The
relation $R_q$ contains all nodes $x$ for which there is a run on
$\subtree^x$ to $q$, and $(x_1,x_2) \in R_{q_1,q_2}$ intuitively holds
if, assuming there is a run on $\subtree^{x_2}$ to $q_2$, then there
is a run on $\subtree^{x_1}$ to $q_1$. 

First of all, due to Lemma~\ref{lem:init} we can assume that these
relations are initialized correctly as follows:
\begin{itemize}
\item $\con = \{(x,x)\},$
\item for all $q \in Q$, $R_{q} = \emptyset$ , and
\item for all $q_1,q_2 \in Q$,  $R_{q_1,q_2} = \emptyset$ if $q_1 \neq q_2$, and  $R_{q_1,q_2} = \{(x,x)\}$, otherwise.
\end{itemize}

We now give the update formulae for the different relations. First,
the relation $\con$ can easily be maintained. For all $\sigma \in \Sigma$,
\begin{multline*}
\phi^\con_{\ins_\sigma}(y;x_1,x_2) \equiv \big[\neg (\ancself(x_1,y)
\wedge \anc(y,x_2)) \wedge \con(x_1,x_2)\big] \vee \\
\big[\con(x_1,y) \wedge (\con(\lchildfun(y),x_2) \vee \con(\rchildfun(y),x_2))\big]
\end{multline*}

$$
\phi^\con_{\reset}(y;x_1,x_2) \equiv \neg (\ancself(x_1,y)
\wedge \anc(y,x_2)) \wedge \con(x_1,x_2)
$$

Before giving the update formulae for $R_q$ and $R_{q_1,q_2}$ we
define a formula which will be used several times. For $p \in Q$ and
$\sigma \in \Sigma$, the following formula intuitively says ``if node
$x$ is labeled $\sigma$, then there is a run on $\subtree^x$ to $p$'':
\[
\phi_\sigma^p(x) \equiv \big[ (\leaf(x) \wedge q^I_\sigma =
p \big] \vee 
\big[\neg \leaf(x) \wedge \bigvee_{\substack{p_1,p_2 \in
    Q\\\delta(p_1,p_2,\sigma) = p}} (
R_{p_1}(\lchildfun(x)) \wedge R_{p_2}(\rchildfun(x))) \big]
\]

We can now give the different update formula for the insert operation. For
each $\sigma \in \Sigma$ and $q \in Q$, the relation $R_q$ can be
updated as follows
\begin{multline*}
  \phi_{\ins_\sigma}^{R_q}(y;x) \equiv \big[\neg (\ancself(x,y) \wedge
  \con(x,y)) \wedge R_{q}(x)\big] \vee \\ \big[\ancself(x,y)
  \wedge \con(x,y) \wedge 
\bigvee_{p \in Q} (\phi^p_\sigma(y) \wedge R_{q,p}(x,y)\big]
\end{multline*}

The update formula for ${R_{q_1,q_2}}$ is similar but more
involved. It is defined as follows
$$\phi_{\ins_\sigma}^{R_{q_1,q_2}}(y;x_1,x_2) \equiv
\ancself(x_1,x_2) \wedge \phi^\con_{\ins_\sigma}(y;x_1,x_2) \wedge (\phi_1 \wedge \phi_2
\wedge \phi_3 \wedge \phi_4 \wedge
\phi_5),$$ where $\phi_1$ to $\phi_5$ are formulas defined according
to the position of $y$ with respect to $x_1$ and $x_2$:
\begin{itemize}
\item If $y$ does not occur in $\subtree^{x_1,x_2}$ after
  $\ins_\sigma(y)$, or $y = x_2$, then the truth value of
  $R_{q_1,q_2}(x_1,x_2)$ is not changed: $$\phi_1 \equiv (\neg
  \con(x_1,y) \vee \ancself(x_2,y)) \wedge R_{q_1,q_2}(x_1,x_2)$$
\item Let $\lca(x_2,y) = z$. If $y = z$, and $x_2$ is a left
  descendant of $y$, i.e. $\ancself(\lchildfun(y),x_2)$, we can determine the state $p$ of $z$ and use
  this information to decide whether $R_{q_1,q_2}(x_1,x_2)$:
% $$\phi_2 \equiv \bigvee_{p \in Q} \phi_\sigma^p(y) \wedge
% R_{q_1,p}(x_1,y) \wedge R_{p,q_2}(y,x_2)\}$$
\begin{multline*}
\phi_2 \equiv y = z \wedge \ancself(\lchildfun(y),x_2) \wedge \\ \bigvee_{\substack{p,p_1,p_2 \in
    Q\\\delta(p_1,p_2,\sigma) = p}} (R_{p_1,q_2}(\lchildfun(y),x_2)
\wedge R_{p_2}(\rchildfun(y)) \wedge R_{q_1,p}(x_1,y))
\end{multline*}

\item Else if $y = z$, and $x_2$ is a right descendant of $y$, then
  $\phi_3$ is almost identical to $\phi_2$.
\item Else if $y \neq z$ and $y$ is a left descendant of $z$, then: 
  \begin{multline*}
    \phi_4 \equiv y \neq z \wedge \ancself(\lchildfun(z),y) \wedge
    \bigvee_{\substack{p \in Q}} \Big( \phi_\sigma^p(y) \wedge \\
    \bigvee_{\substack{r,r_1,r_2 \in Q, \sigma' \in
        \Sigma\\\delta(r_1,r_2,\sigma') = r}}\big[ R_{\sigma'}(r)
    \wedge R_{r_1,p}(\lchildfun(z),y) \wedge
    R_{r_2,q_2}(\rchildfun(z),x_2) \wedge R_{q_1,r}(x_1,z) \big] \Big)
  \end{multline*}
\item Else if $y \neq z$ and $y$ is a right descendant of $z$, then
  $\phi_5$ is almost identical to $\phi_4$:
\end{itemize}

We now give the different formulae for the reset operation. Again, we
first define a subformula which will be used several times. The
following formula intuitively says ``if node $y$ is reset, and $y'$ is
its parent, then there is a run on $\subtree^{y'}$ to $p$'':
\begin{multline*}
  \psi^p(y,y') \equiv \bigvee_{\substack{\sigma \in \Sigma\\p =
      q^I_\sigma}} R_\sigma(y')  \wedge \\ \big[(\lchildfun(y') = y \wedge \eps(\rchildfun(y')))
\vee (\rchildfun(y') = y \wedge \eps(\rchildfun(y')))\big]
\end{multline*}

We can now define the different formulae for the reset operation. For
all $q \in Q$, 
\begin{multline*}
\phi_{\reset}^{R_q}(y;x) \equiv \big[\neg(\ancself(x,y) \wedge 
\con(x,y)) \wedge R_q(x)\big] \vee\\ \big[\anc(x,y) \wedge \bigvee_{p
  \in Q}\psi^{p}(y,\parent(y)) \wedge R_{q,p}(x,\parent(y))\big]  
\end{multline*}

Again, the formula $\phi_{\reset}^{R_{q_1,q_2}}$ is similar but more involved
$$\phi_{\reset}^{R_{q_1,q_2}}(y;x_1,x_2) \equiv \ancself(x_1,x_2)
\wedge \neg(\ancself(x_1,y) \wedge \anc(y,x_2)) \wedge\con(x_1,x_2)
\wedge (\phi_1 \vee \phi_2 \vee \phi_3)$$

Notice that if any of these conditions is not satisfied then
$R_{q_1,q_2}(x_1,x_2)$ cannot hold after $\reset(y)$. The formulas
$\phi_1$, $\phi_2$ and $\phi_3$ depend on the possible remaining
positions of $y$ w.r.t. $x_1$ and $x_2$. We only have to distinguish
three cases here, opposed to five before, because we do not have to
consider the case $\lca(y,x_2) = y$ anymore. Indeed, if $\lca(y,x_2) =
y$, resetting $y$ immediately disconnects $x_1$ from $x_2$.

\begin{itemize}
\item If $y$ does not occur in $\subtree^{x_1,x_2}$, then the truth
  value of $R_{q_1,q_2}(x_1,x_2)$ is not changed: $$\phi_1 \equiv
  (\neg \con(x_1,y) \vee \ancself(x_2,y)) \wedge
  R_{q_1,q_2}(x_1,x_2)$$
\item Let $\lca(x_2,y) = z$ and $\parent(y) = y'$. If $y \neg z$ and
  $y$ is a left descendant of $z$, then:
  \begin{multline*}
    \phi_4 \equiv y \neq z \wedge \ancself(\lchildfun(z),y) \wedge
    \bigvee_{\substack{p \in Q}} \Big( \psi^p(y,y') \wedge \\
    \bigvee_{\substack{r,r_1,r_2 \in Q, \sigma' \in
        \Sigma\\\delta(r_1,r_2,\sigma') = r}}\big[ R_{\sigma'}(z)
    \wedge R_{r_1,p}(\lchildfun(z),y') \wedge
    R_{r_2,q_2}(\rchildfun(z),x_2) \wedge R_{q_1,r}(x_1,z) \big] \Big)
  \end{multline*}
\item If $y \neq z$ and $y$ is a right descendant of $z$, the formula
  $\phi_3$ is almost identical to $\phi_2$.
\end{itemize}

Finally, the update formulae for the acceptence relation depend only
on the new value of the relations $R_q$, for $q \in Q$. That is, for
all $\sigma \in \Sigma$,
$$\phi_{\ins_\sigma}^{\accept}(x) = \bigvee_{q \in
  F}\phi_{\ins_\sigma}^{R_q}(x,\rootnode)$$ and
$$\phi_{\reset}^{\accept}(x) = \bigvee_{q \in
  F}\phi_{\reset}^{R_q}(x,\rootnode)$$

\qed

\section{Beyond Formal Languages}\label{sec:graphs}

The definitions given in Section~\ref{sec:defs} only concerned dynamic
problems for word structures. Following~\cite{WeberS05}, we now extend
these definitions to arbitrary structures. Thereto, let $\voc$ be a
vocabulary containing relation symbols of arbitrary arities. 
We assume that a structure over $\voc$ of size $n$ has as universe
$\{1,\ldots,n\}$.
The empty structure over vocabulary $\gamma$ of size $n$ and only
empty relations is denoted $E_n(\voc)$.

The set of \emph{abstract updates} $\Delta(\voc)$ is defined as
$\{\ins_R, \del_R \mid R \in \voc\}$. A \emph{concrete update} is a
term of the form $\ins_R(i_1,\ldots,i_k)$ or $\del_R(i_1,\ldots,i_k)$,
where $k = \arity(R)$. %and $i_1$ to $i_k$ are positive integers.
A concrete update is \emph{applicable} in a structure of size $n$ if
$i_j \le n$, for all $j \in [1,k]$.  By $\Delta_n(\voc)$ we denote the
set of applicable concrete updates for structures over $\voc$ of size
$n$. For a sequence $\alpha=\delta_1\ldots
\delta_k\in(\Delta_n(\voc))^+$ of updates we define $\alpha(A)$ as
$\delta_k(\ldots(\delta_1(A))\ldots)$,  
where $\delta(A)$ is the structure obtained from $A$ by setting $R(i_1,\ldots,i_k)$ to true if $\delta =
  \ins_R(i_1,\ldots,i_k)$; and setting $R(i_1,\ldots,i_k)$ to false if $\delta =
  \del_R(i_1,\ldots,i_k)$.

\begin{definition}
  Let $\voc$ be a vocabulary, and $F$ be a set of $\voc$-structures. The
  \emph{dynamic problem} \dynamic{F} is the set of all pairs
  $(n,\alpha)$, with $n > 0$ and $\alpha \in (\Delta_n(\voc))^+$ such
  that $\alpha(E_n(\voc)) \in F$. We call $F$ the \emph{underlying static problem} of
$\dynamic{F}$.
\end{definition}

We now explain how a dynamic program operates
on a dynamic problem. For a program $P$, there again is a program
state $S$ containing the current structure and auxiliary relations,
one of which is $\accept$, which are updated according to the updates
which occur and the update functions of $P$. The state $S$ is
\emph{accepting} iff $S \models \accept$. Then, let $L(P) =
\{(n,\alpha) \mid \alpha \in (\Delta_n(\voc))^* \mbox{ and }
\alpha(E_n'(\voc)) \mbox{ is }\allowbreak \mbox{accepting}\}$, where $E_n(\voc)'$ denotes
the structure $E_n(\voc)$ extended with empty auxiliary relations.

A program $P$ \emph{accepts} a problem $F$ iff $L(P) = \dynamic{F}$. If $P
\in \mcC$, for some dynamic complexity class $\mcC$, we also write
$\dynamic{F} \in \mcC$.
\medskip

\noindent
{\it Incomparability of FO and \dynprop.} As we have seen in the previous sections, when restricted to
monadic input schemas, \dynprop in a sense has the power of
MSO. However, if we add one binary relation \dynprop cannot even
capture first-order logic. This is also true if we allow the program to use precomputed functions from the set $\SetSucc$.

Thereto we will consider \emph{alternating graphs}, coded via the binary edge relation $E$ and two unary relations $A$ and $B$ that form a decomposition of the universe $V$ into the set of \emph{existential} and \emph{universal} nodes. Given a node $s\in V$, the set of all \emph{reachable} nodes $\reachable(s)$ is defined as the smallest set satisfying 
{
\begin{itemize}
	\item $s\in\reachable(s)$,
	\item if $u\in A$ and there is a $v\in\reachable(s)$ such that $(u,v)\in E$, then $u\in\reachable(s)$,
	\item if $u\in B$ and for all $v\in V$ with $(u,v)\in E$ we have $v\in\reachable(s)$, then $u\in\reachable(s)$.
\end{itemize}
}
Now we define \altreach as the problem, given an alternating graph
$G=(A\dot\cup B,E)$ and two nodes $s$ and $t$, is
$t\in\reachable(s)$. We note that \altreach is P-complete (see for
example \cite{vollmerCircuitComplexity}).  

\begin{proposition}\label{prop:altreachDynPropRelRel}
  $\dynamic{\altreach} \notin \dynprop(\Rel,\Rel)$
\end{proposition}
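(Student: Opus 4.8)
The plan is to argue by contradiction and to re-run the automaton-extraction argument from the proofs of Proposition~\ref{prop:propreg} and Theorem~\ref{theo:dynprop-reg}, but now \emph{inside} the alternating-reachability problem rather than over plain strings. Suppose $\dynamic{\altreach}$ is maintained by a $\dynprop(\Rel,\Rel)$ program $P$ whose auxiliary and initial-auxiliary relations all have arity at most $k$. By the Ramsey step (Observation~1' in the proof of Theorem~\ref{theo:dynprop-reg}) any sufficiently large universe contains a $k$-indiscernible set $I$ of vertices, and I would use $I$ to host a family of alternating graphs of one fixed, highly structured shape: a ``skeleton'' that is built once, together with a ``variable part'' that is a sequence of edge insertions among elements of $I$, performed in a canonical order with respect to the precomputed order restricted to $I$. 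For a small fixed depth $d$ (already $d=2$ or $d=3$ should suffice) the skeleton is the layered $\exists/\forall$ gadget for which $\altreach$ computes a fixed nested-quantifier Boolean function of the variable part — for $d=2$, for instance, whether \emph{some} universal node $v$ has \emph{all} of its out-neighbours joined to the source $s$.

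The key technical step is to show, exactly in the spirit of Observations~1 and~3 of Proposition~\ref{prop:propreg}, that when the variable part is inserted in the canonical order the $k$-type of every relevant tuple evolves deterministically, driven only by the current ``letter'' (the isomorphism type of the edge insertion) and a bounded amount of summary information about the part already built. This produces a finite automaton $A$, of size bounded in terms of the schema of $P$ alone, such that a variable-part sequence is accepted by $A$ iff the corresponding alternating graph satisfies $t \in \reachable(s)$. It then suffices to choose the skeleton so that the set of accepted variable-part sequences is \emph{not} regular: a pumping argument then contradicts the finiteness of $A$. The skeleton should be designed so that deciding $\altreach$ reduces to a genuinely global test on the inserted wires — e.g.\ ``$\exists v$ with $N(v)\subseteq S$'', where the neighbourhoods $N(v)$ and the set $S$ of wires joined to $s$ are revealed incrementally — since set containment of this kind is not a regular property of a stream of insertions.

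The main obstacle I expect is reconciling two tensions in the design of the skeleton and the canonical insertion order. On the one hand the insertion order must be regular enough that $k$-indiscernibility of the not-yet-used portion of $I$ is preserved throughout the construction; for graphs the space of possible updates is far richer than the single left-to-right sweep available for strings, so one must be disciplined about which edges are added and in which order, and must re-prove the analogues of Observations~1 and~3 for this sequence. On the other hand the value of $\altreach$ must still depend on features of the gadget that no bounded type information can carry, so that the extracted automaton is forced to recognise a non-regular language. A subsidiary point, which is exactly what the Ramsey step buys, is that the arbitrary precomputed relations are harmless: on the monochromatic set $I$ they contribute nothing beyond a single $k$-type, so the whole argument is insensitive to precomputation. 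Once a workable gadget is pinned down, the remaining verifications — that the canonical sequence preserves indiscernibility, that $A$ is well defined, and that its language is non-regular — are technical but routine.
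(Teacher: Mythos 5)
Your plan hinges on a step that the paper's string-case machinery does not actually supply: extracting from the graph dynamics a finite automaton ``of size bounded in terms of the schema of $P$ alone.'' In the proof of Proposition~\ref{prop:propreg} that extraction works because the canonical update sequence touches each position exactly once, in left-to-right order, so the untouched suffix remains $k$-indiscernible and the automaton state can be taken to be the $k$-type of the next $k$ fresh elements. Any gadget rich enough to make $\altreach$ non-regular destroys exactly this structure: hub nodes such as $s$, $t$, or the universal nodes are endpoints of unboundedly many inserted edges, so they are touched over and over, the already-built part of the gadget consists of elements with pairwise distinct types, and no indiscernibility survives among the vertices that matter. The ``bounded amount of summary information about the part already built'' that your automaton state is supposed to carry is precisely the quantity that grows with the gadget --- and proving that it \emph{cannot} be compressed into a bounded type is the whole content of the lower bound, so you cannot assume it in order to define $A$. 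As written, the proposal defers the two load-bearing items (the concrete gadget, and the analogue of Observations~1 and~3 for repeated-touch edge insertions) to ``technical but routine'' verification, and I do not see how either one goes through.

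The paper's proof avoids the automaton entirely and runs a direct pigeonhole on types of \emph{long} tuples. It builds graphs $G_m$ containing a layer $R$ of $2^{\binom{2m}{m}}\ge 2^{2^m}$ candidate nodes $r_J$ (one per subset $J$ of the universal layer $Q$), and considers the $(2m+3)$-tuples $V_r=(s,t,r,p_1,\dots,p_{2m})$. By Lemma~\ref{lemma:ktype} the number of $(2m+3)$-types is only $2^{\mathrm{poly}(m)}$, so for large $m$ two distinct $r,r'$ satisfy $\type{S}{V_r}=\type{S}{V_{r'}}$; Lemma~\ref{lemma:reachable} produces a single update sequence of length $m+1$, supported entirely on these tuples, after which $t\in\reachable(r)$ but $t\notin\reachable(r')$, and Lemma~\ref{lemma:SequenceWorkingOverTuples} forces the program to answer identically in both cases. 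Note that the state here is a type of a tuple whose length grows with $m$ --- the argument only needs single-exponential versus double-exponential growth, not finiteness --- and precomputation is handled simply by including the initial auxiliary relations in the type count, with no Ramsey step needed. If you want to rescue your route, you would have to either restrict to gadgets where every vertex is touched $O(1)$ times (which seems incompatible with encoding a non-regular property via $\altreach$) or replace ``finite automaton'' by ``automaton with $2^{\mathrm{poly}(m)}$ states'' and then you are back to the paper's counting argument.
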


Before we can prove the proposition we will state a lemma that describes
an important property of \dynprop-algorithms. An update sequence
\emph{working on $k$-tuples} $\alpha$ is a sequence of updates over
the (abstract) universe $\{1,...,k\}$. Given a $k$-tuple $\bar
i=(i_1,...,i_k)$, $\alpha(\bar i)$ will denote the sequence of updates
one obtains when applying the updates on the elements of the $k$-tuple
$\bar i$, so instead of using the (abstract) universe element $x$ the
element $i_x$ should be used. For example the update $\ins_R(1,4,2)$
would result in an update $\ins_R(i_1, i_4, i_2)$. We also use the
notion of types as introduced in the proof of
Proposition~\ref{prop:propreg}.

\begin{mylemma}\label{lemma:SequenceWorkingOverTuples}
  Let $\alpha$ be a sequence of updates working on $k$-tuples. Let $P$
  be a $\dynprop(\Rel,\Rel)$ program, $S$ a state of $P$ and consider
  two tuples of elements $\bar i=(i_1, ..., i_k)$ and $\bar j=(j_1,
  ..., j_k)$ of $S$ such that $\type{S}{\bar i} = \type{S}{\bar
    j}$. Then, $\type{\alpha(\bar i)(S)}{\bar i} = \type{\alpha(\bar
    j)(S)}{\bar j}$, i.e. the type of $\bar i$ after applying
  $\alpha(\bar i)$ and the type of $\bar j$ after applying
  $\alpha(\bar j)$ are still the same. In particular, the value of the
  $\accept$-relation is the same in $\alpha(\bar i)(S)$ and
  $\alpha(\bar j)(S)$.
\end{mylemma}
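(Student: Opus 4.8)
The plan is to prove the lemma by induction on the length of $\alpha$, resting on a single structural fact: the truth value of a quantifier-free formula at a tuple of elements depends only on the \emph{type} of that tuple, in the sense of the proof of Proposition~\ref{prop:propreg}. Note that since $P\in\dynprop(\Rel,\Rel)$ there are no auxiliary functions or update terms, so all update functions are quantifier-free formulas $\phi^R_{\op}(y_1,\dots,y_m;x_1,\dots,x_r)$, and the induction on formulas stays elementary; precomputed (initial auxiliary) relations are part of the schema and hence recorded in types, and the ``built-in'' insertion/deletion of the input relations can be written as a quantifier-free update formula, so everything is handled uniformly.

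First I would isolate the auxiliary claim: if $M,M'$ are states, $\bar a,\bar b$ are $k$-tuples with $\type{M}{\bar a}=\type{M'}{\bar b}$, and $\rho:\{1,\dots,p\}\to\{1,\dots,k\}$ is any index map, then for every quantifier-free formula $\phi(z_1,\dots,z_p)$ over the schema of $P$ we have $M\models\phi(a_{\rho(1)},\dots,a_{\rho(p)})$ iff $M'\models\phi(b_{\rho(1)},\dots,b_{\rho(p)})$. This follows by structural induction on $\phi$: for an atom — an equality $z_s=z_t$ or a relation atom $R(z_{t_1},\dots,z_{t_r})$, possibly with repeated or permuted variables — the statement is exactly the definition of the type, which by construction records \emph{all} atomic formulas over $x_1,\dots,x_k$ (equalities and relation atoms alike, in every order and with repetitions) that hold at $\bar a$ resp.\ $\bar b$; Boolean connectives are immediate.

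For the induction on $|\alpha|$, the base case $\alpha$ empty is trivial. For the step write $\alpha=\beta\delta$ with $\delta=\op(c_1,\dots,c_m)$, $c_1,\dots,c_m\in\{1,\dots,k\}$, and set $S_1=\beta(\bar i)(S)$, $S_1'=\beta(\bar j)(S)$; by the induction hypothesis $\type{S_1}{\bar i}=\type{S_1'}{\bar j}$. For every relation $R$ of arity $r$ in the schema (the input relations, the auxiliary relations, and the $0$-ary $\accept$) and every choice of indices $s_1,\dots,s_r\in\{1,\dots,k\}$, the element $(i_{s_1},\dots,i_{s_r})$ lies in $R$ in the state $\delta(\bar i)(S_1)=\alpha(\bar i)(S)$ iff $S_1\models\phi^R_{\op}(i_{c_1},\dots,i_{c_m};i_{s_1},\dots,i_{s_r})$; applying the auxiliary claim to the quantifier-free formula $\phi^R_{\op}$ and the obvious index map, this holds iff $S_1'\models\phi^R_{\op}(j_{c_1},\dots,j_{c_m};j_{s_1},\dots,j_{s_r})$, i.e.\ iff $(j_{s_1},\dots,j_{s_r})\in R$ in $\delta(\bar j)(S_1')=\alpha(\bar j)(S)$. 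Equality atoms are untouched by the update and already agree on $\bar i$ and $\bar j$ since they belong to the type. Hence every atom over $x_1,\dots,x_k$ takes the same value at $\bar i$ in $\alpha(\bar i)(S)$ as at $\bar j$ in $\alpha(\bar j)(S)$, which is precisely $\type{\alpha(\bar i)(S)}{\bar i}=\type{\alpha(\bar j)(S)}{\bar j}$; the statement about $\accept$ is the case $r=0$.

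I do not expect a genuine obstacle here — the content is a careful induction. The only point needing attention is bookkeeping: when $\alpha(\bar i)$ is executed on $S$, every element that appears (both as an argument of an update and as a component of a maintained tuple whose membership is queried) lies in $\{i_1,\dots,i_k\}$, so that $\type{S_1}{\bar i}$ genuinely controls all relevant quantifier-free evaluations, and the direct updates of input relations plus the precomputed relations must be folded into the same framework as the quantifier-free update formulas for auxiliary relations.
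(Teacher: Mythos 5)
Your proof is correct and follows essentially the same route as the paper's: induction on the length of $\alpha$, reduced to the observation that a quantifier-free update formula evaluated at parameters and arguments drawn from $\bar i$ (resp.\ $\bar j$) depends only on the type of the combined tuple. Your write-up is merely more explicit about the structural induction on formulas and about folding the direct input-relation updates and the precomputed relations into the same framework, which the paper leaves implicit.
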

\proof It suffices to consider one update operation $\delta$ working
on $k$-tuples. Then the lemma follows by induction on the length of
the update sequence $\alpha$. Let $\iota$ be the tuple of elements in
$\{1,\dots,k\}$ being the parameters of $\delta$. And consider any
(auxiliary) relation $R$ updated by the program $P$ on a tuple
$\kappa$ also of elements in $\{1,\dots,k\}$. Let $\iota(\bar i)$,
$\iota(\bar j)$, $\kappa(\bar i)$ and $\kappa(\bar j)$ denote the
corresponding tuples in state $S$. Then the evaluation of the update
formula for $R$ on $\kappa(\bar i)$ after the operation $\delta$ with
parameters $\iota(\bar i)$ is dependend only on the type of the
elements in $\kappa(\bar i)\cup\iota(\bar i)$. The same holds for the tuples
corresponding to $\bar j$. Since the types $\type{S}{\bar i}$ and
$\type{S}{\bar j}$ coincide, the update formula evaluates to the same
value.  \qed

\noindent{\it Proof of Proposition~\ref{prop:altreachDynPropRelRel}.}
  We first define a family of alternating graphs $\mathcal{G} = \{G_m
  \mid m\in\mathbb{N}\}$. Every graph $G_m$ consists of the following
  set $V_m$ of nodes:
\begin{itemize}
	\item two nodes $s$ and $t$
	\item a set of $2m$ nodes $P=\{p_1,...,p_{2m}\}$
	\item for each subset $I$ of $P$ of size $m$ a node $q_I$, forming the set $Q$ (of size ${2m \choose m}$)
	\item for each subset $J$ of $Q$ a node $r_J$, forming the set $R$ (of size $2^{|Q|}$).
\end{itemize}
All nodes are existential nodes except the nodes in set $Q$, which are
universal. Further, the following set of edges $E_m$ is already
present in the graph $G_m$:
\begin{itemize}
	\item for each subset $I$ of $P$ of size $m$ the set of edges $\{(q_I, p) \mid p\in I\}$ and
	\item for each subset $J$ of $Q$ the set of edges $\{(r_J, q) \mid q\in J\}$.
\end{itemize}
As updates we will only consider insertions of edges from $s$ to nodes in the set $R$ and from nodes in the set $P$ to $t$.  

We will show that no dynamic program can maintain auxiliary relations
such that it can incrementally answer the question whether $t$ is
reachable from $s$ in the alternating graph, starting from any $G_m$
and arbitrary precomputation on the auxiliary relations. This will
prove the claimed proposition.

We will make use of the following two lemmas:
\begin{mylemma}\label{lemma:reachable}
  For every $m$ and every pair of distinct nodes $r, r'\in R$ of $G_m$ there
  exists a set $I\subset P$ of size $m$ such that in the graph $G'_m
  := (V_m, E_m \cup \bigcup_{p\in I} (p,t))$ it holds that
  $t\in\reachable(r)$ and $t\notin\reachable(r')$.
\end{mylemma}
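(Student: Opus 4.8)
The plan is to exploit the layered structure of $G_m$ together with the definition of alternating reachability, reducing the claim to an elementary set-theoretic observation about the index sets $J,J'\subseteq Q$ attached to $r$ and $r'$.

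First I would record three facts about alternating reachability in any graph of the form $G'_m=(V_m,\,E_m\cup\bigcup_{p\in I_0}(p,t))$ with $I_0\subseteq P$ of size $m$. (a) A node $p\in P$ can alternately reach $t$ iff the edge $(p,t)$ was added, since in $E_m$ the nodes of $P$ have no outgoing edges. (b) A node $q_K\in Q$ is universal and its out-neighbours are exactly the $p\in K$, so $q_K$ can alternately reach $t$ iff every $p\in K$ can, i.e.\ iff $K\subseteq I_0$; as $|K|=|I_0|=m$ this is equivalent to $K=I_0$. Hence $q_{I_0}$ is the \emph{unique} node of $Q$ that can alternately reach $t$ in $G'_m$. (c) A node $r_{J''}\in R$ is existential with out-neighbours exactly the $q\in J''$, so $r_{J''}$ can alternately reach $t$ iff some out-neighbour can, i.e.\ iff $q_{I_0}\in J''$.

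With these in hand the lemma is immediate. Write $r=r_J$ and $r'=r_{J'}$; since $r\neq r'$ we have $J\neq J'$, so — after possibly interchanging the names of the two nodes of the (unordered) pair — we may assume $J\setminus J'\neq\emptyset$. Pick any $q_I\in J\setminus J'$; by definition of $Q$ we have $|I|=m$. Taking $I_0:=I$ and forming $G'_m$ as in the statement, fact (c) gives that $t\in\reachable(r)$ because $q_I\in J$, while fact (c) (using the uniqueness in (b)) gives that $t\notin\reachable(r')$ because $q_I\notin J'$. Thus $I$ is the required set of size $m$.

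The only point requiring care is the ordering of the pair $r,r'$: if $J\subsetneq J'$ then reachability of $t$ from $r_J$ would force reachability from $r_{J'}$, so choosing which node plays the role of $r$ is genuinely necessary; this is harmless, since the lemma concerns an unordered pair and its later use together with Lemma~\ref{lemma:SequenceWorkingOverTuples} in the proof of Proposition~\ref{prop:altreachDynPropRelRel} only needs the existence of a suitable labelling. Everything else is routine bookkeeping once facts (a)–(c) are established.
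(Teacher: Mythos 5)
Your proof is correct and follows essentially the same route as the paper's: both arguments identify a $q_I$ lying in the index set of one of $r,r'$ but not the other, observe that $q_I$ is then the unique node of $Q$ from which $t$ is alternately reachable once the edges $\{(p,t)\mid p\in I\}$ are inserted, and conclude via the existential nodes of $R$. Your explicit remark that one may need to swap the roles of $r$ and $r'$ when $J\subsetneq J'$ is a small but welcome point of care that the paper's own proof passes over silently (and which is harmless for the application, since the type-equality used in Proposition~\ref{prop:altreachDynPropRelRel} is symmetric).
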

\begin{proof}
  Each of the nodes in $R$ corresponds to some (different) subset of
  $Q$. Hence, by definition of $G_m$, there must exist a node $q_I\in Q$
  which in $G_m$ is reachable from $r$ but not from $r'$. We show that
  the set $I \subset P$ is the desired set,
  i.e. for $G'_m := (V_m, E_m \cup \bigcup_{p\in I} (p,t))$ it holds
  that $t\in\reachable(r)$ and $t\notin\reachable(r')$. Thereto,
  notice that in $G'_m$ it holds that $q_I$ is the only node in the set
  $Q$ such that $t\in\reachable(q_I)$. This follows from the fact that
  all nodes in $Q$ are universal nodes. But now, as $r$ and $r'$ are
  existential nodes, it holds that $t\in\reachable(r)$ and
  $t\notin\reachable(r')$, which concludes the proof.
\end{proof}
\begin{mylemma}\label{lemma:ktype}
The number of possible $k$-types of a structure with $x$ auxiliary relations of maximal arity $y$ is bounded by $2^{x\cdot k^y}$.
\end{mylemma}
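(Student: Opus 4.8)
The plan is to unwind the definition of a $k$-type and then perform a direct cardinality count. Recall from the proof of Proposition~\ref{prop:propreg} that the type $\type{S}{\bar i}$ of a $k$-tuple $\bar i=(i_1,\ldots,i_k)$ is the set of those atomic formulas over the variables $x_1,\ldots,x_k$ that hold in $S$ under the assignment $x_j \mapsto i_j$. Hence a $k$-type is nothing more than a subset of the finite set $\mathcal{A}_k$ of all atomic formulas over $x_1,\ldots,x_k$ built from the relation symbols of the schema, and the number of $k$-types is therefore bounded by $2^{|\mathcal{A}_k|}$. So the whole task reduces to bounding $|\mathcal{A}_k|$.

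First I would count, for a single auxiliary relation $R$ of arity $m$, the atomic formulas of the form $R(x_{j_1},\ldots,x_{j_m})$: there is exactly one for each tuple $(j_1,\ldots,j_m)\in\{1,\ldots,k\}^m$, i.e.\ $k^m$ of them. Since every auxiliary relation has arity at most $y$ and $k\ge 1$, each relation contributes at most $k^y$ such formulas. Summing over the $x$ auxiliary relations gives $|\mathcal{A}_k|\le x\cdot k^y$, and consequently the number of $k$-types is at most $2^{x\cdot k^y}$, as claimed.

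The only point that needs a word of care is whether equality atoms $x_i = x_j$ should be included in $\mathcal{A}_k$. Since the types relevant here are always types of ordered tuples, the equality atoms are completely determined by the ordering and contribute nothing to the count; even if one wished to count arbitrary $k$-types, adding the $k^2$ equality atoms is still comfortably within the stated bound. I do not expect any real obstacle: the statement is a routine counting estimate, and the proof consists exactly of the two observations above.
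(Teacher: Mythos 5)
Your proof is correct and follows essentially the same route as the paper's: bound the number of distinct atoms by $x\cdot k^y$ (at most $k^y$ per relation of arity at most $y$) and observe that a $k$-type is a subset of the atoms, giving at most $2^{x\cdot k^y}$ types. Your side remark on equality atoms is handled in the paper by simply counting equality (and order) among the relations when the lemma is applied, which is cleaner than claiming the extra $k^2$ atoms fit "within the stated bound," but this does not affect correctness.
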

\begin{proof}
  A $k$-type is constructed from a set of atoms $R(\bar j)$ (where
  each element in $\bar j$ is in $[1,k]$) by adding either $R(\bar j)$
  or $\neg R(\bar j)$ to the $k$-type. Hence, there exist at most
  $2^{|\text{atoms}|}$ different $k$-types where $|\text{atoms}|$
  denotes the number of different atoms. For one $y$-ary relation $R$
  all atoms of $R$ can be seen as the set of all $y$-tuples of
  elements in $\bar i$. So, one relation of arity $y$ produces $k^y$
  different atoms. As there are $x$ different relations, there are
  hence at most $x\cdot k^y$ atoms, and thus at most $2^{x\cdot k^y}$
  different $k$-types.
\end{proof}
\noindent Now, assume, towards a contradiction, there exists a dynamic program $P$ for
$\dynamic{\altreach}$ in $\dynprop(\Rel,\Rel)$ that makes use of $a$
auxiliary relations of maximal arity $b$. For a graph $G_m$ and all
nodes $r\in R$, we will consider the tuples $V_r :=
(s,t,r,p_1,...,p_{2m})$.

Since 
\[
|Q| = {2m \choose m} = \prod_{i=0}^{m-1} \frac{2m-i}{m-i} \geq \prod_{i=0}^{m-1} \frac{2m}{m} = 2^m \textrm{ and so } |R|\geq 2^{2^m} 
\]
there exists a number $m$ such that $|R|$ is bigger than the number of
$(2m+3)$-types in any state $S$ of $P$. Indeed, from
Lemma~\ref{lemma:ktype} and the fact that the program can use $a+6$
relations (the auxiliary relations, the input relations $E$, $A$, and
$B$, and the equality, order and \accept relations) of maximal arity
$b$, it follows that the number of $(2m+3)$-types in $S$ is bounded by
$2^{(a+6)\cdot(2m+3)^b}$. For a large enough value of $m$, this is
clearly dominated by $2^{2^m}$. Now, fix such a large enough $m$ and
corresponding graph $G_m$, and let $S$ be a state $P$ is in when the
current graph is $G_m$. Then, due to the above reasoning, in the set
$R$ (of $G_m$) there must exist two distinct elements $r$ and $r'$
such that $\type{S}{V_r} = \type{S}{V_{r'}}$.

Now, according to Lemma~\ref{lemma:reachable}, we can find a set $I$
of $m$ elements in $P$ such that after the insertion of all edges
$\{(p,t)\mid p\in I\}$ it holds that $t\in\reachable(r)$ and
$t\notin\reachable(r')$. Let $I = \{p_{i_1}, ..., p_{i_m}\}$ and consider
the two sequences of update operations
\begin{eqnarray*}
\alpha  & = &  (\ins_E(p_{i_1},t), ..., \ins_E(p_{i_m}, t), \ins_E(s,r))  \\   
\alpha' & = &  (\ins_E(p_{i_1},t), ..., \ins_E(p_{i_m}, t), \ins_E(s,r')).    
\end{eqnarray*}
Notice that $\alpha(G_m)$ yields a graph in which $t \in
\reachable(s)$, whereas $t \notin \reachable(s)$ in
$\alpha'(G_m)$. However, as $\type{S}{V_r} = \type{S}{V_{r'}}$, it
follows from Lemma~\ref{lemma:SequenceWorkingOverTuples} that also
$\type{\alpha(S)}{V_r} = \type{\alpha'(S)}{V_{r'}}$. Hence, $P$ will
either in both cases claim that $t \in \reachable(s)$ (if $\accept$
holds in $\type{\alpha(S)}{V_r}$) or claim in both cases that $t
\notin \reachable(s)$. We can conclude that there does not exist a
$\dynprop(\Rel,\Rel)$ program for $\altreach$.
\qed

This proof can be adapted to show that even with a precomputed successor-relation, one cannot maintain the reachability problem in alternating graphs. 

\begin{proposition}\label{proposition:AltReachDynSucc}
  $\dynamic{\altreach} \notin \dynprop(\SetSucc,\Rel)$
\end{proposition}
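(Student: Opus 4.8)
The plan is to reuse the proof of Proposition~\ref{prop:altreachDynPropRelRel} almost verbatim: the family $\{G_m\}$, the tuples $V_r=(s,t,r,p_1,\ldots,p_{2m})$, Lemma~\ref{lemma:reachable}, and the pigeonhole step are kept unchanged. The only two ingredients that must be reworked are Lemma~\ref{lemma:SequenceWorkingOverTuples} and the type count of Lemma~\ref{lemma:ktype}, both of which now have to account for the built-in functions $\succM,\preM,\minM$. The enabling observation is that each update term of $P$ has a fixed maximal nesting depth $d$, so a term built from a tuple $\bar i=(i_1,\ldots,i_k)$ can only evaluate to an element lying within distance $d$ of some $i_j$ in the order, or within distance $d$ of the fixed element $1$ (reached via $\minM$). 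Hence, for a bound $L$ on the length of the update sequences we consider, the behaviour of $P$ on (sub-tuples of) $\bar i$ is governed by what I will call the \emph{extended type} of $\bar i$: the order type (including coincidences at the top element $N$) of the ``$Ld$-closure'' $\mathrm{Cl}(\bar i)$ of $\bar i$ — its $Ld$-neighbourhoods together with $\{1,\ldots,Ld+1\}$ — together with all atomic facts (in the auxiliary, input, equality, order and $\accept$ relations) that hold on $\mathrm{Cl}(\bar i)$. Note that the order part of the extended type is static: $\succM,\preM,\minM$ and the order are never updated, so only the relational part evolves.

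First I would restate and reprove Lemma~\ref{lemma:SequenceWorkingOverTuples} in this form: if $\bar i$ and $\bar j$ have the same extended type (with respect to length $L$) in a state $S$, then for any abstract update sequence $\alpha$ of length at most $L$ working on $k$-tuples, $\alpha(\bar i)(S)$ restricted to $\bar i$ and $\alpha(\bar j)(S)$ restricted to $\bar j$ still have the same extended type; in particular $\accept$ agrees. The proof is the induction of the original lemma with one extra bookkeeping point: a single update, whose parameters lie inside the tuple, recomputes the relations on the radius-$t$ part of $\mathrm{Cl}(\bar i)$ by evaluating quantifier-free formulas whose terms reach only into the radius-$(t+d)$ part; so the extended type ``for the remaining $L-1$ updates'' after the update is a function of the extended type ``for length $L$'' before it, and since all updates are on elements of the tuple, nothing outside $\mathrm{Cl}(\bar i)$ is ever touched, so the induction closes down to radius $0$, i.e.\ to $\bar i$ itself and the nullary relation $\accept$.

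Next I would redo the count in place of Lemma~\ref{lemma:ktype}. The sequences $\alpha,\alpha'$ used in the argument have length $L=m+1$, so $\mathrm{Cl}(V_r)$ has at most $(2m+3)(2(m+1)d+1)+((m+1)d+1)=O(m^{2}d)$ elements; hence the number of extended types of a $(2m+3)$-tuple is bounded by the number of order patterns on $O(m^2 d)$ points times $2^{(a+6)\cdot (O(m^2 d))^{b}}$, where $a$ and $b$ are the total number and the maximal arity of all available relations (as in Lemma~\ref{lemma:ktype}). For fixed $P$ this is $2^{\mathrm{poly}(m)}$, whereas $|R|\ge 2^{2^m}$, so for large enough $m$ there are two distinct $r,r'\in R$ with the same extended type on $V_r$ and $V_{r'}$ in the state $S$ reached after building $G_m$ — note that $P$ has no control over $\succM,\preM,\minM$, but this is irrelevant, as we only need that few extended types exist, so a direct pigeonhole suffices and no Ramsey argument is required. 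From here the argument is exactly that of Proposition~\ref{prop:altreachDynPropRelRel}: Lemma~\ref{lemma:reachable} supplies a set $I\subseteq P$ of size $m$ and the two sequences $\alpha,\alpha'$, which are one and the same abstract sequence applied to $V_r$ resp.\ $V_{r'}$; they lead to graphs with different answers to \altreach but, by the generalized Lemma~\ref{lemma:SequenceWorkingOverTuples}, to states with the same value of $\accept$, a contradiction.

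The main obstacle is getting the definition of extended type and its stability lemma exactly right, in particular the interaction between the term depth $d$ and the update-sequence length $L$: one must observe that the relevant closure radius is $Ld$, not $d$, but also that $L$ is only $m+1$, so the blow-up in the closure size is merely polynomial and the resulting count $2^{\mathrm{poly}(m)}$ still loses decisively to $|R|\ge 2^{2^m}$. Once this is set up, everything else is a routine transcription of the relational case.
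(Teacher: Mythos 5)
Your proposal is correct and follows essentially the same route as the paper: the paper's Lemma~\ref{lemma:SequenceWorkingOverTuplesSucc} is exactly your extended-type stability lemma (with $c$ playing the role of your nesting depth $d$, the $(m+1)c$-neighbourhoods of $V_r$ and of $\minM$ playing the role of your $Ld$-closure, and the same induction shrinking the radius by $c$ per update), and the paper likewise finishes with a direct pigeonhole on the $2^{\mathrm{poly}(m)}$ many neighbourhood types against $|R|\ge 2^{2^m}$, reusing $G_m$, $V_r$, Lemma~\ref{lemma:reachable} and the two update sequences unchanged. The only cosmetic difference is that you separate the (static) order pattern of the closure from the evolving relational atoms, whereas the paper folds the order into the type directly.
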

  In order to prove this we need an observation similar to
  Lemma~\ref{lemma:SequenceWorkingOverTuples} for $\dynprop(\SetSucc,
  \Rel)$. For an element $i$ and a number $l$ let the
  \emph{$l$-neighborhood} of $i$, denoted $\neighborhood{l}(i)$, be
  the following tuple of elements:
\[
\left(\preM^l(i), \preM^{l-1}(i), ..., \preM(i), i, \succM(i), ..., \succM^{l-1}{i}, \succM^l(i)\right).
\]
For a tuple of elements $\bar i$, we denote by $\neighborhood{l}(\bar i)$
the tuple $\left(\neighborhood{l}(\minM), \neighborhood{l}(i_1), ...,\neighborhood{l}(i_k)\right)$. 

\begin{mylemma}\label{lemma:SequenceWorkingOverTuplesSucc}
  Let $\alpha$ be a sequence of updates working on $k$-tuples such
  that $|\alpha|=l$. For each $\dynprop(\SetSucc,\Rel)$ program $P$,
  there exists a number $c$, depending only on $P$, such that the
  following holds: let $S$ be some state of $P$ and consider
  two tuples of elements $\bar i=(i_1, ..., i_k)$ and $\bar j=(j_1,
  ..., j_k)$ of elements of $S$ such that
  $\type{S}{\neighborhood{c\cdot l}(\bar i)} =
  \type{S}{\neighborhood{c\cdot l}(\bar j)}$. Then, $\type{\alpha(\bar
    i)(S)}{\bar i} = \type{\alpha(\bar j)(S)}{\bar j}$.
\end{mylemma}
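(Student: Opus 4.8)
The plan is to follow the proof of Lemma~\ref{lemma:SequenceWorkingOverTuples} almost verbatim, the one genuinely new phenomenon being that a quantifier-free update term or formula over $\SetSucc$ may apply $\succM$ and $\preM$ to its arguments, but only a bounded number of times. Hence the slogan ``the type of the parameters determines the effect of an update'' must be weakened to ``the type of a bounded-radius neighbourhood of the parameters determines the effect of an update'', and iterating this over the $l$ updates of $\alpha$ makes the relevant radius grow linearly in $l$. Concretely, let $c$ be the maximal nesting depth of $\succM$ and $\preM$ occurring in any of the finitely many update terms and update formulas of $P$; this $c$ depends only on $P$. I will prove, by induction on $l=|\alpha|$, the following strengthening: for every $r\ge 0$, if $\type{S}{\neighborhood{c(l+r)}(\bar i)}=\type{S}{\neighborhood{c(l+r)}(\bar j)}$, then $\type{\alpha(\bar i)(S)}{\neighborhood{cr}(\bar i)}=\type{\alpha(\bar j)(S)}{\neighborhood{cr}(\bar j)}$. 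Taking $r=0$ yields the lemma, since $\neighborhood{0}(\bar i)=(\minM,i_1,\ldots,i_k)$ and the type of this tuple already determines $\type{\cdot}{\bar i}$; in particular the value of the $0$-ary relation $\accept$ is the same in $\alpha(\bar i)(S)$ and $\alpha(\bar j)(S)$.

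The base case $l=0$ is immediate, as the premise coincides with the conclusion. For the inductive step, write $\alpha=\alpha'\delta$ with $|\alpha'|=l-1$, and let $\bar p$ be the parameter tuple of the single update $\delta$ (a subtuple of $(1,\ldots,k)$, so that $\bar p(\bar i)$ sits inside $\bar i$). The key one-step observation is: whether a fixed atomic formula over the variables naming the elements of $\neighborhood{cr}(\bar i)$ holds in $\delta(\bar i)(S_1)$ is a function, uniform in the state $S_1$ and in the tuple $\bar i$, of $\type{S_1}{\neighborhood{c(r+1)}(\bar i)}$. Indeed, a new auxiliary-relation value is computed by a quantifier-free formula whose terms reach only elements obtained by at most $c$ further applications of $\succM/\preM$ to $\bar p(\bar i)$, to $\minM$, and to the elements under consideration; the parameter side and $\minM$-side stay within $\neighborhood{c}(\bar i)\subseteq\neighborhood{c(r+1)}(\bar i)$, while for the ``element under consideration'' side the (full atomic) type of $\neighborhood{c(r+1)}(\bar i)$ already determines the behaviour of every relation and every (in)equality on all elements reachable from $\neighborhood{cr}(\bar i)$ by $\succM/\preM$, in particular on those $c$ extra shifts. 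Direct changes to the input relations under $\delta(\bar i)$ occur only at tuples over $\bar p(\bar i)$, which are named inside $\neighborhood{0}(\bar i)$, so they are accounted for as well. Applying the induction hypothesis to $\alpha'$ with radius $r+1$ (whose premise $\type{S}{\neighborhood{c((l-1)+(r+1))}(\bar i)}=\type{S}{\neighborhood{c((l-1)+(r+1))}(\bar j)}$ is exactly the given premise, since $(l-1)+(r+1)=l+r$) gives $\type{\alpha'(\bar i)(S)}{\neighborhood{c(r+1)}(\bar i)}=\type{\alpha'(\bar j)(S)}{\neighborhood{c(r+1)}(\bar j)}$, and then the one-step observation, applied in the states $\alpha'(\bar i)(S)$ and $\alpha'(\bar j)(S)$, yields the claim.

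The main obstacle is the bookkeeping around $\succM$ and $\preM$ inside the one-step observation, which is the only place the argument differs in substance from the relational case: one must be sure that (i) the ``far'' atoms of a type, i.e.\ those that after instantiation point to elements lying far outside $\neighborhood{cr}(\bar i)$, are nevertheless already pinned down by the type of the bounded neighbourhood, and (ii) the boundary behaviour $\succM(n)=n$ and $\preM(\minM)=\minM$ cannot make two named elements coincide for $\bar i$ but not for $\bar j$. Both are handled by the observation that the type of a tuple is a \emph{full} atomic type over the vocabulary extended by $\succM,\preM,\minM$: for every term over the named variables it records the truth of every relation atom and every equality atom, so it ``knows'' exactly how far each named element sits from the ends of the universe and how every relation behaves on every element reachable from the named ones. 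Only the parameters of the updates feed genuinely new information into the cascade, and each update contributes at most a radius $c$; hence the $l$ updates of $\alpha$ are absorbed by the radius-$c\cdot l$ neighbourhood in the premise. The remaining verification that the one-step ``symbolic evaluation on types'' is indeed well defined and independent of the concrete $(\bar i,S_1)$ is routine and identical in spirit to the corresponding arguments in the proof of Proposition~\ref{prop:propreg}.
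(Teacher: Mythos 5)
Your proof is correct and follows essentially the same route as the paper: the same choice of $c$ as the maximal nesting depth of $\succM$ and $\preM$, and the same decreasing-radius induction (the paper phrases it as $\type{\alpha_n(\bar i)(S)}{\neighborhood{c(l-n)}(\bar i)}=\type{\alpha_n(\bar j)(S)}{\neighborhood{c(l-n)}(\bar j)}$ by induction on the prefix length $n$, which is your strengthening with $r=l-n-$shift). Your explicit treatment of the boundary cases $\succM(n)=n$, $\preM(\minM)=\minM$ is a welcome addition the paper only touches on in the subsequent remark.
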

\proof The $c$ of the lemma is the maximal nesting dephth of the used
functions \succM and \preM (for example the term
$\succM(\succM(\preM(x)))$ has nesting depth 3) in $P$. Let
$\alpha_n$ be the prefix of length $n$ of the update sequence
$\alpha$. We will here prove the slightly stronger statement that,
assuming the conditions of the lemma, $\type{\alpha_n(\bar
  i)(S)}{\neighborhood{c\cdot(l-n)}(\bar i)} = \type{\alpha_n(\bar
  j)(S)}{\neighborhood{c\cdot(l-n)}(\bar j)}$. Then the lemma follows
because $\neighborhood{0}(\bar i) = (\minM,\bar i)$. The proof works
by induction on $n$ (assuming $n<l$). For $n=0$ the statement is
contained in the condition of the lemma. So assume the statement holds
for $n<l$, we show that it still holds for $n+1$. Let $\delta$ be the
update such that $\alpha_{n+1} = \alpha_n\delta$. Just as in the proof
of Lemma~\ref{lemma:SequenceWorkingOverTuples} consider any
(auxiliary) relation $R$ updated by the program $P$ on elements in
$\neighborhood{c\cdot(l-(n+1))}(\bar i)$. The evaluation of the update
formula for $R$ is dependend only on the type of
$\neighborhood{c\cdot(l-n)}(\bar i)$. This is true because one can
reach other elements in the universe only by using the
functions. Since these are nested at most $c$ times, from any element
in $\neighborhood{c\cdot(l-(n+1))}(\bar i)$ only elements in
$\neighborhood{c\cdot(l-n)}(\bar i)$ can be reached. The same holds
for the tuples corresponding to $\bar j$. As $\type{\alpha_n(\bar
  i)(S)}{\neighborhood{c\cdot(l-n)}(\bar i)} = \type{\alpha_n(\bar
  j)(S)}{\neighborhood{c\cdot(l-n)}(\bar j)}$, and $\alpha_{n+1} =
\alpha_{n+1}\delta$, it follows that $\type{\alpha_{n+1}(\bar
  i)(S)}{\neighborhood{c\cdot(l-(n+1))}(\bar i)} = \type{\alpha_{n+1}(\bar
  j)(S)}{\neighborhood{c\cdot(l-(n+1))}(\bar j)}$.  \qed 

\proof[Proof
of Proposition~\ref{proposition:AltReachDynSucc}.]  The proof now
follows the lines of the proof of
Theorem~\ref{prop:altreachDynPropRelRel}. Assume that there exists a
$\dynprop(\SetSucc, \Rel)$ program $P$ for $\dynamic{\altreach}$
making use of $a$ auxiliary relations of maximal arity $b$. We will
again consider (in a graph $G_m$) the tuples $V_r$ and $V_{r'}$, but now 
their corresponding $(m+1)c$-neighborhoods $\neighborhood{(m+1)c}(V_r)$ and
$\neighborhood{(m+1)c}(V_{r'})$, where $c$ is the constant only depending
on $P$ of Lemma~\ref{lemma:SequenceWorkingOverTuplesSucc}. Using Lemma~\ref{lemma:ktype} we know that
in any state $S$ of $P$ the number of types of these neighborhoods is
bounded by $2^{(a+6)\cdot((2(m+1)c+1)(2m+3+1))^b}$. Hence we can again find
a number $m$ big enough such that there are distinct $r,r' \in R$ in $G_m$ such
that $\type{S}{\neighborhood{(m+1)c}(V_r)}$ and
$\type{S}{\neighborhood{(m+1)c}(V_{r'})}$. Using the same argument as above
and Lemma~\ref{lemma:SequenceWorkingOverTuplesSucc} we then get the
desired contradiction.  \qed

\begin{remark}
The proofs of the foregoing lemma and proposition depends heavily on the fact that the neighborhood of a tuple increases with each update operation only by a constant additional term. This is because the two used functions $\preM$ and $\succM$ are complementary in that $\preM(\succM)=\succM(\preM)$. So the order of their usage is not important. If one allows two independend functions (for example two different successor-functions on the universe) the size of the neighborhood possibly doubles after each operation so the proof of the proposition (based on a counting argument) would not work.
\end{remark}

In fact from the proof of the above proposition one can conclude an even stronger
statement. The graphs used in the proof are very restricted in the sense that the
length of the longest path is bounded by a constant. Let
\bdaltreach{d} be the alternating reachability problem on graphs of
depth at most $d$. It is easily seen that \bdaltreach{d} is
expressible by a FO-formula, so we get the following
\begin{theorem}\label{theo:fo-neq-dynprop}
  There exists a problem $F\in\text{FO}$ such that $\dynamic{F} \notin \dynprop(\SetSucc, \Rel)$.\qed
\end{theorem}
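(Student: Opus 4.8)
The plan is to read Theorem~\ref{theo:fo-neq-dynprop} off the proof of Proposition~\ref{proposition:AltReachDynSucc}, after two easy observations: that the hard instances constructed there have \emph{constant} depth, and that constant‑depth alternating reachability is first‑order definable. So the theorem is essentially a corollary, and I would take $F := \bdaltreach{d}$ for a suitable fixed~$d$.

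\textbf{Step 1 (the hard instances have bounded depth).} I would inspect the family $\{G_m\}$ from the proof of Proposition~\ref{proposition:AltReachDynSucc}. The only edges present in $G_m$ are of the form $(r_J,q)$ with $q\in J$ (so $R\to Q$) and $(q_I,p)$ with $p\in I$ (so $Q\to P$), and the update sequences considered there only ever insert edges of the form $(s,r)$ with $r\in R$ and $(p,t)$ with $p\in P$. Hence in every graph arising along those sequences the longest directed path is $s\to r\to q\to p\to t$, of length~$4$. So, putting $d:=4$, for all these instances the question ``$t\in\reachable(s)$'' coincides with $\bdaltreach{d}$, and the whole argument of Proposition~\ref{proposition:AltReachDynSucc} — bounding the number of $\neighborhood{(m+1)c}$‑types via Lemma~\ref{lemma:ktype}, choosing $m$ with $|R|$ larger than that bound, picking distinct $r,r'\in R$ with $\type{S}{\neighborhood{(m+1)c}(V_r)}=\type{S}{\neighborhood{(m+1)c}(V_{r'})}$, and deriving the contradiction through Lemma~\ref{lemma:SequenceWorkingOverTuplesSucc} — goes through verbatim with $\altreach$ replaced by $\bdaltreach{d}$. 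This yields $\dynamic{\bdaltreach{d}}\notin\dynprop(\SetSucc,\Rel)$.

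\textbf{Step 2 ($\bdaltreach{d}\in\text{FO}$).} I would simply unfold the inductive definition of $\reachable(s)$ to depth~$d$. Define first‑order formulas $\psi_0(x),\ldots,\psi_d(x)$ by
$$\psi_0(x)\;\equiv\;(x=s),\qquad \psi_{i+1}(x)\;\equiv\;\psi_i(x)\;\vee\;\bigl(A(x)\wedge\exists y\,(E(x,y)\wedge\psi_i(y))\bigr)\;\vee\;\bigl(B(x)\wedge\forall y\,(E(x,y)\rightarrow\psi_i(y))\bigr).$$
Each $\psi_i$ is genuinely first‑order since $d$ is fixed, and because the least‑fixpoint iteration defining $\reachable(s)$ adds a node only along its edges, any membership certificate follows graph edges and hence has height at most the longest path; so on graphs whose longest path has length $\le d$ a node lies in $\reachable(s)$ iff it lies in the $d$‑th iterate, i.e.\ $t\in\reachable(s)$ iff $\psi_d(t)$. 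Combined with the fact that ``$G$ has depth $\le d$'' is itself first‑order (there is no path of length $d+1$), the problem $\bdaltreach{d}$ is defined by a first‑order sentence.

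Putting the two steps together with $F:=\bdaltreach{4}$ (any fixed $d\ge 4$ works) proves the theorem. The only point needing (minimal) care is the depth bound of Step~1: one must confirm that none of the update sequences invoked in Proposition~\ref{proposition:AltReachDynSucc} ever produces a longer path — but since those updates add only edges leaving $s$ or entering $t$, and $G_m$ itself contains only $R\to Q$ and $Q\to P$ edges, the depth stays $4$ throughout, so there is no real obstacle.
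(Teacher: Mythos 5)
Your proposal is correct and is essentially the paper's own argument: the paper derives Theorem~\ref{theo:fo-neq-dynprop} from Proposition~\ref{proposition:AltReachDynSucc} by the same two observations, namely that the graphs $G_m$ together with the considered updates have constant depth (so the lower bound applies verbatim to $\bdaltreach{d}$) and that $\bdaltreach{d}$ is FO-definable by unfolding the fixpoint $d$ times. You merely spell out the FO formula and the depth bound that the paper leaves as ``easily seen''.
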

On the other hand the reachability problem on acyclic deterministic directed graphs can be maintained in $\dynprop$ (Hesse \cite{Hesse03a}) but cannot be expressed via an FO-formula (as can be easily seen by standard EF-games arguments). So these classes are incomparable.
\smallskip

\noindent
{\it Using functions to maintain EFO.} Next we exhibit a
class of properties which can be maintained in $\dynqf$ with
precomputation. An \emph{existential first-order (EFO)} sentence is
a first-order sentence of the form $\exists x_1, \ldots x_k
\phi(\bar{x})$, where $\phi(\bar{x})$ is a quantifier free
formula. 
\begin{theorem}\label{theorem:EFOdynprop}
  For any EFO-definable problem $F$, $\dynamic{F} \in
  \dynprop(\Fun,\Fun)$
\end{theorem}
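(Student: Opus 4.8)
The plan is to reduce the question to maintaining a bounded family of bounded‑width integer counters — the fundamental one counting the satisfying assignments of $\phi$ — using precomputed arithmetic on the universe, and to accept exactly when that counter is non‑zero. Write the defining sentence as $\exists x_1\cdots x_k\,\phi(\bar x)$ with $\phi$ quantifier free over atoms $\ell_1,\dots,\ell_N$, and let $U=\{1,\dots,n\}$ be the universe. Since first‑order quantification allows witnesses to coincide, the current structure $A$ satisfies the sentence iff the number $c_\phi:=|\{\bar a\in U^k: A\models\phi(\bar a)\}|$ is at least $1$; so it suffices to maintain a base‑$n$ digit representation of $c_\phi$ and to let $\phi^{\accept}_{\op}$ be the quantifier‑free test ``some digit of $c_\phi$ is non‑zero''. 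A precomputed linear order together with the precomputed functions $\succM,\preM,\minM,\maxM$ lets us store a number in $[0,n^{k+1})$ as a $(k{+}1)$‑tuple of ``digits'' from $U$ and add, increment or decrement boundedly many such numbers by update terms built from $\ite$, $\succM$ and $\preM$ — exactly as with the unary counters of Proposition~\ref{prop:non-cf-func}, but with a constant number of digits; the carry chains have constant length $k+1$, so these are legal update terms. This is precisely why we need precomputed arithmetic, landing us in $\dynprop(\Fun,\Fun)$ rather than $\dynqf$, and why the on‑the‑fly successor construction of Propositions~\ref{prop:non-cf-func} and \ref{theo:dyckn} does not suffice: the counts range over all of $U^k$, including universe elements never touched by an update.

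To maintain $c_\phi$ we close under all the counters that a single update forces us to recompute. Let $\mathcal F$ be the (finite) set of quantifier‑free formulas obtained from $\phi$ by renaming some variables to parameter variables $\bar z$, replacing some atoms by truth constants, and adding some inequality atoms among variables and parameters; every formula of $\mathcal F$ has at most $k$ parameters. For $\chi(\bar z;\bar y)\in\mathcal F$ we maintain the number $c_\chi(\bar z):=|\{\bar b:A\models\chi(\bar z;\bar b)\}|\le n^{k}$, stored by $k{+}1$ auxiliary functions $\bar z\mapsto U$ holding its digits, so that $c_\phi$ is the parameterless case. The key observation is that when an update toggles the truth value of $R_q(\bar i')$, the value $\chi(\bar z;\bar b)$ can change only for those $\bar b$ which, together with $\bar z$, map some $R_q$‑atom of $\chi$ to $\bar i'$; splitting such $\bar b$ according to which $R_q$‑atoms of $\chi$ are mapped to $\bar i'$ and which are not, one checks that on each piece $\chi$ before and after the update coincides with two members of $\mathcal F$ — namely $\chi$ with the variables of the mapped atoms pinned to components of $\bar i'$, those atoms fixed, and the inequality constraints from the unmapped atoms added. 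Hence the change of $c_\chi(\bar z)$ is a fixed signed sum of boundedly many values $c_{\chi'}(\bar z,\bar i')$ with $\chi'\in\mathcal F$ strictly smaller than $\chi$ in the well‑founded order ``number of free variables, then number of non‑fixed atoms'' (inconsistent pinnings contribute $0$, detected by $\ite$ over equalities between components of $\bar z$ and $\bar i'$). Recomputing the finitely many counters from smallest to largest inside one update step — legitimate, since the new value of one auxiliary function may refer to the already‑recomputed values of others — makes all update terms quantifier free. Insertions and deletions are symmetric (they toggle $R_q(\bar i')$ in opposite directions), and since the current structure is part of the state we can guard the increment against idempotent updates; no further ingredients are needed, so the whole program lies in $\dynprop(\Fun,\Fun)$.

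The main obstacle is making this last step precise: establishing the inclusion--exclusion identity for the increment of each $c_\chi$, verifying that every $\chi'$ occurring on its right‑hand side is strictly smaller than $\chi$ so that the simultaneous recomputation within a step is acyclic, and confirming that the constant‑size integer arithmetic on the digit tuples is genuinely expressible by update terms over $\succM,\preM,\minM$ and $\ite$. None of this is conceptually deep, but the case analysis — over the position of $\bar i'$ relative to the parameters $\bar z$, and over the subsets of $R_q$‑atoms of $\chi$ — is somewhat lengthy, and one must be careful that the set $\mathcal F$ is closed under exactly the operations produced by the update rule.
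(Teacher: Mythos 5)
Your overall route is sound and is a genuine variant of the paper's argument. The paper first rewrites $\exists\bar x\,\phi$ as a disjunction of statements ``some \emph{disjoint} $\ell$-tuple realizes the disjoint type $\tau$'' and maintains, for each disjoint type $\tau$ and each set $I$ of pinned positions, a counter $f^I_\tau(\bar x)$ of disjoint tuples of that type; disjointness makes the assignment of the updated tuple $\bar y$ to positions of a counted tuple unique, so the update formula is a clean signed sum over proper indexings with no inclusion--exclusion. You instead count \emph{all} satisfying assignments of $\phi$ and close the family of counters under pinning variables to parameters, fixing atoms to truth constants, and adding (in)equality guards, recovering uniqueness of the decomposition by splitting over which $R_q$-atoms are mapped to the updated tuple. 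Both are correct; the paper's disjoint-type normalization buys a simpler update formula at the price of a preliminary type decomposition, while yours avoids that decomposition at the price of a larger closed family $\mathcal F$ and the case analysis you flag (note that ``atom $R_q(\bar t)$ does \emph{not} map to $\bar i'$'' is a disjunction of inequalities, so the pieces need a further split to stay inside a conjunctively defined $\mathcal F$). Your well-founded order on (free variables, non-fixed atoms) and the inlining of smaller counters' update terms into larger ones are both legitimate.

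There is one concrete gap: the arithmetic. You claim the digit tuples can be added, incremented and decremented ``by update terms built from $\ite$, $\succM$ and $\preM$'' with constant-length carry chains. That works for incrementing by $1$, but your update rule adds to $c_\chi(\bar z)$ a signed sum of \emph{other counters} $c_{\chi'}(\bar z,\bar i')$, each a general base-$n$ number up to $n^k$; computing $a+b \bmod n$ for arbitrary digits $a,b$ is not expressible by a constant-size term over successor and predecessor. The fix is exactly what the paper does: precompute binary functions $\plusfun(x,y)=x+y \bmod n$ and $\minfun(x,y)=x-y \bmod n$ together with carry/borrow indicator relations $R_\plusfun$, $R_\minfun$, which is unproblematic in $\dynprop(\Fun,\Fun)$. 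With that substitution your proof goes through.
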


\proof 
  Let $\psi = \exists x_1, \ldots x_k \phi(\bar{x})$ be an
  EFO-sentence over vocabulary $\gamma$. We show that there exists a
  $\dynprop(\Fun,\Fun)$ program $P$ which maintains whether $A \models
  \psi$, for any $\gamma$-structure $A$.

  We first introduce some notation. A tuple $\bar i=(i_1,\ldots,i_l)$
  is \emph{disjoint} if $i_j \neq i_k$, for all $j,k \in [1,\ell]$,
  with $j \neq k$. A \emph{disjoint type} is the type of a disjoint
  tuple. For a type $\tau$, let $\phi_\tau$ be an EFO sentence which
  is satisfied in a structure $A$ iff $A$ contains a tuple $\bar{x}$
  such that $\type{A}{\bar{x}} = \tau$.

  Now, it is well known and easy to see that for any EFO sentence
  $\phi = \exists x_1, \ldots x_k \psi(\bar{x})$ there exists a set
  $\theta_\psi$ of disjoint $\ell$-types, with $\ell$ ranging from 1
  to $k$, such that $\psi$ is equivalent to $\bigvee_{\tau \in \theta}
  \phi_{\tau}$. Notice that if we would not require the types to be
  disjoint, we would only need to consider $k$-types, and not
  $\ell$-types, for all $\ell \leq k$. However, the latter
  restriction, and corresponding extension, will prove technically
  more convenient.

  Using the information that $A \models \psi$ is completely determined
  by the set of types $\theta_\psi$ realized in $A$, we now present
  our dynamic algorithm. It will maintain the following functions. For
  every disjoint $\ell$-type $\tau$, with $\ell \leq k$, and set $I =
  \{i_1,\ldots,i_{|I|}\} \subseteq \{1,\ldots,\ell\}$, let
$$f^I_\tau(x_1,\ldots,x_{|I|}) = |\{(a_1,\ldots,a_\ell) \mid
\type{A}{\bar{a}} = \tau \wedge \forall j \in [1,|I|]: a_{i_j} =
x_j\}|$$

Here we write $I = \{i_1,\ldots,i_{|I|}\}$ such that $i_j < i_{j+1}$,
for all $j \in [1,|I|-1]$. Then, for $I = \emptyset$,
$f^\emptyset_\tau$ defines the number of disjoint tuples in $A$ which
have type $\tau$. When $I = \{i_1, \ldots, i_{|I|}\} \neq \emptyset$,
and given $\bar{x} = (x_1,\ldots,x_{|I|})$, $f^I_\tau(\bar{x})$
defines the number of tuples in $A$ which (1) have type $\tau$ and (2)
have at positions $i_j$ exactly element $x_j$, for all $j \in
[1,|I|]$.

Notice that the numbers defined by the above functions can become
bigger than $n$, the number of universe elements, but are always
smaller than $n^k$. Hence, every such number can be encoded as a
number with $k$ digits in base $n$, which is exactly how our functions
will encode these numbers. Thereto, for every function $f^I_\tau$
mentioned above, there actually are $k$ functions $f^{I,1}_\tau, \ldots,
f^{I,k}_\tau$ each defining one digit of the desired number defined by
$f^I_\tau$. For clarity, we use the functions $f^I_\tau$ instead of
the actual ones encoding their digits.

As we are in the setting where precomputations is allowed, we can
assume that the functions are properly initialized. For any $l \in
[1,k]$, let $\tau_\neg$ be
the unique $l$-type containing only negated atoms, i.e. atoms of the form
$\neg R(\bar i)$. Then, for all $l$-types $\tau \neq \tau_\neg$, set
$I$, and tuple $\bar x$, initially
$$f^I_\tau(\bar x) = 0$$
and for $\bar x = (x_1,\ldots,x_{|I|})$ it holds that
$$f^I_{\tau_\neg}(\bar x) = 0 \mbox{ if $x_i = x_j$, for some $i \neq j$}$$
and
$$f^I_{\tau_\neg}(\bar x) = (n - |I|)\cdot(n - (|I|+1))\cdot \cdots \cdot
(n - l) \mbox{, otherwise}$$

We will now show how to incrementally maintain these
functions. Thereto, we first give the precomputed functions and
relations which will be used for the updates. For simplicity, we
assume the universe of size $n$ consists of the elements
$\{0,\ldots,n-1\}$. Then, there is a constant (0-ary function) $\minM$
denoting $0$, functions $\plusfun$
and $\minfun$ such that $\plusfun(x,y) = x+y \mbox{ }(\text{mod } n)$ and
$\minfun(x,y) = x-y \mbox{ }(\text{mod } n)$, and accompanying relations
$R_\plusfun$ and $R_\minfun$ such that $R_\plusfun(x,y)$ holds iff $x+y \geq
n$, and $R_\minfun(x,y)$ holds iff $x-y < 0$. That is, the functions
$\plusfun$ and $\minfun$ are defined on all parameters and count
modulo $n$. The accompanying relations $R_\plusfun$ and
$R_\minfun$ contain the additional information saying whether the
addition or subtraction indeed went above $n-1$ or below $0$. These
functions allow to define addition and subtraction on the $k$-digit
base-$n$ numbers used in the functions. Therefore, we simply perform
addition and subtraction on these numbers in the sequel.

Second, we introduce some additional notation. As before, we write
$\bar{x}$ for a tuple of elements, but abuse notation and also
denote the set of elements in $\bar{x}$ by $\bar{x}$, and,
correspondingly, apply set-theoretic operations on them, e.g. $\bar{x}
\cup \bar{y}$.

Further, for an integer $\ell$, set $I \subseteq \{1, \ldots, \ell\}$,
and tuples $\bar{x} = (x_1,\ldots,x_{|I|})$ and $\bar{y}$, we let an
\emph{indexing} for $\ell,I,\bar{x},\bar{y}$ be a function $\ind:
\bar{x} \cup \bar{y} \rightarrow \{1,\ldots,\ell\}$ such that for all
$j \in [1,|I|]$, $\ind(x_{j}) = i_j$. The indexing $\ind$ is
\emph{proper} iff for all $z,z' \in \bar{x} \cup \bar{y}$, $\ind(z) =
\ind(z') \iff z = z'$. Notice that while the fact whether a function
$\ind$ is an indexing only depends on $I$ and $\ell$, the fact whether
it is proper depends on the actual values of $\bar{x}$ and
$\bar{y}$. However, this can easily be tested by the following
formula:
\[
\phi_\ind(\bar{y},\bar{x}) = \bigwedge_{\substack{z,z' \in \bar{x} \cup
    \bar{y}\\\ind(z) = \ind(z')}} z = z' \wedge \bigwedge_{\substack{z,z' \in \bar{x} \cup
    \bar{y}\\\ind(z) \neq \ind(z')}} z \neq z'
\]

Given $\bar{x}$ and $\bar{y}$ and a proper indexing $\ind$, we write
$(\bar{x},\bar{y})_\ind$ for the sequence $(u_1,\ldots,u_m)$, for some
$m$, such that (1) $\bar{u}$ contains every element in $\bar{x} \cup
\bar{y}$ exactly once and (2) $\ind(u_i) < \ind(u_{i+1})$, for all $i
\in [1,m-1]$. Hence, $\bar{u}$ is obtained from $\bar{x} \cup \bar{y}$
by eliminating elements which are equal (and thus have the same
index), and ordering the elements by their index. Further, we write
$\ind(\bar{y})$ to denote the tuple
$(\ind(y_1),\ldots,\ind(y_m))$. Finally, for a type $\tau$, and
$R(\bar{i}) \notin \tau$, let $\tau + R(\bar{i})$ denote the type
obtained from $\tau$ by removing $\neg R(\bar{i})$ and adding
$R(\bar{i})$. When, $\neg R(\bar{i}) \notin \tau$, $\tau + \neg
R(\bar{i})$ is defined similarly by removing $R(\bar{i})$ and adding
$\neg R(\bar{i})$.

We are now ready to give the update functions. For clarity we write
the $\ite(\phi,t_1,t_2)$ construct as ``if $\phi$ then $t_1$ else
$t_2$''. Then, for relation symbol $R$, $\ell$-type $\tau$, with $\ell \leq
k$, and $I \subseteq \{1,\ldots,\ell\}$, let
\begin{multline*}
  \phi_{\ins_R}^{f^I_\tau}(\bar{y},\bar{x}) \equiv f_\tau^I(\bar{x})
  \\
  + \sum_{\substack{\ind \mbox{ for }
      \ell,I,\bar{x},\bar{y}\\R(ind(\bar{y})) \in \tau}} \mbox{ if }
  \phi_\ind(\bar{y};\bar{x}) \mbox{ then } f^{I \cup
    \ind(\bar{y})}_{\tau
    + {\neg R(ind(\bar{y}))}}(\bar{x},\bar{y})_\ind \mbox{ else } 0 \\
  - \sum_{\substack{\ind \mbox{ for } \ell,I,\bar{x},\bar{y}\\\neg
      R(ind(\bar{y})) \in \tau}} \mbox{ if } \phi_\ind(\bar{y};\bar{x})
  \mbox{ then } f^{I \cup \ind(\bar{y})}_{\tau}(\bar{x},\bar{y})_\ind
  \mbox{ else } 0
\end{multline*}
and, similarly, 
\begin{multline*}
  \phi_{\del_R}^{f^I_\tau}(\bar{y},\bar{x}) \equiv f_\tau^I(\bar{x})
  \\
  + \sum_{\substack{\ind \mbox{ for }
      \ell,I,\bar{x},\bar{y}\\\neg R(ind(\bar{y})) \in \tau}} \mbox{ if }
  \phi_\ind(\bar{y};\bar{x}) \mbox{ then } f^{I \cup
    \ind(\bar{y})}_{\tau
    + {R(ind(\bar{y}))}}(\bar{x},\bar{y})_\ind \mbox{ else } 0 \\
  - \sum_{\substack{\ind \mbox{ for } \ell,I,\bar{x},\bar{y}\\
      R(ind(\bar{y})) \in \tau}} \mbox{ if } \phi_\ind(\bar{y};\bar{x})
  \mbox{ then } f^{I \cup \ind(\bar{y})}_{\tau}(\bar{x},\bar{y})_\ind
  \mbox{ else } 0.
\end{multline*}

Intuitively, both formulas compute the number of tuples with the given
type $\tau$ in the same manner: take the number of tuples which used
to have type $\tau$, add those which obtained type $\tau$, and remove the
ones which had type $\tau$, but do not anymore.

We briefly explain the correctness of these formulas by arguing that
after an update $\ins_R(\bar y)$ for a tuple $\bar x$ the number of
tuples which did not have type $\tau$ but do after the update is
indeed equal to the number computed on the second line of the update
formula $\phi_{\ins_R}^{f^I_\tau}(\bar{y},\bar{x})$.

Thereto, let $\bar a = (a_1,\ldots,a_l)$ be a disjoint tuple
consistent with $\bar x$ and $I$, i.e. for all $j \in [1,|I|]$, $x_j =
a_{i_j}$. We denote the structure obtained from $A$ after the update
$\ins_R(\bar y)$ by $A'$. Now, suppose $\type{A}{\bar a} \neq \tau$,
but $\type{A'}{\bar a} = \tau$. This can only hold if $\bar y
\subseteq \bar a$ and thereby the insertion of $R(\bar y)$ has changed
the type of $\bar a$ in $A$. More precisely, if we define $\bar k =
k_1,\ldots,k_m$ such that for all $j \in [1,m]$, $y_j = a_{k_j}$, then
$\type{A}{\bar a} = \tau + \neg R(\bar k)$ must hold. Notice also that
$\bar k$ is uniquely defined due to the fact that $\bar a$ is
disjoint. Now $\bar k$, in turn, defines a proper indexing $\ind$ on
$\bar x$ and $\bar y$ as follows: for all $j \in [1,|I|]$, $\ind(x_j) = i_j$
(by definition) and for all $j \in [1,m]$, $\ind(y_j) = k_j$. In this
manner we can thus associate a unique proper indexing to all tuples $\bar a$
which did not have type $\tau$, but do now. Then, for any indexing $\ind$, the expression $f^{I \cup
  \ind(\bar{y})}_{\tau + {\neg
    R(ind(\bar{y}))}}(\bar{x},\bar{y})_\ind$ defines exactly all such
tuples with which $\ind$ is associated. By iterating over all proper
indexings we hence count exactly all desired tuples.

Finally, for the acceptance relation we simply have to check whether
there is a tuple in the new structure which has a type contained in
$\theta_\psi$: 
\[
\phi_{\ins_R}^{\accept}(\bar{y}) \equiv \bigvee_{\tau \in \theta_{\phi}}
\phi_{\ins_R}^{f^\emptyset_\tau}(\bar{y}) \neq 0 \qquad\text{ and }\qquad
\phi_{\del_R}^{\accept}(\bar{y}) \equiv \bigvee_{\tau \in \theta_{\phi}}
\phi_{\del_R}^{f^\emptyset_\tau}(\bar{y}) \neq 0.
\] 
\qed

\section{Conclusion}\label{sec:conclusion}

We have studied the dynamic complexity of formal languages and, by
characterizing the languages maintainable in \dynprop as exactly the regular
languages, obtained the first lower bounds for \dynprop. This yields a
separation of \dynprop from \dynQF and \dynFO. We proved that every context-free language can be maintained in \dynFO and investigated the power of functions for dynamic programs in maintaining specific context-free and non context-free languages. 

As a modest extension we also proved a lower bound for \dynprop with built-in successor
functions. Hence, we are now one step closer to proving lower bounds
for \dynFO, but, of course, a number of questions arise:
\begin{itemize}
\item Can the results on the Dyck languages be extended to show that
  an entire subclass of the context-free languages, such as the
  deterministic or unambiguous context-free languages, can be
  maintained in \dynQF?
\item We have seen that $D_1 \in \dynprop(\SetSucc,\Rel)$. Can it be
  shown that $D_2 \notin \dynprop(\SetSucc,\Rel)$?
\item Can some of the lower bound techniques for \dynprop be extended to
  \dynqf, in order to separate \dynqf from \dynFO, or at least from
  \dynp? Is there a context-free language that is not maintainable in \dynqf?
\end{itemize}

{\small
\bibliographystyle{abbrv}
%\bibliography{master}

\begin{thebibliography}{10}

\bibitem{DBLP:journals/tods/BalminPV04}
A.~Balmin, Y.~Papakonstantinou, and V.~Vianu.
\newblock Incremental validation of \mbox{XML} documents.
\newblock {\em ACM Trans. Database Syst.}, 29(4):710--751, 2004.

\bibitem{BarbosaMLMA04}
D.~Barbosa, A.~O. Mendelzon, L.~Libkin, L.~Mignet, and M.~Arenas.
\newblock Efficient incremental validation of \mbox{XML} documents.
\newblock In {\em ICDE}, pages 671--682, 2004.

\bibitem{BjoerklundGMM09}
H.~Bj\"orklund, W.~Gelade, M.~Marquardt, and W.~Martens.
\newblock Incremental \mbox{XPath} evaluation.
\newblock To appear in \textit{ICDT}, 2009.

\bibitem{DongLW93}
G.~Dong, L.~Libkin, and L.~Wong.
\newblock Incremental recomputation in local languages.
\newblock {\em Inf. Comput.}, 181(2):88--98, 2003.

\bibitem{DongS97}
G.~Dong and J.~Su.
\newblock Deterministic {FOIES} are strictly weaker.
\newblock {\em Annals of Mathematics and Artificial Intelligence},
  19(1-2):127--146, 1997.

\bibitem{DongS95c}
G.~Dong and J.~Su.
\newblock Arity bounds in first-order incremental evaluation and definition of
  polynomial time database queries.
\newblock {\em J. Comput. Syst. Sci.}, 57(3):289--308, 1998.

\bibitem{DongST95}
G.~Dong, J.~Su, and R.~W. Topor.
\newblock Nonrecursive incremental evaluation of datalog queries.
\newblock {\em Annals of Mathematics and Artificial Intelligence},
  14(2-4):187--223, 1995.

\bibitem{Etessami98}
K.~Etessami.
\newblock Dynamic tree isomorphism via first-order updates to a relational
  database.
\newblock In {\em Proceedings of PODS '98}, pages 235--243, 1998.

\bibitem{FrandsenHMRS95}
G.~S. Frandsen, T.~Husfeldt, P.~B. Miltersen, T.~Rauhe, and S.~Skyum.
\newblock Dynamic algorithms for the \mbox{Dyck} languages.
\newblock In {\em WADS}, pages 98--108, 1995.

\bibitem{FrandsenMS97}
G.~S. Frandsen, P.~B. Miltersen, and S.~Skyum.
\newblock Dynamic word problems.
\newblock {\em J. ACM}, 44(2):257--271, 1997.

\bibitem{GeladeMS09}
W.~Gelade, M.~Marquardt, and T.~Schwentick.
\newblock Dynamic complexity of formal languages.
\newblock To appear in {\em STACS}, 2009.

\bibitem{book:RamseyTheory}
R.~L. Graham and B.~L. Rothschild.
\newblock {\em Ramsey theory (2nd ed.)}.
\newblock Wiley-Interscience, New York, NY, USA, 1990.

\bibitem{Hesse03b}
W.~Hesse.
\newblock Conditional and unconditional separations of dynamic complexity
  classes, 2003.
\newblock Unpublished manuscript, available from
  http://people.clarkson.edu/~whesse/ (seen Dec, 9, 2008).

\bibitem{Hesse03}
W.~Hesse.
\newblock The dynamic complexity of transitive closure is in
  {DynTC$^{\mbox{0}}$}.
\newblock {\em Theor. Comput. Sci.}, 3(296):473--485, 2003.

\bibitem{Hesse03a}
W.~Hesse.
\newblock {\em Dynamic Computational Complexity}.
\newblock PhD thesis, University of Massachusetts Amherst, 2003.

\bibitem{HesseI02}
W.~Hesse and N.~Immerman.
\newblock Complete problems for dynamic complexity classes.
\newblock In {\em LICS}, pages 313--324, 2002.

\bibitem{Miltersen99cellprobe}
P.~B. Miltersen.
\newblock Cell probe complexity - a survey.
\newblock In {\em FSTTCS}, 1999.

\bibitem{MiltersenSVT94}
P.~B. Miltersen, S.~Subramanian, J.~S. Vitter, and R.~Tamassia.
\newblock Complexity models for incremental computation.
\newblock {\em Theor. Comput. Sci.}, 130(1):203--236, 1994.

\bibitem{PatnaikI97}
S.~Patnaik and N.~Immerman.
\newblock {Dyn-FO}: A parallel, dynamic complexity class.
\newblock {\em J. Comput. Syst. Sci.}, 55(2):199--209, 1997.

\bibitem{vollmerCircuitComplexity}
H.~Vollmer.
\newblock {\em Introduction to Circuit Complexity: A Uniform Approach}.
\newblock Springer-Verlag New York, Inc., Secaucus, NJ, USA, 1999.

\bibitem{WeberS05}
V.~Weber and T.~Schwentick.
\newblock Dynamic complexity theory revisited.
\newblock {\em Theory Comput. Syst.}, 40(4):355--377, 2007.

\end{thebibliography}

}

\end{document}